\newtheorem*{rep@theorem}{\rep@title}
\newcommand{\newreptheorem}[2]{%
\newenvironment{rep#1}[1]{%
 \def\rep@title{#2 \ref{##1}}%
 \begin{rep@theorem}}%
 {\end{rep@theorem}}}
\newtheorem{theorem}{Theorem}
\newtheorem{proposition}[theorem]{Proposition}
\newtheorem{corollary}[theorem]{Corollary}
\newtheorem{question}[theorem]{Question}
\newtheorem{lemma}[theorem]{Lemma}
\theoremstyle{remark}
\theoremstyle{definition}
\newtheorem{definition}[theorem]{Definition}
\newcommand{\N}{\mathbb{N}}
\definecolor{darkgreen}{rgb}{0,.5,0}
\newcounter{sideremark}
\newcommand{\new}[1]{{\color{red} #1}}
\title{How Expressive Are {Friendly School Partitions}?
    \thanks{This work is a part of a project
    (including COSP REU 2020) that has received funding
    from the European Union’s Horizon 2020 research and innovation programme
    under the Marie Skłodowska-Curie grant agreement No.~823748.
    M.S. was also supported by GA\v{C}R grant 22-19073S.
}}
\author{
Josef Mina\v{r}\'{i}k
\\ \small{Department of Applied Mathematics, Charles University,}
\and
Shay Moran
\\ \small{Department of Mathematics, Technion and Google Research,}
\and
Michael Skotnica
\\ \small{Department of Applied Mathematics, Charles University.}
\and
\\ This work is a result of a project started at the 2020 COSP REU summer school.
}
\date{}
\begin{document}
\maketitle

\begin{abstract}
A natural procedure for assigning students to classes in the beginning of the school-year 
    is to let each student write down a list of $d$ other students 
    with whom she/he wants to be in the same class (typically $d=3$). 
    The teachers then gather all the lists and try to assign 
    the students to classes in a way that each student is assigned to the same class 
    with at least one student from her/his list. 
    We refer to such partitions as {\it friendly}. 
    In realistic scenarios, the teachers may also consider other
    constraints when picking the friendly partition: 
    e.g.\ there may be a group of students whom the teachers wish to avoid 
    assigning to the same class;
    alternatively, there may be two close friends
    whom the teachers want to put together; etc.

 Inspired by such challenges, we explore questions
    concerning the {\it expressiveness} of friendly partitions. 
    For example: Does there always exist a friendly partition?
    More generally, how many friendly partitions are there?
    Can every student $u$ be separated from any other student~$v$?
    Does there exist a student $u$ that can be separated from any other student $v$?

We show that when $d\geq 3$ there always exist at least $2$ friendly partitions
    and when~$d\geq 15$ there always exists a student $u$ which can be separated
    from any other student $v$.
    The question regarding separability of each pair of students is left open,
    but we give a positive answer under the additional assumption that each student appears in at most roughly $\exp(d)$ lists.
    We further suggest several open questions and present some preliminary findings towards resolving them.
\end{abstract}

\section{Introduction}

In many schools the following procedure is used to assign students to classes:
    each student~$u$ writes down a list $L(u)$ of $d$ other students 
    with whom he or she wants to be in the same class. 
    The goal is to find a partition of the students to classes so that 
    each student is in the same class with at least $r\leq d$ students from $L(u)$.
    In realistic cases, $r=1$ and $d$ is a small constant, say $d=3$. 
    Of course, there may also be various other constraints the partition needs to satisfy 
    such as that the number of classes needs to be some parameter $k$ and 
    that all $k$ classes need to have roughly the same size,
    or constraints regarding certain groups of students
    which should/should not be put together, etc.

One can naturally model this problem in the language of graph-theory:
    define the {\it preferences graph} $G=(V,E)$ to be a directed graph
    where $V$ is the set of students and $u\to v \in E$ if and only if $v\in L(u)$. 
    So, the out-degree of every vertex in $G$ is $d$.
    The goal is then to find a partition of $V$
    such that the subdigraph induced by each part has minimum out-degree $\geq r$. 
    We will refer to such partitions as $r$-friendly partitions.
    Note that the trivial partition
    where all vertices are in the same part is $d$-friendly.

\subsection{Existence}
Perhaps the most basic question is whether non-trivial $r$-friendly partitions exist.
    Consider the case of $r=d=1$, and let $G=C_n$ be a directed cycle on $n$ vertices.
    See Figure~\ref{fig:C_6}.
\begin{figure}
    \centering
    \includegraphics[scale=.8]{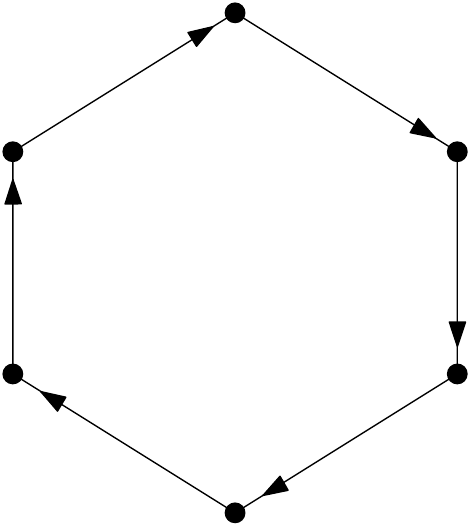}
    \caption{\small A directed cycle on 6 vertices. In any friendly partition,
        each vertex has to be in the same part with its unique out-neighbor.
        Hence, by transitivity, only the trivial partition is friendly.}
        \label{fig:C_6}
\end{figure}

Clearly, in this case, the only $r$-friendly partition is the trivial one.
    Thus, in order to guarantee the existence of non-trivial $1$-friendly partitions,
    $d$ has to be larger than $1$.
    Next, assume $d=2, r=1$.
    Also here there are digraphs for which only the trivial partition is $r$-friendly.
    A simple example for such a digraph is the complete directed graph on $3$ vertices.
    See Figure~\ref{fig:K_3}.
\begin{figure}
    \centering
    \includegraphics[scale=.8]{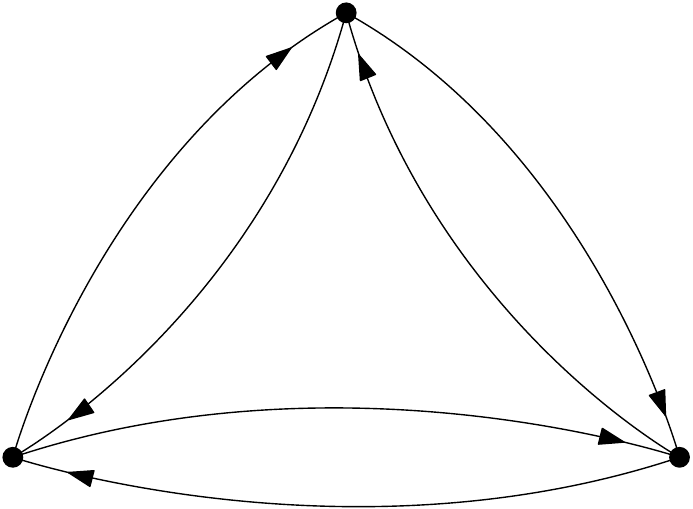}
    \caption{\small The complete directed graph on 3 vertices.
        Any non-trivial partition will leave one of the vertices alone in its part,
        and hence cannot be friendly.}
        \label{fig:K_3}
\end{figure}

Alon showed in~\cite{Alon20} that for $d\geq 3, r=1$
    there always exist a non-trivial friendly partition.
    The argument hinges on the following classical result due to Thomassen:

\begin{theorem}[\cite{Thomassen83}]\label{thm:thomassen}
    Each directed graph with minimum out-degree 3 contains two disjoint cycles.
\end{theorem}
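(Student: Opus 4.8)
The plan is to argue by induction on $|V(G)|$, equivalently by examining a counterexample with the fewest vertices. First I would normalize the graph: if some vertex has out-degree strictly larger than $3$, delete one of its out-edges. This preserves the hypothesis $\delta^+ \ge 3$ and only removes cycles, so a counterexample stays a counterexample; hence I may assume every out-degree is \emph{exactly} $3$. Since $\delta^+\ge 1$, following out-edges from any vertex must eventually revisit a vertex, so $G$ contains a directed cycle. Let $C = c_0 c_1 \cdots c_{g-1} c_0$ be a \emph{shortest} one. The first real step is to observe that $C$ is chordless: any edge $c_i \to c_j$ with $j \ne i+1 \pmod g$ would, together with the arc of $C$ from $c_j$ back to $c_i$, close a directed cycle of length strictly less than $g$, contradicting the minimality of $C$. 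Consequently each $c_i$ has exactly one out-neighbor inside $V(C)$ (namely $c_{i+1}$), so each of the $g$ vertices of $C$ sends its other two out-edges into $V(G)\setminus V(C)$.

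The crux is to produce a second cycle disjoint from $C$. Set $H = G - V(C)$. If $H$ contains any directed cycle we are finished, since that cycle together with $C$ gives two vertex-disjoint cycles, contradicting that $G$ is a counterexample. So the decisive case is when $H$ is acyclic. Then $H$ admits a topological ordering, and its sink $t$ has out-degree $0$ inside $H$, meaning all three of its out-edges land in $V(C)$, say $t \to c_p, c_q, c_r$ with $p,q,r$ distinct. Combined with the fact that every $c_i$ pushes two edges out of $C$ into $H$ (so there are $2g$ edges from $C$ into $H$ to work with), this gives a rich attachment structure between $C$ and the acyclic part $H$ that I would try to exploit to reroute a cycle.

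The main obstacle is passing to a genuinely smaller structure while preserving minimum out-degree $3$. The naive move of contracting $C$ to a single vertex fails, because any vertex of $H$ that sends two or three of its out-edges into $C$ drops below out-degree $3$ after contraction. To get around this I would use the shortest-cycle geometry: a vertex of $H$ (or the sink $t$) attaching to the chordless cycle $C$ via several arcs, together with the two off-cycle out-edges guaranteed at each $c_i$, should let me splice a directed walk that either closes up a cycle strictly shorter than $g$ (contradicting the choice of $C$) or closes up entirely within $H$ (contradicting the acyclicity of $H$). The delicate part is the case analysis on \emph{how} an outside vertex hooks into $C$ and on the cyclic placement of $c_p,c_q,c_r$, ensuring in each configuration that the rerouted cycle is provably shorter or provably avoids $V(C)$. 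This bookkeeping—verifying that some reroute always succeeds—is where I expect the real difficulty to lie, and it is the step that forces the threshold $3$ rather than $2$.
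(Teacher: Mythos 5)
Your opening moves are sound (normalizing all out-degrees to exactly $3$, taking a shortest cycle $C$, noting it is chordless, and disposing of the case where $H=G-V(C)$ contains a cycle), but the decisive case is left as a plan rather than a proof, and the dichotomy you hope to establish there --- reroute to obtain either a cycle shorter than $g$ or a cycle inside $H$ --- is simply not a consequence of the hypotheses available at that point. Concretely, take $V(G)=\{c_0,c_1,c_2\}\cup\{x_1,x_2,x_3\}\cup\{y_1,y_2,y_3\}$ with edges $c_i\to c_{i+1}$ (indices mod $3$), $c_i\to x_1$ and $c_i\to x_2$ for each $i$, $x_a\to y_b$ for all $a,b$, and $y_b\to c_i$ for all $b,i$. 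Every out-degree is exactly $3$; there are no $2$-cycles, so $C=c_0c_1c_2c_0$ is a shortest, chordless cycle; each $c_i$ sends two edges into $H=\{x_1,x_2,x_3,y_1,y_2,y_3\}$; $H$ is acyclic; and every sink of $H$ sends its three out-edges to three distinct vertices of $C$ --- exactly your configuration. Yet this digraph has no cycle of length less than $3$ and no cycle inside $H$: every cycle other than $C$ has the form $c_i\to x_a\to y_b\to c_j\to\cdots\to c_i$ and meets $V(C)$. (The two disjoint cycles promised by the theorem do exist, e.g.\ $c_0\to x_1\to y_1\to c_0$ and $c_1\to x_2\to y_2\to c_1$, but \emph{both} pass through $V(C)$.) So no bookkeeping over attachment patterns can close your argument: in the acyclic case one must build two disjoint cycles that may both intersect the chosen cycle $C$, and doing so requires using the global assumption that $G$ has no two disjoint cycles in a way your sketch never does (indeed, the vertex-minimality of your counterexample is never exploited after the normalization step).

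For comparison: the paper does not prove this statement at all --- it cites it from~\cite{Thomassen83} --- but it recalls Thomassen's actual machinery at the start of Section~\ref{sec:existence_of_a_separable_vertex}, and that machinery avoids your trap. Instead of deleting a shortest cycle, one repeatedly \emph{contracts non-dominated edges}: if $u\to v$ is in no $2$-cycle and $u,v$ have no common in-neighbor, delete the out-edges of $u$ and identify $u$ with $v$. This preserves minimum out-degree $3$, strictly shrinks the digraph (so minimality can be used), and disjoint cycles in the contracted digraph lift to disjoint cycles in the original. The problem thus reduces to digraphs in which every edge is dominated or lies in a $2$-cycle, where one has genuine structural leverage --- for instance, the in-neighborhood of every vertex not on a $2$-cycle contains a cycle (Lemma~\ref{lem:thomassen_inneighborhood_cycle}) --- and this is what drives the rest of Thomassen's analysis. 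If you want to salvage your outline, the lesson is that the reduction must actually shrink the graph, as contraction does, rather than merely constrain its geometry around one short cycle.
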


Indeed, given two disjoint cycles $C_1, C_2$
    one can extend them into a $1$-friendly partition as follows.
    One part consists of the cycle $C_1$ and of all vertices
    from which there is a path to $C_1$ which does not intersect $C_2$,
    and the second part consists of all other vertices.
    It is easy to verify that this partition is indeed $1$-friendly.
    This argument leads to the following useful lemma 
    which we will use throughout the article:
\begin{lemma}\label{lem:extending_friendly_sets}
    Let $G=(V,E)$ be a digraph with minimal out-degree $\geq 1$,
    and let $U,W\subseteq V$ be disjoint sets which are \emph{friendly} in the following sense: each vertex in $U$ (resp.\ $W$) has an out-neighbor in $U$ (resp. $W$).
    Then, there exists a 1-friendly partition in $G$ with one part containing $U$
    and the other part containing $W$.  
\end{lemma}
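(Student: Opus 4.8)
The plan is to realize the two parts exactly as the sketch preceding the lemma suggests, but taking $U$ and $W$ as the ``seeds'' instead of the two disjoint cycles. Define
\[
A = \{\, v \in V : \text{there is a directed path from } v \text{ to a vertex of } U \text{ whose internal vertices and endpoints avoid } W \,\},
\]
and set $B = V \setminus A$. The first thing I would check are the two containments demanded by the statement. We have $U \subseteq A$, since for $u \in U$ the length-$0$ path from $u$ to itself avoids $W$ (here I use that $U$ and $W$ are disjoint). Conversely $W \cap A = \emptyset$: any directed path starting at a vertex $w \in W$ contains $w \in W$ at its very first position, so it cannot avoid $W$, and thus no vertex of $W$ can belong to $A$. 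Hence $W \subseteq B$, as required.

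Next I would verify that every vertex of $A$ has an out-neighbor in $A$. For $u \in U$ this is immediate from the friendliness of $U$: such a $u$ has an out-neighbor in $U \subseteq A$. For $v \in A \setminus U$, fix a witnessing $W$-avoiding path $v = v_0 \to v_1 \to \cdots \to v_k$ with $v_k \in U$ and $k \geq 1$. The suffix $v_1 \to \cdots \to v_k$ is again a $W$-avoiding path ending in $U$, so $v_1 \in A$; thus the out-neighbor $v_1$ of $v$ lies in $A$.

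The complementary claim, that every vertex of $B$ has an out-neighbor in $B$, is the step I expect to carry the real weight, and it is where the hypotheses are used most sharply. For $w \in W$ we again invoke friendliness of $W$ to get an out-neighbor in $W \subseteq B$. For $v \in B \setminus W$, the minimum out-degree assumption guarantees at least one out-neighbor $v'$, and I claim every such $v'$ lies in $B$. Indeed, if some out-neighbor $v' \in A$, then there is a $W$-avoiding path from $v'$ to $U$; since $v \notin W$, prepending the edge $v \to v'$ yields a $W$-avoiding path from $v$ to $U$, forcing $v \in A$ and contradicting $v \in B$. Hence all out-neighbors of $v$ belong to $B$, and there is at least one.

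The main obstacle is precisely getting this boundary argument for $B$ correct: one must see that the direction in which the reachability set $A$ is ``closed'' is exactly the direction in which out-neighbors either stay inside $A$ or are forced outside, and that the only vertices of $B$ not already handled by friendliness of $W$ are saved by the global minimum-out-degree hypothesis. Once these three verifications are assembled, the pair $(A,B)$ is the desired $1$-friendly partition with $U \subseteq A$ and $W \subseteq B$, completing the proof.
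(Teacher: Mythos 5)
Your proposal is correct and follows essentially the same route as the paper, which defines one part as $U$ together with all vertices admitting a $W$-avoiding path to $U$ and takes the complement as the other part; your three verifications (the containments, closure of $A$, closure of $B$) are exactly the ``easy to verify'' details the paper leaves implicit. The only cosmetic point is that prepending $v \to v'$ may yield a walk rather than a path, but this is harmless since a $W$-avoiding walk to $U$ contains a $W$-avoiding path to $U$.
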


For $r>1$ the existence of non-trivial $r$-friendly partitions remain open:

\begin{question}[see \cite{Alon06}]\label{q:r_friendly_partition}
    Let $r>1$ be an integer, does there exist an integer $d=d(r)$
    such that in every digraph $G$ whose minimal out-degree is at least $d(r)$
    there exists an $r$-friendly partition?
\end{question}

We make a brief remark regarding this question in Section~\ref{sec:appendix}
    (see Proposition~\ref{pro:ron}).

Unless stated otherwise, for the remainder of this manuscript we focus on the case $r=1$,
    and refer to $1$-friendly partitions simply by friendly partitions.
    As discussed above, in order to guarantee the existence of 
    (non-trivial) friendly partitions, $d$ must be at least $3$. 

Let us go back to the setting of assigning students for classes.
    {\it In this context, it would be useful to have many friendly partitions 
    with diverse properties that can be efficiently found.}
    This raises a host of questions:
    {\it How many friendly partitions are there:}
    what is the minimal number $t(d,n)$ of friendly partitions a digraph $G$
    with $n$ vertices and out-degree $d$ can have?
    By Theorem~\ref{thm:thomassen} and Lemma~\ref{lem:extending_friendly_sets}
    we have $t(3,n)\geq 1$ for all $n$.
    Is it the case that $\lim_{n\to\infty}t(3,n)=\infty$?
    Is it the case that $\lim_{n\to\infty}t(d,n)=\infty$ for some fixed $d$?
    {\it How well does the family of friendly partitions separate the vertices:}
    can every pair of vertices be separated by some friendly partition 
    (provided that $d$ is a sufficiently large constant)? Can most pairs be separated?
    In the following subsections, we address these questions
    as well as other related questions in more detail.

\subsection{How Well Do Friendly Partitions Separate the Vertices?}
Imagine that there is a small group $S\subseteq V$ of $s=5$ students
    that can control the preferences of all other kids.
    Can the students in $S$ devise lists $L(u)$ for every student $u$
    such that the teachers will have to assign all students in $S$ together
    to the same class?
    This suggests the following definition:
    Let $G=(V,E)$ be a digraph, let $S\subseteq V$.
    We say that $S$ is separable if there exists a friendly partition $V=U\cup W$
    such that both $U\cap S\neq\emptyset$ and $W\cap S\neq \emptyset$.
    The above motivation question amounts to the following:
    
\begin{question}[Separability]\label{q:sep}
    Does there exist a choice of $d$ and $s$
    such that in any digraph $G=(V,E)$ with out-degree $d$,
    every $S\subseteq V$ of size $k$ is separable?
    How about the case where $d=3$  and $s=2$:
    can every pair of vertices be separated if all out-degrees are at least $3$?
\end{question}

We present the following partial result
    that it is sufficient to consider strongly connected digraphs;
    that is, digraphs in which there is an oriented path
    from every vertex to every other vertex.
    
\begin{proposition}\label{pro:strongly_connected}
    Let $d \geq 3$. If every strongly connected digraph
    with minimum out-degree at least $d$
    satisfies that each pair of vertices in it is separable,
    then every digraph with minimum out-degree at least $d$ satisfies this property.
\end{proposition}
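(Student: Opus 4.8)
The plan is to reduce the problem on an arbitrary digraph $G$ with minimum out-degree $\geq d$ to its strongly connected pieces via the condensation (the DAG of strongly connected components). First I would fix a pair of vertices $u,v$ that we wish to separate. The natural case split is on whether $u$ and $v$ lie in the same strongly connected component (SCC) or in different ones.

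\medskip\par\noindent\textbf{Same component.}
Suppose $u$ and $v$ lie in the same SCC, call it $H$. The key observation is that $H$, viewed as an induced subdigraph, need \emph{not} have minimum out-degree $\geq d$, because vertices of $H$ may send many of their out-edges to later components in the condensation. To fix this, I would consider the ``upward closure'': let $H^+$ be the union of $H$ together with all components reachable from $H$ in the condensation. Then every vertex of $H$ retains all $d$ of its out-neighbors inside $H^+$, and more generally the sub-digraph induced on $H^+$ has minimum out-degree $\geq d$ — but it may fail to be strongly connected. The cleaner route is instead to build a strongly connected \emph{auxiliary} digraph on which the hypothesis applies: take the vertex set of $H$ and, for each vertex $w\in H$ whose out-edge leaves $H$, reroute that edge back into $H$ (e.g.\ to some fixed vertex of $H$, using strong connectivity of $H$ to keep out-degree exactly $d$ while preserving strong connectivity). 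This produces a strongly connected digraph $H'$ on $V(H)$ with minimum out-degree $\geq d$. By hypothesis $u,v$ are separable in $H'$ by some friendly partition $V(H)=U'\cup W'$. The remaining step is to lift this partition of $H'$ back to a friendly partition of all of $G$: place $U'$ on one side and $W'$ on the other, and distribute the vertices outside $H$ so that friendliness is preserved, appealing to Lemma~\ref{lem:extending_friendly_sets} to extend the two friendly seed sets $U'$ and $W'$ to a friendly partition of $G$. The point is that $U'$ and $W'$ remain friendly in $G$ because the rerouted edges only affected vertices inside $H$, and inside $H'$ each of $U',W'$ had internal out-neighbors, which I must check survive the lift.

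\medskip\par\noindent\textbf{Different components.}
Suppose $u\in H_1$ and $v\in H_2$ with $H_1\neq H_2$. Without loss of generality order them so that $H_1$ is not reachable from $H_2$ in the condensation (such an ordering exists by acyclicity of the condensation). Take any cycle $C_1$ through $u$ inside $H_1$ (it exists since $H_1$ is an SCC with at least the out-degree structure forcing a cycle; a single-vertex SCC would need separate handling, but a vertex of out-degree $\geq d\geq 1$ in its own component with no outgoing-within-component edge is impossible for a nontrivial SCC, and a trivial SCC sends all edges forward). The idea is to form one friendly seed set containing $u$ and avoiding $v$, and a second friendly seed set containing $v$, then again invoke Lemma~\ref{lem:extending_friendly_sets}. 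Concretely, I would place $u$ together with a short cycle through it on one side $U$, ensure $v$ lands on the other side $W$ together with a cycle through $v$ in $H_2$, and extend.

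\medskip\par\noindent The main obstacle I anticipate is the \emph{same-component} case, specifically guaranteeing that the rerouting used to manufacture the strongly connected auxiliary digraph $H'$ does not inadvertently make $u$ and $v$ inseparable in $H'$, and that the separating partition of $H'$ genuinely lifts to a \emph{friendly} partition of $G$ rather than merely a valid partition. The cleanest way to discharge this is to phrase everything through Lemma~\ref{lem:extending_friendly_sets}: reduce the goal to exhibiting two disjoint friendly sets $U,W\subseteq V(G)$ with $u\in U$, $v\in W$, and then obtain those seeds from the strongly-connected hypothesis applied to $H'$. Verifying that the seeds produced in $H'$ stay friendly after removing the rerouted edges is the delicate bookkeeping step, and I would handle it by choosing the rerouting to touch as few vertices of the separating partition as possible.
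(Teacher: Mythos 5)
There is a genuine gap, and it sits exactly where you flagged the ``delicate bookkeeping'': the rerouting construction in your same-component case does not work, and no choice of rerouting can save it. The partition you obtain from the hypothesis applied to $H'$ certifies friendliness using the \emph{fake} edges, so the seed sets $U',W'$ need not be friendly in $G$ at all. Concretely, let $H$ be an SCC that is a single directed cycle $x_1\to x_2\to\dots\to x_n\to x_1$, where each $x_i$ additionally sends its other $d-1$ out-edges to a large sink component $K$. Inside $G$, the only real edges within $H$ are the cycle edges, so a subset of $V(H)$ is friendly in $G$ only if it is closed under cycle-successor, i.e.\ only if it is all of $H$ or empty. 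Hence \emph{no} friendly partition of any rerouted $H'$ that separates $x_1$ from $x_2$ can ever lift to a pair of friendly seeds in $G$, regardless of how the rerouting is chosen; separating $x_1$ from $x_2$ in $G$ forces you to coordinate with a partition of $K$ so that each ``stranded'' vertex of $H$ has an out-neighbor in $K$ on its own side, which is strictly more than pairwise separability of $K$ gives you. Your proposed remedy is also circular: the rerouting must be fixed before the hypothesis hands you the partition, so you cannot choose it ``to touch few vertices of the separating partition.'' Your different-components case has a second, related hole: when $u$ or $v$ lies in a trivial (single-vertex) SCC there is no cycle through it inside its component, and the needed disjoint path-plus-cycle seeds may simply not exist (e.g.\ when every path from $u$ and from $v$ to the rest of the graph passes through one cut vertex $t$), so Lemma~\ref{lem:extending_friendly_sets} cannot be invoked.

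The paper's proof avoids both problems by never applying the hypothesis to an arbitrary SCC. It applies it only to the set $S$ of vertices lying in \emph{sink} components of the condensation: these induce a disjoint union of strongly connected digraphs whose minimum out-degree is genuinely $\geq d$, with no edge surgery. All other vertices are handled by routing vertex-disjoint paths from $u$ and $v$ into $S$ and separating the two entry points inside $G_S$; and in the residual case where such disjoint paths fail to exist, Menger's theorem produces a single cut vertex $t$ separating $\{u,v\}$ from $S$, and the edge modification is performed \emph{only at $t$} (redirecting $t$'s out-edges to $u$, $v$ and their out-neighbors), which yields a genuinely strongly connected digraph $G''$ on which the hypothesis applies; the resulting partition lifts because the modified edges at $t$ can be repaired by sending $t$'s part down a real path to $S$. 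If you want to rescue your outline, the fix is to replace your auxiliary digraph $H'$ by this sink-component reduction and keep Menger's theorem for the cut-vertex case.
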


While we do not know an answer to Question~\ref{q:sep},
    our main technical result yields that if $d\geq 15$ 
    then there exists a vertex which can be separated from all other vertices:
    
\begin{theorem}[A Separable Vertex Exists]\label{thm:separable_vertex}
    Let $G=(V,E)$ be a digraph with minimum out-degree at least 15.
    Then, there exists $u\in V$
    such that for all $w\in V\setminus\{u\}$ there exists a friendly partition
    that separates $u$ and $w$.
\end{theorem}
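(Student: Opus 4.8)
The plan is to use Lemma~\ref{lem:extending_friendly_sets} to reduce the task of separating $u$ from $w$ to the purely structural task of exhibiting two vertex-disjoint \emph{friendly sets}, one containing $u$ and one containing $w$; the simplest such sets are directed cycles. First I would pass to a terminal (sink) strongly connected component $H=C^*$ of $G$: since no edge leaves $C^*$, the induced digraph $H$ is strongly connected and still has minimum out-degree $\geq 15$, and I would locate the promised vertex $u$ inside $H$. If a target $w$ does not reach $C^*$, then the set of vertices reachable from $w$ is friendly (every vertex keeps an out-neighbour inside it) and misses $C^*$, so it together with $F_u=C^*$ separates $u$ and $w$ at once. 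If $w$ reaches $C^*$, let $e\in C^*$ be the first vertex of $C^*$ on a path $P$ from $w$; it then suffices to find \emph{inside} $H$ two vertex-disjoint cycles $C_u\ni u$ and $C_e\ni e$, for then $F_u=C_u$ and $F_w=P\cup C_e$ are disjoint friendly sets, the internal vertices of $P$ lying outside $C^*$ and hence missing $C_u$. Taking $e=w$ covers the case $w\in C^*$. Everything therefore reduces to a core statement: in a strongly connected digraph $H$ with minimum out-degree $\geq 15$ there is a vertex $u$ such that for every $w\neq u$ there are two vertex-disjoint cycles, one through $u$ and one through $w$.

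For the core statement I would first record the easy regime. If $u$ lies on a cycle $C$ of length at most $14$ that avoids the current target $w$, then deleting $C$ lowers every out-degree by at most $14$, so $H-C$ still has minimum out-degree $\geq 1$; the set of vertices reachable from $w$ in $H-C$ is then a friendly set containing $w$ and disjoint from $C$, and Lemma~\ref{lem:extending_friendly_sets} finishes. The only real difficulty is thus when every short cycle through $u$ meets $w$, and in particular when $H$ has large girth. The extremal example to keep in mind is the ``blow-up of a long cycle'' (replace each vertex of a long directed cycle by an independent set of size $d$ with complete bipartite connections between consecutive sets): it is strongly connected, has minimum out-degree $d$, and all of its cycles are long. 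There short cycles are simply unavailable, but the $d$ parallel copies in each layer allow one to route two vertex-disjoint cycles through any prescribed pair of vertices. The moral is that the high out-degree must be spent on \emph{parallel routing} rather than on producing short cycles.

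To turn this into a proof I would choose $u$ to be \emph{robust}, meaning that $u$ lies on a cycle of $H-x$ for every single vertex $x\neq u$; equivalently, no single vertex separates $u$ from all of its return paths. A robust $u$ supplies, for each $w$, a cycle $C_u$ through $u$ avoiding $w$, and one then looks for a cycle through $w$ inside $H-C_u$. When $C_u$ is long this is the delicate step: deleting $C_u$ can destroy the out-degree of a vertex all of whose out-neighbours happen to lie on $C_u$, so one must argue—using that each out-degree is at least $15$, leaving ample slack after the bounded damage caused by the routing—that $w$ survives in the core of $H-C_u$, i.e.\ still reaches a cycle there. I expect to establish the existence of a robust $u$ and to control this deletion by an iterative argument that repeatedly invokes Thomassen's theorem (Theorem~\ref{thm:thomassen}, which needs out-degree only $3$) on the auxiliary digraphs obtained after removing a cycle or a bounded separator, the surplus $15-3$ providing the room to reapply it and to absorb the vertices captured by the long cycle.

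The main obstacle is precisely this last step. Finding two vertex-disjoint cycles through a prescribed pair of vertices is genuinely hard in directed graphs—the directed two-disjoint-paths problem admits no clean Menger-type min--max—so there is no off-the-shelf characterisation to invoke, and the argument must be quantitative. The real content of the theorem, and the origin of the constant $15$, is therefore a routing bound showing that minimum out-degree $15$ forces enough ``parallel tracks'' that a single robust $u$ can be separated from every $w$; simultaneously certifying that the chosen $u$ is robust and that the complementary cycle through $w$ survives each relevant deletion is where the bulk of the work, and all of the slack encoded in the number $15$, will be consumed.
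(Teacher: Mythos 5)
Your reduction to a sink strongly connected component and your ``short cycle'' observation are both sound, but the proposal stops exactly where the theorem begins. Everything is delegated to the core claim that in a strongly connected digraph of minimum out-degree $15$ there is a single vertex $u$ such that for \emph{every} $w\neq u$ one can find two vertex-disjoint cycles, one through $u$ and one through $w$; for this you offer only intentions (``I would choose $u$ robust'', ``I expect to establish\dots''), not an argument. Neither the existence of a robust vertex, nor the survival of a cycle through $w$ after deleting $C_u$, is proved, and you yourself flag this as the main obstacle. That is a genuine gap: the hard case is simply not handled, so the proposal is a research plan rather than a proof.

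The deeper issue is that your reduction target is strictly stronger than what the theorem needs, which is why you end up facing the directed two-disjoint-cycles-through-prescribed-vertices problem. A friendly partition separating $u$ from $w$ only requires each part to induce minimum out-degree $\geq 1$; it does not require a cycle through $u$ (or through $w$) inside its own part --- $u$ merely needs one out-neighbour on its side. The paper exploits exactly this slack and never routes cycles through prescribed vertices. It first uses Theorem~\ref{thm:thomassen_3cycles} to get three disjoint cycles, upgrades this (Lemma~\ref{lem:intersecting_cycles_and_cycle_exist}, via a minimal counterexample and contraction of non-dominated edges) to two \emph{intersecting} cycles $C_1,C_2$ plus a cycle $C_3$ disjoint from them, and then (Lemma~\ref{lem:intersecting_cycles_and_cycle_imply_singleton}) takes a friendly partition $\{V_1,V_2\}$ with $C_1\cup C_2\subseteq V_1$, $C_3\subseteq V_2$ and $\lvert V_1\rvert$ minimal. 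If no vertex of $V_1$ can be moved across, one builds a chain $v_0,v_1,v_2,\dots$ in $V_1$ in which each $v_{i+1}\to v_i$ is a cycle edge and $v_i$ is the \emph{unique} out-neighbour of $v_{i+1}$ in $V_1$; the chain must eventually reach a vertex of $C_1\cap C_2$, which has two out-neighbours in $V_1$ --- a contradiction. The intersection of the two cycles is precisely what makes this chain argument terminate; your disjoint-cycles formulation discards that structure and thereby lands on a problem that is harder than the theorem itself.
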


We prove this theorem in Section~\ref{sec:existence_of_a_separable_vertex}. 
    Our proof follows by contradiction
    by considering a counter-example $G$ of minimal size. 
    We show that since the minimum out-degree of $G$ is $\geq 15$,
    it must contain $3$ cycles $C_1,C_2,C_3$ such that $C_1$ and $C_2$ intersect,
    but $C_3$ is disjoint from them. Then, by exploiting the minimality of $G$, 
    we show that one of the vertices on $C_1\cup C_2$ is separable,
    which is a contradiction.
    The existence of such three cycles hinges on a result by Thomassen~\cite{Thomassen83}
    which asserts that every digraph with minimal out-degree at least 15
    contains 3 disjoint cycles.
    Thomassen also conjectures that a minimal degree of 5 is sufficient.
    If true, this would imply that one can improve Theorem~\ref{thm:separable_vertex}
    by replacing $15$ by $5$.

\subsection{How Many Friendly Partitions Are There?}
In the previous section we interpreted ``richness'' of a family of partitions 
    in terms of the ability to separate sets of vertices.
    An alternative, perhaps more simplistic, interpretation of ``richness''
    is obtained by counting: i.e.\ bigger families are richer.
    How large must the family of friendly partition in a digraph
    with out-degree $d\geq 3$ be?
    Theorem~\ref{thm:thomassen} implies that there is at least one such partition.
    Is this tight? Are there arbitrarily large digraphs with out-degree $3$
    all of which have only $O(1)$ many friendly partitions?
    In Section~\ref{sec:number_of_friendly_partition},
    we strengthen Theorem~\ref{thm:thomassen}
    and show that when $d=3$ there must be at least $2$ friendly partitions:

\begin{theorem}\label{thm:more_partitions}
    In every digraph with minimum out-degree at least $3$
    there are at least $2$ distinct friendly partitions.
\end{theorem}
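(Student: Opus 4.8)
The plan is to argue by contradiction from a vertex-minimal digraph $G$ with minimum out-degree at least $3$ that admits \emph{at most one} non-trivial friendly partition; by Theorem~\ref{thm:thomassen} and Lemma~\ref{lem:extending_friendly_sets} it then admits exactly one, say $\{U,W\}$. Two reusable tools drive the argument. The first is a \emph{refinement principle}: if one side of a friendly partition $\{X,Y\}$ induces a subgraph $G[X]$ containing two vertex-disjoint cycles, then, since $G[X]$ has minimum out-degree $\geq 1$, applying Lemma~\ref{lem:extending_friendly_sets} inside $G[X]$ splits $X=X'\cup X''$ into two friendly sets, and $\{X',\,X''\cup Y\}$ is a friendly partition differing from $\{X,Y\}$. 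Hence in the minimal counterexample neither $G[U]$ nor $G[W]$ contains two disjoint cycles, so by Theorem~\ref{thm:thomassen} each side has a vertex whose out-degree \emph{inside its own side} is at most $2$, and therefore at least one out-edge crossing the cut.

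First I would reduce to the strongly connected case. If $G$ is not strongly connected, pick a terminal strongly connected component $S\subsetneq V$ (a sink of the condensation). Since no edge leaves $S$, the induced digraph $G[S]$ again has minimum out-degree $\geq 3$, and $|S|<|V|$, so by minimality $G[S]$ has two distinct non-trivial friendly partitions. I would then lift any friendly partition $\{S',S''\}$ of $G[S]$ to $G$ by taking $P_1$ to be the set of vertices from which some directed path reaches $S'$ without meeting $S''$, and $P_2=V\setminus P_1$; the usual argument shows $\{P_1,P_2\}$ is friendly, and because $S$ is terminal one checks $P_1\cap S=S'$ and $P_2\cap S=S''$, so the lift restricts to the original partition on $S$. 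Consequently distinct partitions of $G[S]$ lift to distinct friendly partitions of $G$, contradicting uniqueness. Thus the minimal counterexample may be assumed strongly connected.

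It remains to handle a strongly connected $G$ whose only non-trivial friendly partition is $\{U,W\}$. Using that the canonical partition built from the two disjoint cycles $C_1\subseteq U$, $C_2\subseteq W$ of Theorem~\ref{thm:thomassen} must coincide with $\{U,W\}$, one derives a rigid edge structure across the cut: every edge from $W$ to $U$ originates on $C_2$ and every edge from $U$ to $W$ originates on $C_1$. Since $G$ is strongly connected there is a cycle $Z$ crossing the cut, meeting both $U$ and $W$; as a cycle $Z$ is a friendly set, and since friendly sets are closed under union, $U\cup Z$ and $W\cup Z$ are friendly as well. The goal is to convert such a crossing cycle into a \emph{second} complementary friendly pair: in small cases this is already visible for a crossing digon $\{a,b\}$, whose complement remains sink-free and hence yields a new partition. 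The main obstacle is the general extraction, since deleting a crossing friendly set may orphan boundary vertices whose only same-side out-neighbour lies on $Z$; one must therefore choose $Z$ carefully (for instance a minimal crossing friendly set, which is strongly connected) and reassign its vertices while exploiting the surplus guaranteed by minimum out-degree $\geq 3$ to keep both sides sink-free. Establishing that such a reassignment always exists—equivalently, that a strongly connected digraph with minimum out-degree $\geq 3$ cannot have a unique non-trivial friendly partition—is the crux of the argument.
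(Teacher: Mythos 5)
Your two preliminary reductions are correct. The refinement principle is just Lemma~\ref{lem:extending_friendly_sets} applied inside one part, and the passage to a terminal strongly connected component $S$, with the lift $P_1=\{v:\ \text{some path from $v$ reaches $S'$ and avoids $S''$}\}$, is valid: one checks $S'\subseteq P_1$, $S''\subseteq P_2$, both parts induce out-degree $\geq 1$, so distinct friendly partitions of $G[S]$ lift to distinct friendly partitions of $G$. (This parallels Proposition~\ref{pro:strongly_connected}, which the paper proves for separability but does not use for this theorem.) Your rigidity claim is also derivable: uniqueness forces $U$ to coincide with the set of vertices having a path to $C_1$ avoiding $C_2$, so every $W\to U$ edge starts on $C_2$, and symmetrically. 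The digon remark is fine as well, since deleting two vertices from a digraph with out-degrees $\geq 3$ leaves out-degrees $\geq 1$, so a crossing $2$-cycle together with its complement is a second friendly partition.

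However, the argument stops exactly where the real work begins, and you say so yourself: establishing that a strongly connected digraph with minimum out-degree $\geq 3$ cannot have a unique non-trivial friendly partition ``is the crux of the argument.'' After your reduction, that statement \emph{is} the theorem, and nothing in the proposal proves it. For a crossing cycle $Z$ of length greater than $2$, the pair $\{U\cup Z,\ W\setminus Z\}$ can orphan a vertex of $W\setminus Z$ all of whose out-neighbours inside $W$ lie on $Z$; no concrete rule for reassigning such vertices is given, and it is not even clear that a minimal crossing friendly set is strongly connected (it could be the disjoint union of a cycle in $U$ and a cycle in $W$). So this is a genuine gap, and it is the core of the proof. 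The paper closes it by an entirely different mechanism: contract non-dominated edges (Thomassen's compression, which preserves out-degree $3$ and decompresses distinct partitions to distinct partitions) so that every edge becomes dominated or part of a $2$-cycle; then two or more $2$-cycles give two partitions at once, the case of exactly one $2$-cycle is handled by Lemma~\ref{lem:more_partitions_dom}, and the $2$-cycle-free case is settled by the classification Lemma~\ref{lem:two_graphs}, which shows that, up to isomorphism, only two such digraphs exist, each visibly having two friendly partitions. Some structural analysis of this kind (or a genuinely new idea) is what your sketch is missing; the crossing-cycle heuristics do not substitute for it.
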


\subsubsection{Counting versus Separating}\label{sec131}
The number of friendly partitions and their separation capabilities are linked.
    For example, assume every pair of vertices in a digraph $G=(V,E)$
    can be separated by a friendly partition,
    and let $k$ denote the number of friendly partitions in $G$. 
    We claim that $k\geq \log(n)$, where $n$ is the number of vertices.
    To see this, assign to every vertex $v$ a binary string $b_v$ of length $k$,
    such that $b_v(i)=0$ if and only if $v$ belongs to the left part
    of the $i$-th friendly partition.
    Since every pair of vertices $u,v$ are separable by some friendly partition,
    it follows that $b_v\neq b_u$. 
    Thus, all binary strings are different and $2^k \ge n$ as claimed.

The next theorem implies a statement in the opposite direction:
    if the number of friendly partitions tends to infinity with $n$
    then the family of friendly partitions must separate
    all subsets of some fixed size (independent of $n$).

Recall that $t(d,n)$ denotes the minimum number of friendly partitions that
    exist in any digraph with $n$ vertices and out-degree $d$,
    and that a subset $S\subseteq V$ is called separable
    if there exists a friendly partition $V=V_1\cup V_2$
    such that both $V_1\cap S\neq\emptyset$
    and $V_2\cap S\neq\emptyset$.

\begin{theorem}~\label{thm:dichotomy}
    Fix $d \in \N$ then the following statements are equivalent:
    \begin{enumerate}
        \item There exists $s \in \N$ such that in every digraph
            with minimum out-degree at least $d$, 
            every subset of $s$ vertices is separable.
        \item In every digraph with minimum out-degree at least $d$,
            every subset of $s=d-1$ vertices is separable.
        \item The function $t(d,n)$ is unbounded (as a function of $n$).
        \item $t(d,n) \geq \log(n) - \log(d-2)$.
    \end{enumerate}
\end{theorem}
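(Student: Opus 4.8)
The plan is to establish the four implications $(2)\Rightarrow(1)$, $(2)\Rightarrow(4)\Rightarrow(3)$, $(1)\Rightarrow(3)$, and the single substantial implication $(3)\Rightarrow(2)$, which together close the cycle. The unifying device is the \emph{inseparability relation}. Fix a digraph $G$ with minimum out-degree at least $d$ and let $P_1,\dots,P_k$ be its friendly partitions. Encode each vertex $v$ by the binary string $b_v\in\{0,1\}^k$ recording the side of $v$ in each $P_i$, exactly as in Section~\ref{sec131}, and declare $u\approx v$ when $b_u=b_v$. The $\approx$-classes are the maximal \emph{inseparable blocks}, and by definition a set $S$ is separable if and only if it is not contained in a single block. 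Consequently ``every $s$-subset of $G$ is separable'' is equivalent to ``every inseparable block of $G$ has size at most $s-1$.'' In particular $(2)\Rightarrow(1)$ is immediate, taking $s=d-1$.

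Next I would run the counting argument underlying Section~\ref{sec131} in quantitative form. The number of distinct strings $b_v$, hence the number of blocks, is at most $2^{k}$, so if every block has size at most $s-1$ then $n\le (s-1)2^{k}$ and therefore $k\ge \log n-\log(s-1)$. Applying this with $s=d-1$ to an arbitrary $G$ satisfying the hypothesis of $(2)$ gives $t(d,n)\ge \log n-\log(d-2)$, which is exactly $(4)$; and $(4)\Rightarrow(3)$ holds because $\log n\to\infty$ while $\log(d-2)$ is constant. The same inequality, now with the bound $s$ furnished by $(1)$, gives $t(d,n)\ge \log n-\log(s-1)\to\infty$, which is $(3)$. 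Thus $(1)$, $(2)$, and $(4)$ each force $(3)$, and it remains only to recover the uniform statement $(2)$ from the mere unboundedness $(3)$.

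The heart of the proof is $(3)\Rightarrow(2)$, which I would prove by contraposition: assuming $(2)$ fails, I produce arbitrarily large digraphs of minimum out-degree at least $d$ whose number of friendly partitions is bounded by a constant independent of the number of vertices, contradicting $(3)$. By the reformulation, $\neg(2)$ yields a fixed digraph $G_0$ (minimum out-degree $\ge d$) containing an inseparable set $S_0$ of size exactly $d-1$, i.e.\ a set of $d-1$ vertices lying on a common side of \emph{every} friendly partition of $G_0$. The idea is to take $m$ disjoint copies of $G_0$ and to \emph{link} them --- by adding a small number of bridge vertices and edges that touch the copies only outside their inseparable sets --- so that in any friendly partition of the composite digraph the copies are forced to \emph{synchronise} around a cyclic skeleton: the side chosen for the inseparable set of the $i$-th copy, and more strongly the internal friendly partition used inside it, is forced to agree with that of its neighbour. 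Since each copy's out-neighbourhoods remain inside that copy, any global friendly partition restricts to a friendly partition of each copy; once the copies are synchronised, the friendly partitions of the whole are in bounded-to-one correspondence with those of a single $G_0$, so the count does not grow with $m$, while the number of vertices is $\approx m\cdot|V(G_0)|\to\infty$.

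The delicate point --- and the one I expect to be the main obstacle --- is the off-by-one between $|S_0|=d-1$ and the out-degree $d$. A block of size only $d-1$ cannot absorb all $d$ out-edges of a new vertex, so one cannot rigidly glue a vertex or another copy onto $S_0$ in the naive way; and a plain disjoint union is useless, since its friendly partitions multiply roughly as $(2t_0+2)^m$, where $t_0$ is the number of friendly partitions of $G_0$, and thus grow with $m$. The construction must therefore route the single leftover out-edge of each linking vertex so as to enforce synchronisation around the skeleton without (a) handing the vertices of $S_0$ new out-neighbours that would break its internal inseparability, or (b) creating free binary choices that would reintroduce exponentially many friendly partitions. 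Making the linking simultaneously rigid enough to cap the count yet loose enough to preserve both the out-degree lower bound and the inseparability of each copy's $S_0$ is the crux; verifying afterwards that the synchronised composite has only $O(1)$ friendly partitions should then reduce to a finite case analysis of how a global friendly partition can restrict to one copy together with its incident bridges.
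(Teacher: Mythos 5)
Your reduction of separability to the block structure of the inseparability relation, and the counting arguments giving $(2)\Rightarrow(1)$, $(2)\Rightarrow(4)\Rightarrow(3)$ and $(1)\Rightarrow(3)$, are correct, and the implication graph you propose is indeed strongly connected, so the logical plan is sound. But the equivalence stands or falls with $(3)\Rightarrow(2)$, and there you do not have a proof: you describe a multi-copy ``synchronisation'' construction, correctly identify the obstruction (an inseparable set of size $d-1$ cannot absorb all $d$ out-edges of a padding vertex, while a plain disjoint union multiplies the number of friendly partitions), and then explicitly leave the crux --- designing a linking that is rigid enough to keep the count bounded yet loose enough to preserve the out-degree bound and each copy's inseparable set --- unresolved. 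As written, statement 2 is never derived from the others, so the theorem is not established.

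The paper avoids your obstacle by factoring the hard direction differently: it proves $\neg 1\Rightarrow\neg 3$ and, separately, $1\Rightarrow 2$. For $\neg 1\Rightarrow\neg 3$, the negation of item 1 supplies inseparable $s$-tuples for \emph{every} $s$, in particular one of size exactly $d$; padding vertices of in-degree $0$ whose out-edges go to all $d$ vertices of that tuple then have out-degree exactly $d$ and are forced into the tuple's part, so the number of friendly partitions does not grow --- the off-by-one never arises. For $1\Rightarrow 2$ it uses a cycle gadget: to separate a $(d-1)$-set $D_-$, attach a directed cycle $S$ of $s$ new vertices, each with edges to all of $D_-$; if a friendly partition separates $S$ but not $D_-$, then any vertex of $S$ lying outside the part of $D_-$ drags its cycle-successor into its own part, forcing $S$ to be monochromatic, a contradiction. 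It is worth noting that this same gadget closes your gap directly and salvages your route $\neg 2\Rightarrow\neg 3$: given $G_0$ with minimum out-degree $\geq d$, an inseparable $(d-1)$-set $S_0$, and $t_0$ friendly partitions, attach an arbitrarily long directed cycle each of whose vertices has edges to all of $S_0$. Every friendly partition of the composite restricts to a (possibly trivial) friendly partition of $G_0$, since out-neighbourhoods inside $G_0$ are unchanged; hence $S_0$ is monochromatic, hence the new cycle is monochromatic, and the composite has at most $2t_0+2$ friendly partitions while its number of vertices is unbounded. No multi-copy synchronisation is needed.
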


Theorem~\ref{thm:dichotomy} is proved in Section~\ref{sec:number_of_friendly_partition}.
    Note that this theorem implies
    that either $t(3,n)$ is bounded by a constant for every $n$,
    or else it must be the case that each pair of vertices must be separable.
    Thus, to prove that each pair of vertices is separable
    in every digraph with out-degree $3$,
    it suffices to show that the number of friendly partitions is unbounded.

Another corollary of Theorem~\ref{thm:dichotomy} is a dichotomy for $t(d,n)$: 
    for every fixed $d$, the function $t(d,n)$ is either upper bounded by a constant, 
    or it tends to infinity at a rate of at least $\Theta(\log n)$. 
    (E.g., it is impossible that $t(d,n) \in \Theta(\sqrt{\log n}$).)

The equivalence between items $1$ and $2$ implies
    that if there exists some $s\in\mathbb{N}$
    such that every subset of $s$ vertices is separable,
    then this already holds for $s=d-1$.
    The next result asserts that in this case it holds
    that every pair of vertices is separable in every digraph
    with minimum out-degree at least $d+1$:
\begin{theorem}~\label{thm:separability_for_d+1}
    Let $d\in\mathbb{N}$. If $t(d,n)$ is unbounded as a function of $n$,
    then in every digraph with minimum out-degree at least $d+1$,
    each pair of vertices is separable.
\end{theorem}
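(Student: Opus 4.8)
The plan is to argue through the equivalence relation of \emph{inseparability}: declare $x\approx y$ if $x$ and $y$ lie in the same part of \emph{every} friendly partition. Transitivity is immediate, so $\approx$ partitions $V$ into classes, and ``each pair is separable'' is equivalent to ``every class is a singleton''; any subset of size $\ge 2$ of a class is non-separable. Now assume $t(d,n)$ is unbounded, so by Theorem~\ref{thm:dichotomy} (item~2) every digraph with minimum out-degree at least $d$ has all of its $(d-1)$-subsets separable. If some class $B$ in $G$ had $|B|\ge d-1$, then any $(d-1)$-subset of $B$ would be a non-separable $(d-1)$-set in $G$, contradicting item~2 (recall $G$ has out-degree $\ge d+1\ge d$). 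Hence it suffices to rule out classes of size $2\le |B|\le d-2$. For $d\le 3$ this range is empty and item~2 already yields pair-separability at out-degree $\ge d$, hence at $\ge d+1$; so the content lies entirely in $d\ge 4$, and the surplus out-edge must be spent precisely on destroying these small classes.

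For a target pair $u,v$, the plan is to build disjoint friendly sets $U\ni u$ and $W\ni v$ and then invoke Lemma~\ref{lem:extending_friendly_sets}, which converts them into a friendly partition placing $u$ and $v$ on opposite sides. To produce such sets I would feed item~2 the set $N^{+}(u)\setminus\{v\}$, of size $\ge d\ge d-1$; its separability yields a friendly partition $(P_1,P_2)$ in which $u$ has an out-neighbour on each side. If $v$ lands opposite $u$ we are done, so suppose $u,v\in P_1$ and fix an out-neighbour $a\in P_2$ of $u$. I then \emph{peel} $P_1$ away from $u$: let $R\subseteq P_1$ be the vertices that can reach a cycle inside $G[P_1\setminus\{u\}]$, and let $D=P_1\setminus R$ (so $u\in D$). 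One checks that $R$ is friendly, since any $x\in R$ has an out-neighbour on a shortest route to its cycle that again lies in $R$; and that $U=P_2\cup D$ is friendly, since $u$ is supported by $a$ while every other $x\in D$ has all its $P_1$-out-neighbours in $D$ (an out-neighbour in $R$ would place $x$ in $R$). Thus $U\ni u$ and $W=R$ are disjoint friendly sets, and whenever $v\in R$ this separates $u$ from $v$.

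The hard case, and the main obstacle, is exactly when $v\in D$ in every such partition: the peeling drags $v$ to the same side as $u$, which is the signature of $u$ and $v$ sharing a small inseparability class. Overcoming it requires genuinely using the $(d+1)$-st out-edge to reroute $v$'s support away from $u$ --- equivalently, either to enlarge the class of $u,v$ to size $\ge d-1$ (whence the first paragraph applies) or to contract the class $B$ to a single vertex and appeal to item~2 on the quotient. The delicate point is degree control: contracting $B$ lowers the out-degree of any outside vertex that has several out-edges into $B$, so the quotient need not retain minimum out-degree $\ge d$, and this is where the slack between $d+1$ and $d$ must be budgeted (and why ``$d+1$'' appears in the statement rather than ``$d$''). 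I would organise the whole argument around a minimal counterexample $G$, which may be taken strongly connected by Proposition~\ref{pro:strongly_connected}, so that after contracting or deleting to a strictly smaller instance I may invoke either item~2 or the minimality hypothesis and then lift the resulting friendly partition back to $G$ by placing each removed or merged vertex together with one of its out-neighbours.
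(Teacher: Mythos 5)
Your proposal is incomplete, and you say so yourself: everything hinges on the case $v\in D$, which you explicitly leave open. The first two paragraphs are fine as far as they go (the bound $|B|\le d-2$ on inseparability classes, and the peeling construction producing disjoint friendly sets $U=P_2\cup D$ and $W=R$, which via Lemma~\ref{lem:extending_friendly_sets} separates $u$ from $v$ whenever $v\in R$). But the theorem's entire content sits in the remaining case, and the remedies you sketch for it do not work as stated: contracting the class $B$ destroys the minimum out-degree hypothesis, as you note, and the ``minimal counterexample plus strong connectivity'' plan is not an argument, only a hope. So what you have is a correct reduction to a hard case, not a proof.

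The idea you are missing is much simpler, and it is exactly how the paper spends the surplus out-edge: \emph{delete $u$ before building the partition}. Since all out-degrees in $G$ are at least $d+1$, the digraph $G'=G-u$ still has minimum out-degree at least $d$, so item~2 of Theorem~\ref{thm:dichotomy} applies to $G'$. Apply it to separate $N^{+}(u)$ inside $G'$: this set has at least $d+1\ge d-1$ vertices, and any superset of a separable $(d-1)$-tuple is separable. The resulting friendly partition of $G'$ has out-neighbours of $u$ on both sides, while $v$ lies on only one side; now add $u$ to the side that contains an out-neighbour of $u$ but not $v$. This yields a friendly partition of $G$ separating $u$ from $v$, and your hard case never arises: because $u$ is absent when the partition is constructed, $v$ cannot be ``dragged'' onto $u$'s side --- $u$ picks its side last. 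Note also that this pinpoints the role of ``$d+1$'' (it licenses the deletion of $u$), not degree control under contraction of inseparability classes as your third paragraph guesses.
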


\subsection{Special Digraphs}
It is natural to explore the above questions under additional assumptions
    on the preference digraph.
    For example, it seems reasonable to assume
    that the in-degrees of the vertices are not too large.
    (I.e.\ that each student is listed by a bounded number of their school-mates.)
    Under such an assumption, a standard application of Lovász local lemma%
    \footnote{Lovász local lemma (symmetric version, see~\cite{Lov73}):
    Let $A_1, \ldots, A_n$ be events
    such that each $A_i$ occurs with the probability at most $p$
    and each $A_i$ is independent of all but at most $d$ other $A_j$.
    Then, $\Pr\left[\cap_{i=1}^n \overline{A_i}\right] > 0$
    provided that $e\cdot p\cdot (d+1) \leq 1$.}
    yields an affirmative answer to Question~\ref{q:sep}:
    
\begin{theorem}\label{thm:lll_c}
    Let $G=(V,E)$ be a directed graph with minimum out-degree $\geq d$
    and maximum in-degree $\delta\leq \frac{2^{d-1} - e}{ed}$. 
    Then, for every pair of distinct vertices $v_1,v_2\in V$
    there exists a friendly partition which separates it.
\end{theorem}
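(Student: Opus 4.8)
The plan is to fix a pair of distinct vertices $v_1, v_2 \in V$ and produce, via the probabilistic method, a friendly partition separating them. The natural setup is to color each vertex independently and uniformly at random with one of two colors (say $0$ and $1$), except that we \emph{force} the colors of $v_1$ and $v_2$ by setting the color of $v_1$ to $0$ and the color of $v_2$ to $1$. This guarantees separation by construction; the only thing that can go wrong is that the resulting two-coloring fails to be friendly, i.e.\ some vertex $u$ has no out-neighbor sharing its color. So the goal is to show that with positive probability \emph{every} vertex has at least one monochromatic out-edge, and then invoke Lemma~\ref{lem:extending_friendly_sets} is not even needed --- a genuinely proper two-coloring where each vertex sees an out-neighbor of its own color is already a friendly partition.

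For each vertex $u \in V$, let $A_u$ be the ``bad'' event that all $d$ out-neighbors of $u$ receive the color opposite to $u$'s own color. I would first bound $\Pr[A_u]$. For a vertex whose color is free and whose out-neighbors are all free, $\Pr[A_u] = 2 \cdot (1/2)^{d+1} = (1/2)^d$ (summing over the two choices of $u$'s color, each forcing all $d$ neighbors to the opposite color). The forcing of $v_1, v_2$ can only help or leave things comparable: if $u$ is one of the forced vertices its color is fixed, so $\Pr[A_u] = (1/2)^d$; if some out-neighbors of $u$ are forced, the event probability only decreases unless all forced out-neighbors already happen to lie on the ``wrong'' side, and one checks the bound $\Pr[A_u] \le (1/2)^{d-1}$ holds uniformly. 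The cleanest route is to argue $p := \max_u \Pr[A_u] \le (1/2)^{d-1}$, absorbing all the edge cases from forced vertices into this slightly looser bound.

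Next I would set up the dependency structure for the symmetric Lovász local lemma. The event $A_u$ depends only on the colors of $u$ and its $d$ out-neighbors. Two events $A_u$ and $A_{u'}$ are dependent only if their defining vertex-sets overlap, i.e.\ if $\{u\} \cup N^+(u)$ and $\{u'\} \cup N^+(u')$ share a vertex. I would bound the number of events dependent on a fixed $A_u$ by counting the vertices $u'$ whose closed out-neighborhood meets $\{u\} \cup N^+(u)$. Each such $u'$ must either lie in $\{u\} \cup N^+(u)$ (at most $d+1$ choices) or have an out-neighbor there; since the maximum in-degree is $\delta$, each of the $d+1$ vertices in $\{u\}\cup N^+(u)$ has at most $\delta$ in-neighbors, contributing at most $(d+1)\delta$ candidates, plus the out-degree-$d$ contributions. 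A clean bound is that the dependency degree is at most $D := (d+1)(\delta + 1) - 1$, or some comparable expression; the exact constant is a routine count that I would pin down to match the hypothesis.

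Finally I would verify the Lovász local lemma condition $e \cdot p \cdot (D+1) \le 1$ using $p \le (1/2)^{d-1}$ and the dependency bound, which is exactly where the hypothesis $\delta \le \frac{2^{d-1} - e}{ed}$ enters: rearranging that inequality gives $e d (\delta) \le 2^{d-1} - e$, i.e.\ $e(d\delta + 1) \le 2^{d-1}$, which is the form $e \cdot (1/2)^{d-1} \cdot (D+1) \le 1$ after matching $D+1$ with $d\delta + 1$ (so the intended dependency bound is $D = d\delta$). Once the LLL condition holds, $\Pr[\bigcap_u \overline{A_u}] > 0$, so there exists a coloring in which no vertex is isolated in its own color class; this coloring is a friendly partition, and since $v_1, v_2$ were forced to opposite colors it separates them.

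\smallskip
The main obstacle I anticipate is not the LLL invocation itself but getting the dependency degree bound to match the stated hypothesis exactly, and handling the forced vertices $v_1, v_2$ cleanly. The probability and dependency estimates each carry small slack (the ``$+1$''s, whether one counts $d$ or $d+1$ out-neighbors, whether forcing strictly decreases $p$), and the constant $\frac{2^{d-1}-e}{ed}$ in the hypothesis is tight enough that I would need to track these carefully rather than with order-of-magnitude bounds. I expect the honest version sets $p = (1/2)^{d-1}$ as a uniform upper bound over all (forced and free) vertices and dependency degree $d\delta$, so that the LLL inequality $e p (d\delta + 1) \le 1$ reduces precisely to $\delta \le \frac{2^{d-1} - e}{ed}$.
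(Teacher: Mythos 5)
Your plan is the same as the paper's: force $v_1\in V_1$, $v_2\in V_2$, color all other vertices independently and uniformly, let $A_w$ be the event that every out-neighbor of $w$ lands opposite to $w$, bound $\Pr[A_w]\le (1/2)^{d-1}$, and apply the symmetric local lemma (the paper also first deletes edges so all out-degrees are exactly $d$, which the counts below need). The genuine gap is the one step you defer: the dependency degree. Your fallback count --- all $u'$ whose closed out-neighborhood meets $\{w\}\cup N^+(w)$, where $N^+(w)$ denotes the out-neighborhood --- is $d+(d+1)\delta=d\delta+d+\delta$, and with that count the condition $e\left(\frac{1}{2}\right)^{d-1}(d\delta+d+\delta+1)\le 1$ only yields $\delta\le \frac{2^{d-1}-e(d+1)}{e(d+1)}$, strictly weaker than the stated $\frac{2^{d-1}-e}{ed}$. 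So ``pinning down a routine count'' of shared variables cannot reach the theorem's constant. The paper gets $d\delta$ from two specific savings: (a) vertices sharing a common out-neighbor with $w$ number at most $d(\delta-1)$, not $d\delta$, because $w$ itself is one of the $\le\delta$ in-neighbors of each of its $d$ out-neighbors; and (b) in-neighbors of $w$ are excluded \emph{entirely}, the justification being that conditioned on \emph{all} colors outside $N^+(w)$ (including $c(w)$ itself), the probability of $A_w$ is still exactly $(1/2)^{d}$, so $A_w$ is mutually independent of every event measurable in those outside colors --- in particular of $A_u$ for every in-neighbor $u$ of $w$ with no common out-neighbor. This conditioning argument, which exploits the directedness of the events, is the idea missing from your sketch; a shared-variable dependency graph will never give it to you.

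Your closing worry that the forced pair $v_1,v_2$ is the delicate point is not just bookkeeping --- it is exactly where saving (b) breaks. If $v_1\in N^+(w)$, then $\Pr[A_w\mid \text{colors outside } N^+(w)]$ is $0$ or $(1/2)^{d-1}$ according to the value of $c(w)$, so it is not constant, and genuine dependence on in-neighbor events can occur: take $u\to w$, $w\to v_1$, $u\to v_2$, with $u\notin N^+(w)$ and no common out-neighbor; then $A_w=\{c(w)=1\}\cap\{\text{other out-neighbors of } w \text{ are } 0\}$ and $A_u=\{c(u)=0\}\cap\{c(w)=1\}\cap\{\text{other out-neighbors of } u \text{ are } 1\}$, giving $\Pr[A_w\cap A_u]=(1/2)^{2d-1}=2\Pr[A_w]\Pr[A_u]$. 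The paper's independence claim is stated for all $w$ and silently skips this case; a strict repair either enlarges the dependency degree to $d\delta+\delta$ for the (at most $2\delta$) in-neighbors of $v_1,v_2$ --- costing a slightly smaller admissible $\delta$ --- or invokes an asymmetric/lopsided version of the lemma. So your approach is the right one and matches the paper's; what you are missing is the conditioning argument behind the $d\delta$ count, and the step you (rightly) flagged as needing care is precisely the thin point of the paper's own write-up.
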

\begin{proof}
    It will be convenient to assume that all out-degrees are exactly $d$. 
    This is without loss of generality
    because we can always remove edges until this is satisfied.
    (Notice that the maximum in-degree can not increase when removing edges.)

Let $v_1,v_2\in V$ be distinct vertices.
    Draw a random partition of $V$ into two parts $V_1,V_2$
    which separates $v_1,v_2$ as follows: 
    $v_1$ is assigned to $V_1$, $v_2$ is assigned to $V_2$,
    and the part of every other vertex is chosen independently with probability
    $\frac{1}{2}$. For $w\in V$, let $A_w$ denote the event
    that every out-neighbor of $w$ is on a different part than $w$.  
    Notice that the probability of each $A_w$ is at most
    $\left(\frac{1}{2} \right)^{d-1}$.

We prove that with a positive probability such a random partition is friendly.
    Note that a partition is friendly
    if and only if it does not belong to any of the events $A_w$.
    Thus, it suffices to prove that with a positive probability
    none of the events $A_w$ holds.
    Towards this end we use the Lovász Local Lemma: 
    fix $w\in V$ and note that $A_w$ is independent of all events $A_u$
    such that $u$ is not an out-neighbor of $w$ 
    and $w$ and $u$ have no common out-neighbor;
    there are~$d + d(\delta - 1) = d\delta$ such events $A_u$.
    Thus, every $A_w$ is independent of all but at most $d\delta$ events $A_u$.
    Therefore, by the Lovász Local Lemma there exists a friendly partition
    separating~$v_1$ and~$v_2$, provided that
    \[e\left(\frac{1}{2}\right)^{d-1}\left(\delta d + 1\right) \leq 1,\] 
    which is equivalent to $\delta \leq \frac{2^{d-1} - e}{ed}$.
\end{proof}

Similarly, using the multiplicative form of Chernoff bound\footnote{
    (Multiplicative) Chernoff bound (see~\cite{Alon00}):
    Let $n\in\mathbb{N}, p\in[0,1]$ and let $B=B(n,p)$ denote a binomial random variable.
    Then, $\Pr\left[B\leq (1-\varepsilon)\mu)\right]
    \leq \exp\bigl(-{\frac{\varepsilon^2\mu}{2}}\bigr)$,
    for every $\varepsilon > 0$ where $\mu = np$ is the expectation of $B$.}
    one can separate each pair of vertices by an $r$-friendly-partition
    for $ r = (1 - \varepsilon)\frac{d-1}{2}+1$
    if $d$ is sufficiently large and the maximum  in-degree is bounded. 
    E.g. for $\varepsilon = \frac{1}{2}$ we get the following.

\begin{theorem}
    Let $G=(V,E)$ be a directed graph with minimum out-degree $\geq d$
    and maximum in-degree $\leq e^{\frac{d-1}{16} - 1}{d}$. For every pair of distinct
    vertices $v_1, v_2 \in V$ there exists an $r$-friendly partition which separates it
    for $r = \frac{d-1}{4} + 1$.
\end{theorem}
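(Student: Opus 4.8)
The plan is to follow the proof of Theorem~\ref{thm:lll_c} almost verbatim, replacing the crude estimate on the bad events by the multiplicative Chernoff bound so that we can tolerate a \emph{deficiency} of up to a constant fraction of the out-neighbors, rather than insisting that merely a single out-neighbor land in the same part. As before, I would first reduce to the case where every out-degree is exactly $d$ by deleting edges, which only decreases in-degrees and hence preserves the hypothesis on the maximum in-degree. Fixing the pair $v_1,v_2$ to be separated, I would then draw the same random partition: assign $v_1$ to $V_1$, $v_2$ to $V_2$, and place every other vertex independently and uniformly. For each vertex $w$ let $A_w$ be the bad event that strictly fewer than $r$ of the out-neighbors of $w$ lie in the same part as $w$; a partition is $r$-friendly precisely when none of the $A_w$ occur, so it suffices to show $\Pr[\bigcap_w \overline{A_w}]>0$.

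The first technical step is to bound $\Pr[A_w]$. Writing $X_w$ for the number of out-neighbors of $w$ in its own part, each out-neighbor lies in the same part as $w$ independently with probability $\tfrac12$, except that at most one out-neighbor (namely $v_1$ or $v_2$, when it is forced into the part opposite to $w$) may fail to contribute; hence $X_w$ stochastically dominates a $\mathrm{Bin}(d-1,\tfrac12)$ variable with mean $\mu=\tfrac{d-1}{2}$. Taking $\varepsilon=\tfrac12$ so that $r=(1-\varepsilon)\mu+1=\tfrac{d-1}{4}+1$, the Chernoff bound quoted above gives $\Pr[A_w]=\Pr[X_w\le(1-\varepsilon)\mu]\le\exp(-\varepsilon^2\mu/2)=\exp(-\tfrac{d-1}{16})=:p$.

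The second step is the dependency analysis, and here I expect essentially no new work beyond Theorem~\ref{thm:lll_c}: the event $A_w$ is a function only of the \emph{relative} placements of $w$ and its out-neighbors, i.e.\ of the parities $s(w)\oplus s(y)$ over out-neighbors $y$ (where $s(\cdot)\in\{0,1\}$ labels the part), and therefore does not depend on the placement of $w$ itself. Consequently $A_w$ is mutually independent of every $A_u$ whose scope meets that of $A_w$ only in the vertex $w$ --- in particular it is independent of the events attached to the in-neighbors of $w$ --- so the only genuine dependencies come from out-neighbors of $w$ and from vertices sharing a common out-neighbor with $w$. Exactly as in the proof of Theorem~\ref{thm:lll_c}, this yields at most $d+d(\delta-1)=d\delta$ dependent events. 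The symmetric Lovász Local Lemma then produces the desired partition provided $e\,p\,(d\delta+1)\le1$, and plugging in $p=\exp(-\tfrac{d-1}{16})$ and solving for $\delta$ gives the stated bound on the maximum in-degree.

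The one point requiring care --- and the only place I anticipate friction --- is the interaction of the fixed vertices $v_1,v_2$ with both estimates: when $v_1$ or $v_2$ appears among the out-neighbors of $w$, the corresponding parity $s(w)\oplus s(v_i)$ is no longer an independent uniform bit, which both mildly perturbs the Chernoff estimate (already absorbed by the stochastic-domination bound above, which loses only the single out-neighbor accounted for) and slightly enlarges the dependency neighborhood of the at most $2\delta$ in-neighbors of $v_1,v_2$. Since these special vertices affect only lower-order terms, they can be folded into the constant without changing the exponential order $\exp(\Theta(d))$ of the in-degree bound; I would verify this bookkeeping last.
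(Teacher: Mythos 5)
Your proposal follows essentially the same route as the paper's own proof: reduce to out-degree exactly $d$ by deleting edges, draw the same random partition with $v_1,v_2$ fixed to opposite parts, bound each bad event $A_w$ by $\exp\bigl(-\tfrac{d-1}{16}\bigr)$ via the multiplicative Chernoff bound with $\varepsilon=\tfrac12$ and $\mu=\tfrac{d-1}{2}$, count at most $d\delta$ dependent events exactly as in Theorem~\ref{thm:lll_c}, and invoke the symmetric Lov\'asz Local Lemma. If anything, you are more careful than the paper on the two points it glosses over --- the stochastic domination of the relevant count by a $\mathrm{Bin}\left(d-1,\tfrac12\right)$ variable, and the interaction of the fixed vertices $v_1,v_2$ with the independence/dependency structure --- so your proposal is a more complete rendering of the same argument.
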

\begin{proof}
    The proof is essentially same as the proof of Theorem~\ref{thm:lll_c}.
    Again, without lost of generality we assume that all out-degrees are exactly $d$.
    We choose $v_1, v_2 \in V$ and draw a random partition
    of $V$ into $V_1$ and $V_2$ such that
    $v_1 \in V_1$ and $v_2 \in V_2$ and the part for of every other vertices
    is chose independently with probability $\frac{1}{2}$.
    Now, we need to bound the probability of an event $A_w$ representing that
    there are $\leq \frac{d-1}{4}$ out-neighbors of $w$ being in the same part as $w$.
    This can be bounded by $e^{-\frac{d-1}{16}}$ by Chernoff bound.
    As in the previous case, each event $A_w$ is independent of all
    but $d\delta$ events $A_u$.
    Therefore, if $\delta \leq e^{\frac{d-1}{16} - 1}{d}$
    Lovász Local Lemma guarantees there is
    a $\left(\frac{d-1}{4} + 1\right)$-friendly partition separating $v_1$ and $v_2$.
\end{proof}

\paragraph{Vertex-Transitive Digraphs.}

Theorem~\ref{thm:separable_vertex}, which asserts the existence of a vertex
    which is separable from all other vertices in digraphs
    with minimum out-degree at least $15$,
    also answers Question~\ref{q:sep} in the affirmative
    for {\it vertex-transitive} digraphs with the same minimum out-degree. 
    Indeed, recall that vertex-transitive digraphs are digraphs
    such that for every pair of vertices $u,v$
    there exists an automorphism of the digraph $\tau$ such that $\tau(u)=v$.
    Now, since for every friendly partition $\{V_1,V_2\}$ of $V$ it holds
    that also $\{\tau(V_1),\tau(V_2)\}$ is a friendly partition,
    we get that the existence of a single vertex
    which is separable from any other vertex implies that all vertices have this property.

However, for vertex-transitive digraphs we can use Theorem~\ref{thm:lll_c}
    to obtain better bounds.
    Indeed, Theorem~\ref{thm:lll_c} applies to any digraph
    in which all vertices have the same out-degree and in-degree. 
    In particular, regular digraphs with out-degree (and thus also with in-degree) equal 9
    fulfill the condition of Theorem~\ref{thm:lll_c} 
    and thus the following holds:

\begin{corollary}
    Let $G$ be a regular digraph with degree at least $d=9$.
    Then, every pair of vertices in $G$ is separable.
\end{corollary}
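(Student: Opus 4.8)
The plan is to derive this directly from Theorem~\ref{thm:lll_c}. In a regular digraph every vertex has the same out-degree and in-degree; call this common value $k$, so by hypothesis $k \geq 9$. I would apply Theorem~\ref{thm:lll_c} with the parameter $d$ set equal to $k$ and with the maximum in-degree $\delta = k$. The minimum out-degree hypothesis ($\geq d = k$) then holds with equality, so the only thing that needs checking is the in-degree condition $\delta \leq \frac{2^{d-1}-e}{ed}$, which in our notation becomes
\[
k \leq \frac{2^{k-1}-e}{ek}.
\]
Once this inequality is verified, Theorem~\ref{thm:lll_c} immediately produces, for any prescribed pair $v_1, v_2$, a friendly partition separating them, which is exactly the desired conclusion.

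The verification of this inequality is the only computational step. Clearing denominators, the condition is equivalent to $e(k^2+1) \leq 2^{k-1}$. I would first check the boundary case $k = 9$ by hand: the left-hand side is $82e \approx 222.9$ while the right-hand side is $2^8 = 256$, so the inequality holds. For the remaining cases I would observe that the right-hand side grows exponentially in $k$ while the left-hand side is only quadratic, so the inequality continues to hold for all $k \geq 9$. A clean way to phrase this is to note that the ratio $2^{k-1}/\bigl(e(k^2+1)\bigr)$ is increasing for $k \geq 9$ (the per-step multiplicative factor is $2(k^2+1)/\bigl((k+1)^2+1\bigr) > 1$ once $k > 2$), so it stays $\geq 1$ once it is $\geq 1$ at $k = 9$.

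The one subtlety worth flagging --- though it is not really an obstacle --- is that one must apply the theorem with $d = k$ (the true common degree) rather than first trimming the out-degrees down to exactly $9$. Trimming edges keeps the minimum out-degree at $9$ but leaves the in-degree as large as $k$, which could exceed the fixed threshold $\frac{2^{8}-e}{9e} \approx 10.35$ when $k \geq 11$; invoking the theorem at $d = k$ sidesteps this precisely because the exponential threshold grows with $k$. Apart from this bookkeeping, there is no genuine difficulty: the corollary is an essentially immediate consequence of the Lovász-local-lemma argument already established in Theorem~\ref{thm:lll_c}.
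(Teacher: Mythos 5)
Your proof is correct and takes essentially the same route as the paper: the corollary is obtained by directly applying Theorem~\ref{thm:lll_c} to a regular digraph, where the common out-degree and in-degree play the roles of $d$ and $\delta$. In fact your write-up is more careful than the paper's one-line justification, which only checks the condition for degree exactly $9$; your point that for degree $k>9$ one must invoke the theorem with $d=k$ (rather than trimming edges down to out-degree $9$, which would leave in-degrees at $k$ against the fixed threshold $\frac{2^{8}-e}{9e}\approx 10.35$) is exactly the bookkeeping needed to cover the ``at least $9$'' form of the statement.
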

    
Can one further improve this bound for vertex-transitive digraphs: is $d=9$ tight?
    We conjecture that $d=3$ (which is clearly a lower bound,
    as witnessed by the triangle, see Figure~\ref{fig:K_3}),
    is the tight bound for vertex-transitive digraphs.
    In Section~\ref{sec:vertex-transitive_graphs},
    we prove some partial results towards proving this conjecture:
    
\begin{proposition}\label{pro:vertex_transitive_graph_prime}
    If the number of vertices in a vertex-transitive digraph
    with degree at least $3$ is prime
    then each pair of vertices is separable.
\end{proposition}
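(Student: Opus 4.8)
The plan is to exploit the interplay between vertex-transitivity and the primality of $n := |V|$ to endow $G$ with a cyclic (circulant) structure, and then to translate a single non-trivial friendly partition around this cycle; the primality guarantees that these translates separate every pair.

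First I would show that $G$ is a \emph{circulant} digraph. Since $G$ is vertex-transitive, its automorphism group acts transitively on the $n$ vertices, so by the orbit--stabilizer theorem $n$ divides $|\mathrm{Aut}(G)|$. As $n$ is prime, Cauchy's theorem yields an automorphism $\sigma$ of order exactly $n$. Viewing $\sigma$ as a permutation of the $n$ vertices, its order being the prime $n$ forces every cycle length to divide $n$, hence to equal $1$ or $n$; since there are only $n$ vertices, $\sigma$ must be a single $n$-cycle. Thus $\langle\sigma\rangle$ acts regularly on $V$, and I may identify $V$ with $\mathbb{Z}_n$ so that $\sigma$ is the map $x\mapsto x+1 \pmod n$.

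Next I would produce one non-trivial friendly partition to translate. Since the minimum out-degree is at least $3$, Theorem~\ref{thm:thomassen} gives two disjoint cycles, which are friendly sets, so Lemma~\ref{lem:extending_friendly_sets} extends them to a friendly partition $\{V_1,V_2\}$ with $V_1,V_2\neq\emptyset$. Let $f\colon\mathbb{Z}_n\to\{0,1\}$ be the indicator of $V_1$; non-triviality means $f$ is not constant. Because $\sigma$ is an automorphism, each shifted pair $\{\sigma^i(V_1),\sigma^i(V_2)\}$ is again a friendly partition, and the indicator of $\sigma^i(V_1)$ is the shifted function $x\mapsto f(x-i)$. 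Now fix distinct $u,v\in\mathbb{Z}_n$ and suppose for contradiction that no translate separates them; then for every $i$ the two vertices lie on the same side, i.e.\ $f(u-i)=f(v-i)$ for all $i\in\mathbb{Z}_n$. Reindexing by $j=u-i$ gives $f(j)=f(j+t)$ for all $j$, where $t:=v-u\neq0$. Hence $f$ is invariant under the shift by $t$; but since $t\neq0$ and $n$ is prime, $t$ generates $\mathbb{Z}_n$, so invariance under $+t$ propagates to invariance under every shift, forcing $f$ to be constant and contradicting the non-triviality of $\{V_1,V_2\}$. Therefore some translate $\{\sigma^i(V_1),\sigma^i(V_2)\}$ separates $u$ and $v$.

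I expect the main step to be the reduction to the circulant case via Cauchy's theorem; once the regular cyclic action is in hand, the separation argument is a short invariance observation, with primality entering exactly to guarantee that a single nonzero shift generates all of $\mathbb{Z}_n$. The one point to handle carefully is the deduction that an order-$n$ automorphism is a genuine $n$-cycle rather than a product of smaller cycles, which is precisely where primality of $n$ is used a second time.
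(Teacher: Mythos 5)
Your proof is correct, but it takes a genuinely different route from the paper's. The paper never builds a cyclic structure on the vertex set: it works with the unseparability relation $\sim$ and observes that, by vertex-transitivity, automorphisms permute the $\sim$-classes, so all classes have the same size; since the classes partition the $n$ vertices and $n$ is prime, each class has size $1$ or $n$, and size $n$ is ruled out because Theorem~\ref{thm:thomassen} together with Lemma~\ref{lem:extending_friendly_sets} supplies one non-trivial friendly partition. Your argument instead invokes orbit--stabilizer plus Cauchy's theorem to find an automorphism of order $n$, deduces it is an $n$-cycle (essentially Turner's theorem that a vertex-transitive graph of prime order is a circulant), and then runs a shift-invariance argument: if no translate of a fixed non-trivial partition separated $u$ from $v$, the indicator of one part would be invariant under a nonzero shift, hence constant since that shift generates $\mathbb{Z}_n$. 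Both proofs consume primality and the same two external ingredients (Thomassen plus the extension lemma), but the paper's counting argument is shorter and needs no group theory beyond the definition of an automorphism, whereas yours is more constructive: it produces an explicit family of separating partitions, and in fact the same invariance argument shows all $n$ translates are pairwise distinct, so you get $n$ distinct friendly partitions as a byproduct --- a quantitative bonus the paper's proof does not yield.
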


\begin{proposition}\label{pro:independent_sets}
    Let $\sim$ denote the following equivalence relation
    on the set of vertices of a digraph $G$.
    \begin{center}
        ``$u\sim v\quad \iff\quad $   $u$ cannot be separated from $v$
        by a friendly partition.''
    \end{center}
    Then, if $G$ is vertex-transitive with degree $d\geq 3$
    then each equivalence class of $\sim$ is an independent set. 
    Moreover, each vertex has its out-neighbors in at least $3$ different classes.
\end{proposition}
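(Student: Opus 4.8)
The plan is to package both assertions into a single task—separating two vertices that lie in a common $\sim$-class—and then to exhibit such a separation by producing two vertex-disjoint cycles and appealing to Lemma~\ref{lem:extending_friendly_sets}. I would begin with the structural preliminaries. First, $\sim$ is genuinely an equivalence relation: reflexivity and symmetry are immediate, and transitivity follows because if a friendly partition $V=V_1\sqcup V_2$ separated $u$ from $w$ while $u\sim v$ and $v\sim w$, then wherever $v$ falls it would separate either $u,v$ or $v,w$. Next, since an automorphism $\tau$ maps friendly partitions to friendly partitions, a partition separates $a,b$ if and only if its $\tau$-image separates $\tau a,\tau b$; hence $a\sim b\iff \tau a\sim \tau b$, so $\mathrm{Aut}(G)$ permutes the classes. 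As $G$ is vertex-transitive, $\mathrm{Aut}(G)$ is transitive on the classes (so they all have equal size) and the setwise stabilizer of a class $C$ acts transitively on $C$; in particular the induced subdigraph $G[C]$ is itself vertex-transitive, and a vertex-transitive digraph has equal in- and out-degree by the usual degree count. Finally I would reduce to the strongly connected case: the strongly connected components of a vertex-transitive digraph are permuted transitively, so the condensation is a vertex-transitive acyclic digraph and is therefore edgeless, i.e.\ $G$ is a disjoint union of isomorphic strongly connected vertex-transitive digraphs each of out-degree $d$. Since each component is strongly connected, every vertex lies on a cycle within its component, so two vertices in different components lie on disjoint cycles and are separable by Lemma~\ref{lem:extending_friendly_sets}; hence every class lives inside one component and I may assume $G$ is strongly connected.

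For independence, suppose a class $C$ contained an internal edge. Then $G[C]$ is a vertex-transitive digraph with minimum out-degree at least $1$, so it contains a directed cycle. If I can find two vertex-disjoint cycles meeting $C$ in two distinct vertices $x\ne y$, then these are disjoint friendly sets, Lemma~\ref{lem:extending_friendly_sets} extends them to a friendly partition separating $x$ and $y$, and since $x,y\in C$ are $\sim$-equivalent this is the desired contradiction. When the internal out-degree of $G[C]$ is at least $3$, Thomassen's theorem (Theorem~\ref{thm:thomassen}) applied to $G[C]$ immediately yields two disjoint cycles inside $C$. The remaining possibilities are that $G[C]$ has internal out-degree $1$ (forcing, by vertex-transitivity, either two disjoint cycles or a single Hamiltonian cycle on $C$) or $2$; in these cases I would use that every vertex of $C$ has at least $d-2\ge 1$ out-edges (and symmetrically in-edges) leaving $C$, together with the strong connectivity of $G$, to route two ``ear'' cycles that re-enter $C$ at distinct vertices.

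For the ``moreover'' part I would first invoke independence: the out-neighbors of any vertex $u$ avoid $u$'s own class. Suppose toward a contradiction that they occupied at most $2$ classes; since $u$ has at least $3$ out-neighbors, two of them, say $x$ and $y$, lie in a common class $A$, with $u\to x$ and $u\to y$. Exactly as above, separating $x$ from $y$ would contradict $x\sim y$, and here the shared in-neighbor $u$ gives an extra handle for building two disjoint cycles through $x$ and $y$: strong connectivity supplies returning paths from $x$ and from $y$, and the task is to select them vertex-disjoint.

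The main obstacle is precisely this cycle-construction step: producing two vertex-disjoint cycles through two prescribed vertices of a common class. A clean \emph{unconditional} statement of this form for arbitrary pairs would resolve the full separability conjecture (Question~\ref{q:sep}), so the argument must genuinely exploit the extra local structure available here—the internal edge in the independence case, and the pigeonhole-forced pair sharing an in-neighbor in the ``moreover'' case—as well as the high strong vertex-connectivity that vertex-transitive strongly connected digraphs enjoy relative to their degree. I expect the low-internal-degree subcases of independence, where $G[C]$ is a single Hamiltonian cycle on $C$, to require the most care, since there the two disjoint cycles cannot be found inside $C$ and must be assembled from out-of-class detours.
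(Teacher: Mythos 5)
Your structural preliminaries (automorphisms permute the $\sim$-classes, all classes have equal size, $G[C]$ is vertex-transitive, reduction to the strongly connected case) are sound, and your treatment of the cases where $G[C]$ has internal out-degree $\geq 3$ (Thomassen inside the class) or $2$ (one cycle inside $C$ avoiding a chosen vertex, one cycle reached from that vertex through a neighboring class) matches the paper's Lemma~\ref{lem:eqv_cycle} in substance. But there is a genuine gap exactly where you flag that ``the most care'' is needed: the case where $G[C]$ has internal out-degree $1$, i.e.\ every class is a single directed cycle. Your plan there --- use the $d-1\geq 2$ out-edges leaving each class plus strong connectivity to ``route two ear cycles that re-enter $C$ at distinct vertices'' --- is not an argument, and strong connectivity alone cannot complete it. The reason: after ruling out the easy subcase where some class sends edges into two different classes (which you could handle as in your degree-$2$ case), every out-edge from a class $K$ goes to $K$ or to a single successor class, so the quotient digraph on classes is a union of cycles, and any cycle of $G$ meeting $K$ must wind through \emph{every} class of the quotient cycle. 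Producing two vertex-disjoint such transversal cycles is precisely the hard combinatorial content. The paper does not do it by routing paths: its Lemma~\ref{lem:eqv_cannot_be_a_cycle} runs a minimal-counterexample induction, applying Hall's marriage theorem to the $(d-1)$-regular bipartite graph of edges between two consecutive classes $K\to L$ to extract a perfect matching, contracting that matching to obtain a smaller configuration whose classes are still cycles, and handling the base case (quotient cycle of length $2$) by invoking Theorem~\ref{thm:more_partitions}, the existence of a second friendly partition. None of these ingredients appears in your proposal, so the independence claim remains unproven.

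The ``moreover'' part inherits the same problem: you again defer to an unproven construction of two disjoint cycles through the prescribed vertices $x,y$ sharing the in-neighbor $u$. The paper avoids any such construction with a different tool, Lemma~\ref{lem:special_partition_imlies_singleton} and its Corollary~\ref{cor:special_partition_imlies_separability}: if some friendly partition has a vertex with at least two out-neighbors in its own part and at least one in the other part, then a ``critical in-neighbor'' chase yields a singleton class, hence (by equal class sizes) all classes are singletons. Granting independence, if the out-neighbors of $v$ lay in only two classes $K$ (containing $x\neq y$) and $L$, one takes a friendly partition separating $v$ from a vertex of $L$, observes that $K$ must lie in $v$'s part and $L$ in the other, and applies the corollary to contradict $x\sim y$. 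As you yourself note, a general two-disjoint-cycles statement of the kind you need would essentially resolve Question~\ref{q:sep}; so either you must prove it in the special cycle-class setting (the actual crux, left open in your proposal), or you should restructure along the paper's lines.
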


Curiously, Proposition~\ref{pro:independent_sets} implies that if there are two vertices
    which cannot be separable in a vertex transitive digraph of out-degree $d\geq 3$
    then there is no edge between them, which seems somewhat counter-intuitive.

\paragraph{Infinite Digraphs.}
It is also interesting to explore the properties of friendly partitions
    in infinite digraphs:
    in the last section (Section~\ref{sec:infinite_graphs}),
    we generalize the existence of friendly partitions to infinite digraphs
    with minimum out-degree 3.
    
\begin{proposition}\label{pro:infinite_graphs}
    Let $G$ be a (possibly infinite) digraph with minimum out-degree at least $3$.
    Then, there exists a friendly partition in $G$.
\end{proposition}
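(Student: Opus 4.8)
The plan is to mirror the finite argument (Theorem~\ref{thm:thomassen} together with Lemma~\ref{lem:extending_friendly_sets}) while replacing ``disjoint cycles'' by ``disjoint infinite friendly sets.'' First I would reduce to the case where every out-degree is exactly $3$: if some vertex has larger (possibly infinite) out-degree, select three of its out-neighbours and delete the remaining out-edges; this keeps the minimum out-degree at $3$, can only decrease in-degrees, and any friendly partition of the sparser digraph is still friendly in $G$, since the witnessing same-part out-edge is also an edge of $G$. Next I would observe that Lemma~\ref{lem:extending_friendly_sets} holds verbatim for infinite digraphs: its proof assigns to one part every vertex that can reach the ``friendly seed'' $U$ along a path avoiding $W$, and the verification that both parts are friendly uses nothing but minimum out-degree $\geq 1$ and this reachability definition, with no appeal to finiteness. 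Consequently it suffices to exhibit two disjoint nonempty friendly sets $U,W$; feeding them into the (infinite) Lemma then yields a friendly partition whose two parts contain $U$ and $W$ respectively, and in particular this partition is nontrivial.

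The genuinely new point is that infinite digraphs need not contain any cycle at all: for instance $\N$ with $n\to n+1,n+2,n+3$ has minimum out-degree $3$ yet is acyclic, so the two disjoint friendly sets cannot be taken to be cycles and will in general be infinite rays. The basic building block is still available, though: starting from any vertex and repeatedly following an out-edge produces a trajectory that is either eventually periodic (yielding a finite cycle) or an infinite simple ray, and in either case the vertex set of the terminal cycle or of the ray is friendly. The whole difficulty is thus concentrated in producing \emph{two disjoint} such sets.

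To do this I would use a compactness argument. Equip the set of maps $c\colon V\to\{1,2,\ast\}$ with the product topology, which is compact by Tychonoff; here $c^{-1}(1)$ and $c^{-1}(2)$ are the two candidate friendly sets and $\ast$ marks unassigned vertices. For each vertex $v$ let $F_v$ be the (clopen) set of colourings in which, whenever $c(v)\in\{1,2\}$, the vertex $v$ has an out-neighbour of the same colour; and fix two distinct vertices $p,q$ by intersecting with the clopen sets $\{c(p)=1\}$ and $\{c(q)=2\}$ to force nontriviality. A point of $\bigcap_v F_v\cap\{c(p)=1\}\cap\{c(q)=2\}$ is exactly a pair of disjoint nonempty friendly sets, so by compactness it is enough to verify the finite intersection property: for every finite set $F_0$ of vertices one must produce a colouring (with almost all vertices $\ast$) so that $c(p)=1$, $c(q)=2$, and every $v\in F_0$ lying in class $1$ or $2$ has a same-class out-neighbour. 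This I would do by growing two ``witness chains'' from $p$ and from $q$: each chain repeatedly picks an out-neighbour of a still-unsatisfied vertex and is terminated as soon as it reaches a vertex outside $F_0$ (which carries no obligation) or closes a cycle.

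The main obstacle is exactly this finite intersection property, and more precisely keeping the two witness chains \emph{disjoint}. The danger is that a vertex of the class-$1$ chain could have all of its out-neighbours already committed to class $2$, leaving no legal way to satisfy it. This is precisely the step that forces the hypothesis $d\geq 3$ rather than $d\geq 2$ (recall that for $d=2$ even finite digraphs such as the triangle in Figure~\ref{fig:K_3} admit only the trivial partition): with three out-neighbours at every vertex there is always enough room either to reroute one chain around the finitely many vertices used by the other, or to let an out-neighbour that already lies in the same class discharge the obligation, or to close a cycle. Making this rerouting fully rigorous---ideally packaged as a short self-contained lemma, ``every digraph of minimum out-degree $3$ contains two disjoint friendly sets''---is the crux of the proof; once it is in place, compactness and the infinite form of Lemma~\ref{lem:extending_friendly_sets} finish the argument. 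I expect the strongly connected case, where all forward-reachable sets coincide so that the two chains cannot simply be separated by reachability, to be where this bookkeeping is most delicate.
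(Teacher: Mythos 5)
Your reduction of the problem to ``find two disjoint nonempty friendly sets, then apply an infinite version of Lemma~\ref{lem:extending_friendly_sets}'' is sound and matches the paper's strategy, and the Tychonoff setup with clopen constraint sets $F_v$ is technically fine. The genuine gap is in the finite intersection property, and it is not a bookkeeping issue: by pinning two \emph{arbitrary} vertices $p\mapsto 1$, $q\mapsto 2$, you have turned the existence problem into a \emph{separation} problem for the pair $\{p,q\}$, which is precisely the paper's central open question (Question~\ref{q:sep}). Concretely, suppose $G$ contains a finite closed component $H$ (no edges leave $H$, all out-degrees inside $H$ at least $3$) together with an infinite part, and take $F_0=V(H)$ with $p,q\in H$. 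Every out-neighbour of a vertex of $F_0$ lies again in $F_0$, so a colouring satisfying your constraints restricts to two disjoint sets $U\ni p$, $W\ni q$ inside $H$, each friendly in $H$; by Lemma~\ref{lem:extending_friendly_sets} this is equivalent to $p$ and $q$ being separable in $H$. Whether every pair of vertices in every finite digraph of minimum out-degree $3$ is separable is exactly Question~\ref{q:sep} with $d=3$, $s=2$, which the paper leaves open---it obtains separability only under much stronger hypotheses ($d\geq 15$ for a single separable vertex, bounded in-degree via the local lemma, vertex-transitivity with prime order). So your proposed crux lemma---that the two witness chains can always be rerouted around each other once $d\geq 3$---is not a short self-contained argument: proving it would resolve the paper's main open problem, and if any finite out-degree-$3$ digraph contains an inseparable pair, your finite intersection property is simply false for that pair. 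This is presumably why the authors explicitly remark that they could not find a reduction to the finite case via compactness.

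The paper sidesteps this trap by never pinning named vertices into prescribed parts. It fixes two vertices $u,v$ only as \emph{sources}: either some vertex has a finite reachable set, and Thomassen's theorem (Theorem~\ref{thm:thomassen}) supplies two disjoint cycles inside it; or else a Menger dichotomy applies---either some single vertex $t$ separates $\{u,v\}$ from all sufficiently distant vertices, in which case deleting $t$ leaves a finite piece containing a cycle and an infinite piece containing a cycle or an infinite ray (Lemma~\ref{lem:infinite_path}), or for every $k$ there are two disjoint paths from $u,v$ to the distance-$k$ set, in which case a K\H{o}nig-type pigeonhole argument yields two disjoint infinite rays. In every case the two disjoint friendly sets are found wherever they happen to lie, not around prescribed vertices. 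If you wish to keep the compactness frame, you would first need to identify a pair $p,q$ for which the finite intersection property is actually provable; but exhibiting such a pair essentially amounts to exhibiting the two disjoint friendly structures directly, at which point the compactness step becomes redundant.
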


Perhaps surprisingly, the proof we found for Proposition~\ref{pro:infinite_graphs}
    is more complex then one might expect. 
    In particular, we could not find a reduction to the finite
    case using standard arguments such as compactness.

\subsection{Organization}
The article is organized as follows. In Section~\ref{sec:existence_of_a_separable_vertex},
    we prove our main result.
    That is, in each digraph of minimum out-degree at least 15
    there is a vertex separable from each other vertices
    (Theorem~\ref{thm:separable_vertex}).

In Section~\ref{sec:number_of_friendly_partition},
    we prove Theorem~\ref{thm:dichotomy} and Theorem~\ref{thm:separability_for_d+1}
    connecting the function $t(d,n)$ and separability of $k$-tuples.
    In this section we also prove the existence of at least 2 friendly partitions
    in each digraph of minimum out-degree $3$ (Theorem~\ref{thm:more_partitions}).

In Section~\ref{sec:strongly_connected_graphs}, we prove Proposition~\ref{pro:strongly_connected} showing 
that it suffices to only consider strongly regular graphs to show that any digraph with sufficiently large out-degrees,
satisfies that each pair of vertices in it is separable.

In Section~\ref{sec:vertex-transitive_graphs},
    we prove our partial result for vertex transitive digraphs.
    Namely, if there is a pair of unseparable vertices
    in vertex transitive digraph of minimum out-degree 3,
    then the number of its vertices is not a prime number
    (Proposition~\ref{pro:vertex_transitive_graph_prime})
    and such unseparable vertices are not connected by an oriented edge
    (Proposition~\ref{pro:independent_sets}).

In the last section, Section~\ref{sec:infinite_graphs},
    we prove the existence of friendly partition
    for infinite digraphs with minimum out-degree at least $3$
    (Proposition~\ref{pro:infinite_graphs}).

\section{Existence of a Separable Vertex}~\label{sec:existence_of_a_separable_vertex}%
In this section we prove Theorem~\ref{thm:separable_vertex}.
    We begin by recalling a useful definition introduced by Thomassen
    (\cite{Thomassen83}).
    
\begin{definition}
    Let $G=(V,E)$ be a digraph. An edge $u\to v\in E$ is called \underline{dominated}
    if $u,v$ have a common in-neighbor. (See Figure~\ref{fig:dominated edge}.)
\end{definition}

\begin{figure}
    \centering
    \includegraphics{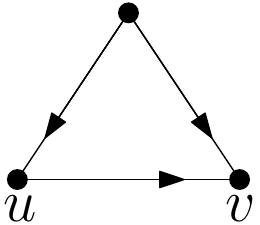}
    \caption{\small A dominated edge $uv$.}
    \label{fig:dominated edge}
\end{figure}

Thomassen used a process which, given a digraph $G=(V,E)$, produces
    a ``compressed'' digraph $G'=(V',E')$ such that every edge $e'\in E'$
    is either dominated, or a part of a 2-cycle.
    The digraph $G=(V',E')$ is produced as follows:
    as long as there exists an edge $u\to v$ which is not dominated
    nor a part of 2-cycle, 
    delete all the edges going out from $u$ and identify vertices $u$ and $v$.
    Let us call this procedure \emph{contraction of non-dominated edge}.
    See Figure~\ref{fig:contracting}.
    Crucially, notice that every friendly partition of $G'$ is naturally ``decompressed'' to
    a friendly partition of $G$ by putting identified vertices in the same part. 
\begin{figure}
    \centering
    \includegraphics[scale=.8]{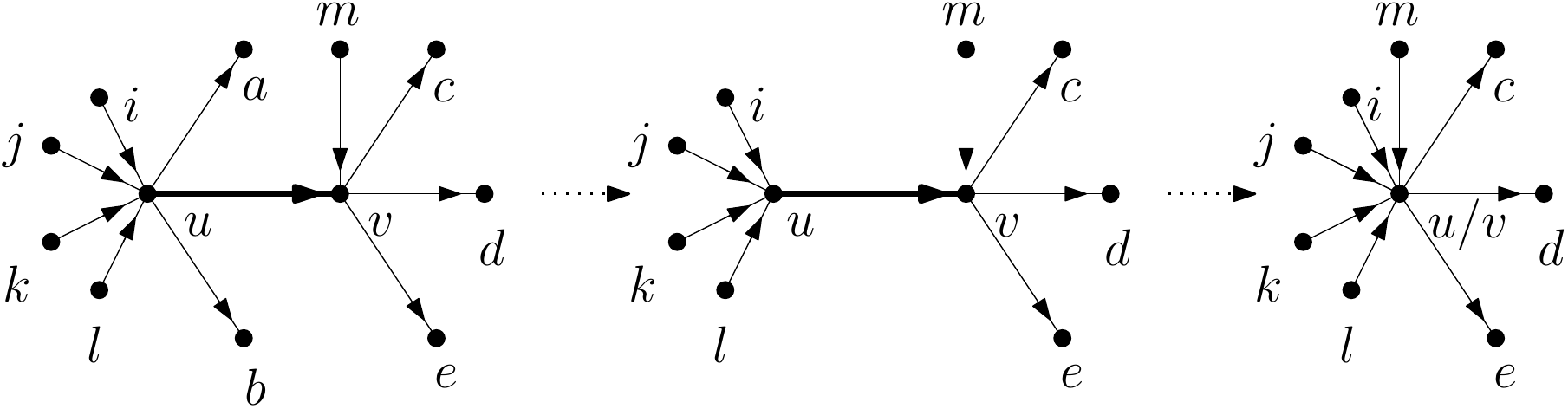}
    \caption{\small A picture showing how to contract a non-dominated edge $uv$
    which is not part of a two-cycle while preserving the minimum out-degree in the graph.}
        \label{fig:contracting}
\end{figure}
The compressed digraph $G'$ satisfies the following useful property.

\begin{lemma}[\cite{Thomassen83}]\label{lem:thomassen_inneighborhood_cycle}
    Let $G=(V,E)$ be a digraph such that each edge in it
    is either dominated or a part of a 2-cycle,
    and let $v\in V$ be a vertex which is not part of a 2-cycle.
    Then, there exists a cycle in the in-neighborhood of $v$.
\end{lemma}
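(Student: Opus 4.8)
The plan is to examine the in-neighbourhood $N^-(v) = \{w \in V : w \to v \in E\}$ of $v$ and to show that the subdigraph induced on it has minimum in-degree at least $1$, from which a cycle follows immediately. First I would observe that no in-edge of $v$ can be part of a $2$-cycle: if $w \to v$ belonged to a $2$-cycle, then $v \to w$ would also be an edge, making $v$ itself part of a $2$-cycle, contrary to the hypothesis on $v$. Hence, by the assumption that every edge of $G$ is either dominated or part of a $2$-cycle, every edge $w \to v$ with $w \in N^-(v)$ must be \emph{dominated}.

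Next I would unwind the definition of ``dominated''. For each $w \in N^-(v)$, the fact that $w \to v$ is dominated means that $w$ and $v$ share a common in-neighbour $x$, that is, $x \to w$ and $x \to v$. The second relation gives $x \in N^-(v)$, while the first exhibits an edge from $x$ to $w$ lying entirely inside $N^-(v)$. Since $w$ was arbitrary, every vertex of the induced subdigraph $G[N^-(v)]$ has an in-neighbour within $N^-(v)$; in other words, $G[N^-(v)]$ has minimum in-degree at least $1$.

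Finally I would invoke the standard fact that a finite digraph in which every vertex has in-degree at least $1$ contains a directed cycle: starting at any vertex and repeatedly following an incoming edge backwards produces a walk, and finiteness forces some vertex to repeat, so the portion of the walk between two of its occurrences is a directed cycle. Applying this to $G[N^-(v)]$ yields a cycle all of whose vertices lie in $N^-(v)$, as required. One should also check that $N^-(v)$ is non-empty, but this is automatic whenever $v$ has an out-neighbour: an out-edge $v \to u$ is either part of a $2$-cycle (forcing $u \to v$) or dominated (forcing a common in-neighbour $x$ with $x \to v$), and in both cases $v$ acquires an in-neighbour. There is no serious obstacle here: the only points requiring care are the short argument that the in-edges of $v$ are dominated rather than part of $2$-cycles, and the routine extraction of a cycle from minimum in-degree $1$; the crux is simply translating ``dominated'' into ``$w$ has an in-neighbour inside $N^-(v)$''.
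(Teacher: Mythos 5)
Your proof is correct and is essentially the paper's own argument: the paper likewise notes that since $v$ lies on no $2$-cycle, every edge $u\to v$ must be dominated, so every vertex of the subdigraph induced by $N^-(v)$ has positive in-degree there, forcing a cycle. Your write-up merely spells out the details (including the non-emptiness of $N^-(v)$) that the paper leaves implicit.
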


Indeed, this lemma follows because every vertex $u$ in the digraph
    induced by the in-neighborhood of $v$ has a positive in-degree
    (because the edge $u\to v$ is dominated).

Another theorem due to Thomassen which will be useful in our proof is the following.
\begin{theorem}[\cite{Thomassen83}]\label{thm:thomassen_3cycles}
    Each digraph with minimum out-degree 15 contains three disjoint cycles.
\end{theorem}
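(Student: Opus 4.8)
The plan is to deduce the three–disjoint–cycle statement from the two–disjoint–cycle result (Theorem~\ref{thm:thomassen}) by extracting one extra cycle and recursing on what remains. Concretely, if we can find a single cycle $C$ whose deletion leaves a digraph of minimum out-degree at least $3$, then Theorem~\ref{thm:thomassen} supplies two more cycles inside $G-V(C)$, and together with $C$ these form three pairwise disjoint cycles. Since deleting the vertices of $C$ lowers the out-degree of any surviving vertex by at most $|V(C)|$, a budget of $15$ leaves room to lose up to $12$ and still retain out-degree $3$; the constant $15=12+3$ in the statement is exactly this bookkeeping. The whole difficulty is therefore concentrated in producing a cycle $C$ that can be safely removed.

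First I would reduce to the compressed setting already used above. Running the \emph{contraction of non-dominated edges} procedure produces a digraph $G'$ in which every edge is either dominated or lies on a $2$-cycle, and the minimum out-degree is preserved (the merged vertex inherits the out-neighbourhood of its ``head''). Just as friendly partitions decompress, cycles lift back: a cycle of $G'$ through a merged vertex corresponds to a walk in $G$ that enters the contracted pair through an in-edge and leaves through an out-edge of the head, and hence closes up to a cycle of $G$; distinct merged vertices have disjoint preimages, so disjoint cycles in $G'$ lift to disjoint cycles in $G$. It therefore suffices to find three disjoint cycles when every edge of $G$ is dominated or on a $2$-cycle.

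Now split into two cases. If $G$ contains a $2$-cycle on vertices $\{a,b\}$, delete both: the remaining digraph has minimum out-degree at least $15-2=13\ge 3$, so Theorem~\ref{thm:thomassen} yields two disjoint cycles there, and adjoining the $2$-cycle $\{a,b\}$ gives three. The hard case is when $G$ has no $2$-cycle, so that \emph{every} edge is dominated. Here Lemma~\ref{lem:thomassen_inneighborhood_cycle} applies to every vertex $v$ and guarantees a cycle lying entirely inside the in-neighbourhood $N^-(v)$; such a cycle avoids $v$ and, as $G$ has no $2$-cycle, also avoids $N^+(v)$. The goal is again to pick one such cycle $C$ whose removal keeps the minimum out-degree at $\ge 3$ and then finish with Theorem~\ref{thm:thomassen}.

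The main obstacle is that one cannot simply take $C$ short: large minimum out-degree does not force small girth. For instance, the circulant digraph on $\Z_n$ with $i\to i+1,\dots,i+15$ has minimum out-degree $15$ yet girth growing like $n/15$, so every cycle is long while the graph nonetheless contains many disjoint cycles. Hence the removable cycle must be found through the domination structure rather than by a girth bound. I would follow Thomassen's technique: start from an in-neighbourhood cycle $C$ and argue by minimality on a counterexample of fewest vertices; if some surviving vertex has too many out-neighbours on $C$, use a dominating common in-neighbour to reroute $C$, thereby reducing its interaction with the rest of the graph, and iterate until the deletion of the resulting cycle provably drops each out-degree by at most $12$. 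This rerouting bookkeeping — making the local surgery terminate while keeping the out-degree loss within the budget — is the delicate, error-prone part of the argument, and it is precisely where the explicit constant $15$ (rather than the conjectured $5$) is paid.
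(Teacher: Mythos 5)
First, note that the paper does not prove this statement at all: it is imported verbatim as Thomassen's theorem from \cite{Thomassen83}, so there is no in-paper argument to compare against. Your attempt must therefore stand on its own as a proof, and it does not.

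The genuine gap is the one you yourself flag and then defer: in the hard case (no $2$-cycle, every edge dominated), you never actually produce a cycle $C$ whose deletion leaves minimum out-degree at least $3$. Your budget arithmetic ``$15 = 12 + 3$'' silently assumes that every surviving vertex has at most $12$ out-neighbours on $C$, but an in-neighbourhood cycle lives inside $N^-(v)$, whose size is unbounded (in-degrees are not controlled), so $C$ can be long and a surviving vertex may have \emph{all} of its out-neighbours on $C$. The proposed remedy --- ``use a dominating common in-neighbour to reroute $C$ \dots and iterate'' --- is not an argument: the surgery is never defined, there is no invariant showing the rerouting makes progress or terminates, and no proof that the final cycle meets the degree-loss budget. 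Likewise, the ``minimal counterexample'' framing is announced but never used to derive a contradiction. What you have correctly established is the easy periphery (the reduction via contraction of non-dominated edges, the lifting of disjoint cycles through contractions, and the case where a $2$-cycle exists); the entire content of Thomassen's theorem is concentrated in the step you leave as a sketch, so the proposal is an outline with the core missing rather than a proof. It is also worth saying that there is no reason to believe the ``delete one safe cycle and apply the two-cycle theorem'' scheme can be pushed through at all: if it could, one would expect a much better constant than $15$, and Thomassen's actual argument for three disjoint cycles does not take this route.
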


Now, we continue with three lemmas which,
    combined with Theorem~\ref{thm:thomassen_3cycles},
    imply Theorem~\ref{thm:separable_vertex}. 
\begin{lemma}\label{lem:two_intersecting_cycles}
    If $G$ is a digraph with minimum out-degree at least 2
    then it contains two intersecting cycles.
\end{lemma}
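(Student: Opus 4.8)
The plan is to reduce to a strongly connected digraph and then exploit the branching forced by out-degree at least $2$. First I would pass to a \emph{terminal} strongly connected component $H$ of $G$: consider the condensation of $G$ (its directed acyclic graph of strongly connected components) and pick a component $H$ that is a sink, i.e.\ one with no outgoing edges in the condensation. Since every vertex of $G$ has out-degree at least $2$ and no edge leaves $H$, every vertex of $H$ has all of its out-neighbors inside $H$; hence the subdigraph induced on $H$ is strongly connected and still has minimum out-degree at least $2$. Any two cycles found inside $H$ are cycles of $G$, so it suffices to prove the statement for the strongly connected digraph $H$.

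Within $H$ the key step is almost immediate. Fix any vertex $v \in H$. By the out-degree assumption, $v$ has two distinct out-neighbors $a \neq b$ (both different from $v$, as we assume the digraph is loopless). Because $H$ is strongly connected, there is a directed path from $a$ back to $v$ and a directed path from $b$ back to $v$; taking these paths to be simple and prepending the edges $v \to a$ and $v \to b$ respectively yields two cycles $C_a$ and $C_b$, each passing through $v$. They are distinct because they leave $v$ along different edges, and they intersect since both contain $v$. This produces the desired pair of intersecting cycles.

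I do not expect a serious obstacle here; the only points requiring care are the reduction to a sink component (to guarantee that the out-degree is preserved inside $H$) and the verification that $C_a, C_b$ are genuine simple cycles and genuinely distinct, which follows from looplessness together with $a \neq b$. As an alternative to the strong-connectivity argument, one could proceed directly: take any cycle $C$ (which exists since the minimum out-degree is at least $1$) and use the second out-edge at each vertex of $C$; if some such edge is a chord of $C$ it closes a strictly shorter cycle sharing vertices with $C$, and otherwise one leaves $C$ at every vertex, a situation that the sink-component reduction precludes. I would present the strongly connected version as the main line, since it is the shortest and cleanest.
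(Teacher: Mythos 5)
Your proof is correct, but it takes a genuinely different route from the paper. The paper's argument takes a \emph{maximal family of disjoint cycles} $\mathcal{C}$ (nonempty since the out-degree is at least $2$, hence at least $1$, forces a cycle), deletes all edges of these cycles — which lowers each out-degree by at most one because the cycles are disjoint — and then finds a cycle in the resulting digraph of minimum out-degree at least $1$; by maximality of $\mathcal{C}$ this new cycle must meet some $C_i$, and it is distinct from every $C_i$ since it uses none of their edges. You instead pass to a sink component $H$ of the condensation (so the induced subdigraph keeps minimum out-degree at least $2$ and is strongly connected), and then close the two out-edges $v\to a$, $v\to b$ of a fixed vertex $v$ into two simple cycles via simple return paths from $a$ and from $b$ to $v$; the two cycles are distinct because a simple directed cycle gives $v$ a unique successor, and they intersect at $v$. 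Both arguments are complete and short. Your version yields slightly more local information (two distinct cycles through a \emph{common, explicitly chosen} vertex), while the paper's version introduces the maximal-disjoint-cycle-family technique that it reuses immediately afterwards in the proof of Lemma~\ref{lem:intersecting_cycles_and_cycle_exist}, which is presumably why it is phrased that way. The only points you should make explicit are the standing assumption that digraphs are loopless (so $a,b\neq v$, and a sink component cannot be a single vertex) and that any simple path from $a$ to $v$ meets $v$ only at its endpoint, so prepending $v\to a$ really gives a simple cycle — both of which you essentially noted.
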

\begin{proof}
    Let $\mathcal{C}=\{C_1, \ldots, C_k\}$ be a maximal family of disjoint cycles. 
    Note that $\mathcal{C}\neq\emptyset$ since the minimum out-degree least 2
    and thus, $G$ contains a cycle.
    Now, let us delete all edges of $C_1, \ldots, C_k$
    and denote the resulting digraph as $G^\prime$.
    Since $C_1, \ldots, C_k$ are disjoint,
    the minimum out-degree of $G^\prime$ is at least 1. 
    Therefore, $G^\prime$ contains a cycle
    which intersect at least one cycle from $\mathcal{C}$ by maximality of $\mathcal{C}$.
\end{proof}

\begin{lemma}\label{lem:intersecting_cycles_and_cycle_exist}
    Let $d$ be the minimum integer such that every digraph
    with minimal out-degree $\geq d$
    must contain three disjoint cycles.\footnote{By Theorem~\ref{thm:thomassen_3cycles},
    $d\leq 15$.}
    Then, $G$ contains two intersecting cycles and one cycle disjoint from them.
\end{lemma}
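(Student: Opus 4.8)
The plan is to combine a maximal family of disjoint cycles with a \emph{first-return} analysis, extracting the third, disjoint cycle from the ``slack'' provided by the three-disjoint-cycles hypothesis. Let $G=(V,E)$ have minimum out-degree at least $d$ and fix a maximal family $\mathcal{C}=\{D_1,\dots,D_k\}$ of pairwise disjoint cycles; by the definition of $d$ we have $k\ge 3$. Put $X=V\setminus\bigcup_i V(D_i)$ and call an edge a \emph{family edge} if it lies on some $D_i$. Exactly as in the proof of Lemma~\ref{lem:two_intersecting_cycles}, maximality guarantees that the induced subdigraph $G[X]$ is acyclic, since a cycle inside $X$ could be adjoined to $\mathcal{C}$. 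Deleting the family edges lowers each out-degree by at most one, so every vertex still has at least $d-1\ge 2$ non-family out-edges.

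The generic and easy case is that $G$ has a cycle $C$ that uses a non-family edge and meets at most $k-1$ of the parts. Such a $C$ meets some part $D_a$ and, using a non-family edge, is not equal to $D_a$, so $C$ and $D_a$ are two intersecting cycles; if $C$ avoids $D_b$, then $D_b$ is disjoint from $D_a$ (the family is disjoint) and from $C$, and $(C_1,C_2,C_3)=(C,D_a,D_b)$ is the desired configuration. To produce such a $C$ I would start at a vertex of a part and follow non-family out-edges; since $G[X]$ is acyclic the walk re-enters $\bigcup_i V(D_i)$, and tracking only the sequence of parts visited gives a walk on the $k$ parts. The moment this part-walk returns to an already-visited part we obtain a genuine cycle of $G$, built from non-family edges, meeting exactly the parts traversed in the meantime; if that number is less than $k$ we are done.

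All the difficulty therefore collapses into the \emph{rigid} case, where the part-walk never returns early, i.e.\ the induced digraph on the $k$ parts has minimum out-degree at least $1$ and shortest cycle of length $k$, and hence is a single directed cycle $D_1\to D_2\to\cdots\to D_k\to D_1$; concretely, all non-family travel out of $D_i$ (possibly threading through $X$) lands only in $D_{i+1}$. This is the main obstacle, and the observation that unlocks it is that the parts must be \emph{wide}. Indeed, consider a sink $u$ of the acyclic digraph $G[X]$ (or, if $X=\emptyset$, any vertex of a part): all of its at least $d-1$ non-family out-edges go directly into a single next part, and since $G$ is simple they hit at least $d-1$ distinct vertices there. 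Hence each part has at least $d-1$ vertices, ruling out any single-vertex ``gateway'' between consecutive parts. Since moreover $d\ge 4$ (e.g.\ the complete digraph on $4$ vertices has out-degree $3$ yet too few vertices for three disjoint cycles), each part in fact contains at least $3$ vertices.

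With the parts this wide I would finish by routing three \emph{thin} transversal cycles around the chain: two cycles $C_1,C_2$ using a single vertex per part, coinciding at one chosen vertex of $D_1$ but otherwise vertex-disjoint, together with a third transversal $C_3$ using, in each part, a vertex untouched by $C_1\cup C_2$. Such continuations always exist because each vertex sends at least $d-1\ge 3$ edges into the wide next part, more than the at most two already spent by $C_1$ and $C_2$; then $C_1,C_2$ intersect while $C_3$ is disjoint from both. I expect the only genuinely delicate points to be the width argument via a sink of $G[X]$ and, above all, the explicit \emph{vertex-disjoint} routing in this wide-chain case — in particular ensuring that the auxiliary paths through $X$ used by $C_1,C_2,C_3$ can be chosen disjoint, which I would handle by a Menger/flow-type argument exploiting the width of the gateways. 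Everything else follows mechanically from the first-return template and the maximality of $\mathcal{C}$.
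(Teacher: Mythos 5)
Your strategy is genuinely different from the paper's (which takes a smallest counterexample, contracts non-dominated edges \`a la Thomassen, and works with in-neighbourhood cycles and an auxiliary digraph on a maximal family of disjoint cycles, as in Lemma~\ref{lem:thomassen_inneighborhood_cycle}), and much of your setup is sound or repairable: maximality does make $G[X]$ acyclic, and a cycle containing a non-family edge and meeting at most $k-1$ parts does yield the configuration. Two of your intermediate claims need patching but can be fixed: when the part-walk revisits a part, the vertex walk does \emph{not} close into ``a genuine cycle of $G$'' (it may re-enter at a different vertex of that part; one must splice in an arc of that part-cycle and then extract, from the resulting closed walk, a cycle through the non-family edge), and your ``sink of $G[X]$'' must additionally be chosen reachable from a part for rigidity to force all of its out-edges into a single $D_{i+1}$. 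The genuine gap is the rigid case, which is precisely where the content of the lemma lives: it is not vacuous (take three long disjoint cycles joined cyclically by complete bipartite non-family edges), and in it every cycle of $G$ is either some $D_i$ or transversal, from which one checks that \emph{any} valid triple $C_1,C_2,C_3$ must consist of three transversal cycles --- so your routing problem cannot be sidestepped.

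That routing step is not proved, and the one quantitative claim you offer for it --- ``each vertex sends at least $d-1\ge 3$ edges into the wide next part, more than the at most two already spent'' --- is false: a part vertex's non-family out-edges may go into $X$ rather than directly into $D_{i+1}$, and distinct threads through $X$ can merge, so greedy vertex-disjoint extension fails. Your Menger fallback does give three vertex-disjoint $D_i\to D_{i+1}$ paths per gateway (no separator of size $\le 2$ exists, since every vertex retains an unblocked non-family out-edge and, by rigidity, all travel lands in $D_{i+1}$), but Menger is not where the difficulty sits. A transversal cycle must, inside each part $D_i$, connect its entry vertex to its exit vertex by an arc of the cycle $D_i$, and these connecting arcs for $C_1$, $C_2$, $C_3$ can collide: the entry points (ends of the gateway-$(i-1)$ paths) and exit points (starts of the gateway-$i$ paths) are dictated by the Menger systems and can interleave adversarially on $D_i$ --- already on a $4$-cycle $w\to x\to y\to z\to w$ with entries $\{w,x,y\}$ and exits $\{x,y,z\}$ there is no way to join three entries to three exits by pairwise vertex-disjoint arcs. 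So after Menger you still face a global selection problem (which gateway paths to use in each of the $k$ gateways, and which arcs to take, so that $C_1,C_2$ meet while $C_3$ avoids both), and nothing in your sketch addresses it. This is the missing heart of the argument; the paper's contraction-based proof avoids this obstacle entirely.
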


Before we prove this lemma, note that $d$ must be $>4$ since,
    for example, the complete directed graph on 5 vertices has minimum out-degree $4$
    but cannot contain three disjoint cycles. See Figure~\ref{fig:k5}.
    
\begin{figure}
    \centering
    \includegraphics[scale=.8]{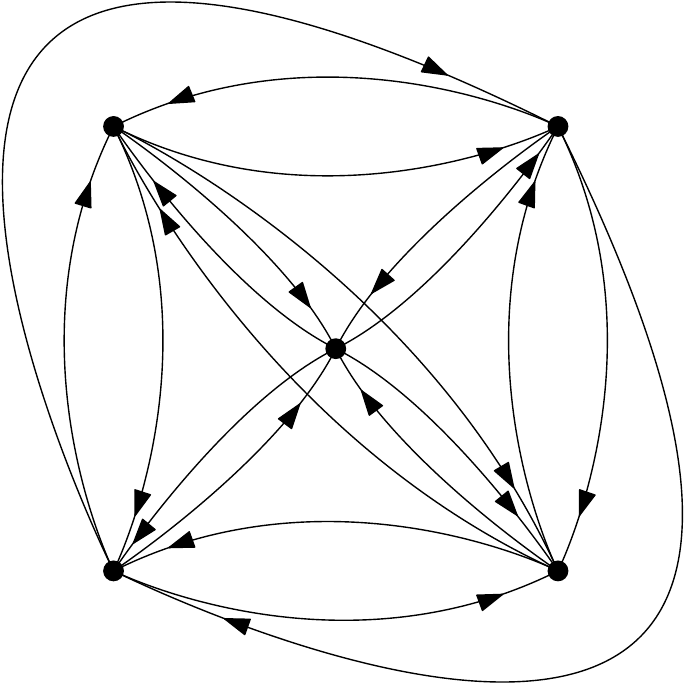}
    \caption{\small The complete directed graph on 5 vertices.}
    \label{fig:k5}
\end{figure}

\begin{proof}[Proof of Lemma~\ref{lem:intersecting_cycles_and_cycle_exist}]
    We prove by contradiction: let $G = (V, E)$ be a smallest counterexample.
    Every edge of $G$ must be dominated or a part of a 2-cycle,
    otherwise we can contract this edge and get a smaller counterexample.%
    \footnote{Indeed, two intersecting cycles and a third cycle disjoint
    from them in the contracted digraph are naturally lifted to three cycles
    with the same properties in the original digraph.}
    If $G$ contains a 2-cycle $C$ then $G$
    without $C$ has minimum out-degree at least 2.
    Therefore, by Lemma~\ref{lem:two_intersecting_cycles}
    it contains two intersecting cycles.
    Such cycles are disjoint from $C$ in $G$;
    a contradiction.
    
Thus, $G$ does not contain a $2$-cycle and each edge in $E$ is dominated.
    Therefore, by Lemma~\ref{lem:thomassen_inneighborhood_cycle}, 
    the in-neighborhood of each vertex $v\in V$ contains a cycle. 

Let $I_v$ denote a cycle in the in-neighborhood of vertex $v$,
    let $\mathcal{C}=\{C_1, \dots , C_k\}$ be a maximal family of disjoint cycles 
    (by the assumption on $d$ we know that $k \ge 3$),
    and let $V_C = \bigcup_{i=1}^k V(C_i)$ be the set of all their vertices.
    By maximality of $\mathcal{C}$,
    for each $v \in V$ the cycle $I_v$ must intersect some $C_i\in\mathcal{C}$.
    Consider the following cases.
\begin{itemize}
    \item Case 1: The in-neighborhood cycle $I_v$ of every vertex $v$ from $V_C$
        is $C_i$ for some $i \in \{1, \dots, k\}$. 
        Define a digraph $H$ whose vertices are cycles $C_i$
        and its edges are all the edges $C_i\to C_j$ such
        that there exists $v \in C_j$ such that $I_v = C_i$.
        By the assumption, every vertex in $H$ has in-degree at least one.
        Thus $H$ has to contain a cycle. If $H$ is a cycle
        then every two neighboring $C_i,C_j$ in it are {\it fully connected} in $G$:
        that is, for each vertex $v$ of $C_i$
        there is an edge to \underline{every} vertex of the following cycle $C_j$. 
        (Indeed, else the in-degree of $C_j$ in $H$ would be at least $2$,
        contradicting the assumption that $H$ is a cycle.)
        Now, since each $C_i$ contains at least 3 vertices,
        one can find two intersecting cycles and a cycle disjoint from them as well.
        (See Figure~\ref{fig:fully_connected_cycles}.)
        \begin{figure}
            \centering
            \includegraphics[scale=.8]{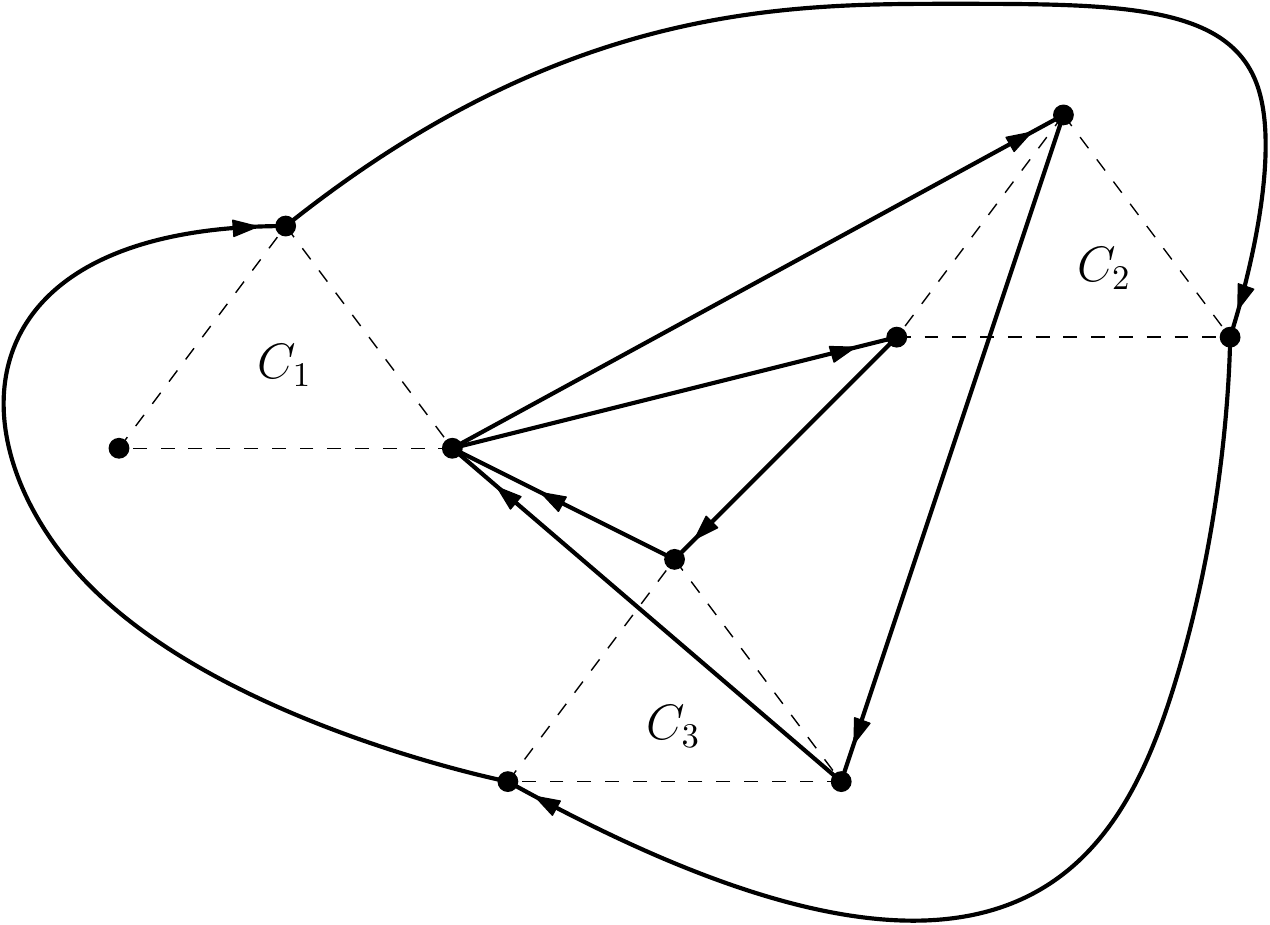}
            \caption{\small A picture showing how to find two intersecting cycles
                and one another cycle disjoint from them
                in the proof of Lemma~\ref{lem:intersecting_cycles_and_cycle_exist}
                in the case where $H$ is a cycle.}
                \label{fig:fully_connected_cycles}
        \end{figure}
        
    Else, $H$ is not a cycle and hence, there must exist a cycle in $H$
        which does not contain all vertices in $H$.
        In such a case one can also find two intersecting cycles
        and a cycle disjoint from them. (See Figure~\ref{fig:H_is_not_a_cycle}.)
        \begin{figure}
            \centering
            \includegraphics{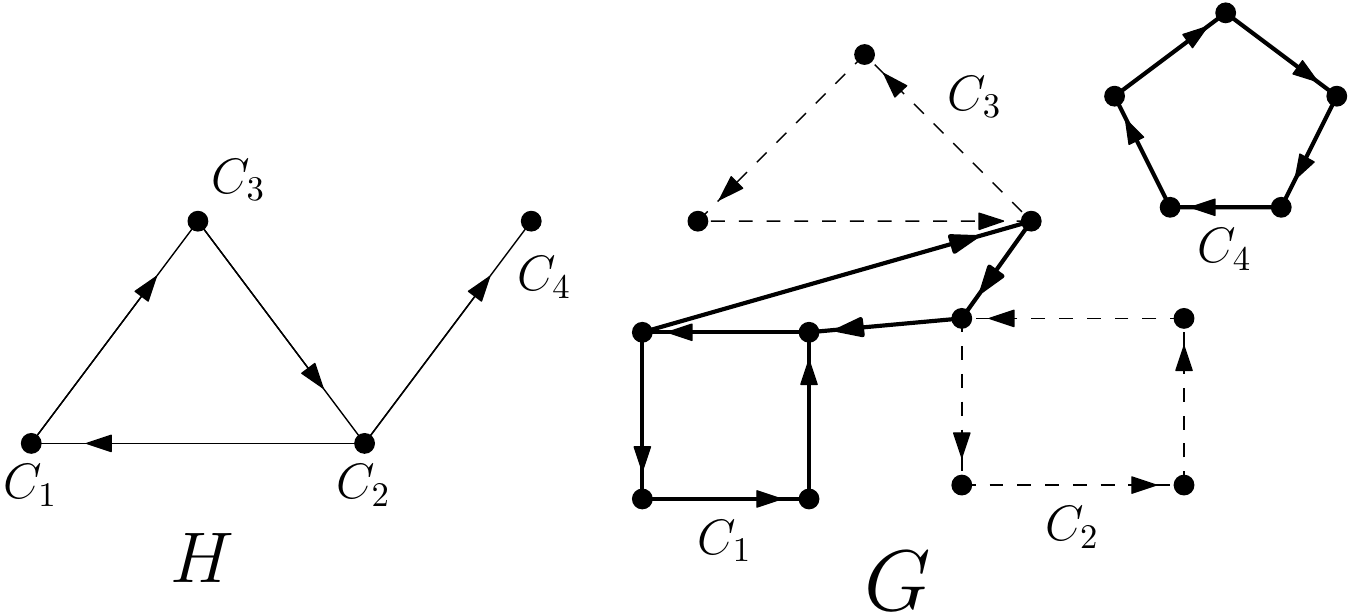}
            \caption{\small A picture showing how we can choose two intersecting cycles
                and one another cycle disjoint from them
                in the proof of Lemma~\ref{lem:intersecting_cycles_and_cycle_exist}
                in the case when there is a cycle in $H$
                which does not contain all of its vertices.}
            \label{fig:H_is_not_a_cycle}
        \end{figure}
        
    \item Case 2: There exists $v \in V_C$ such that $I_v \neq C_i$ for all $i\leq k$. 
        Without loss of generality, assume $v\in V(C_1)$. 
        If there exists $C_i$ for some $i \in \{1, \dots, k\}$
        such that $I_v$ and $C_i$ are disjoint
        then we can find the two intersecting cycles
        and a cycle disjoint from them as follows:
        by maximality of $\mathcal{C}$, $I_v$ intersects $C_j$ for some $j \neq i$,
        and the cycle $C_i$ is disjoint from both of them.
        
        We are left with the case that $I_v$ intersects every $C_i$. 
        Here, we can find the two intersecting cycles
        and a cycle disjoint from them as follows
        (see Figure~\ref{fig:I_v_intersect_every_C_i}):
        let $u \in I_v \cap C_1$ and let $w \in I_v$ be the first vertex
        in cycle $I_v$ after $u$.
        Then the set $V(C_1) \cup \{w\}$ induces two intersecting cycles.
        For the disjoint cycle, pick any $C_l, l \neq 1$ such that $w \not\in C_l$.
        (Such $C_l$ exists because $k\geq 3$.)
        \begin{figure}
            \centering
            \includegraphics{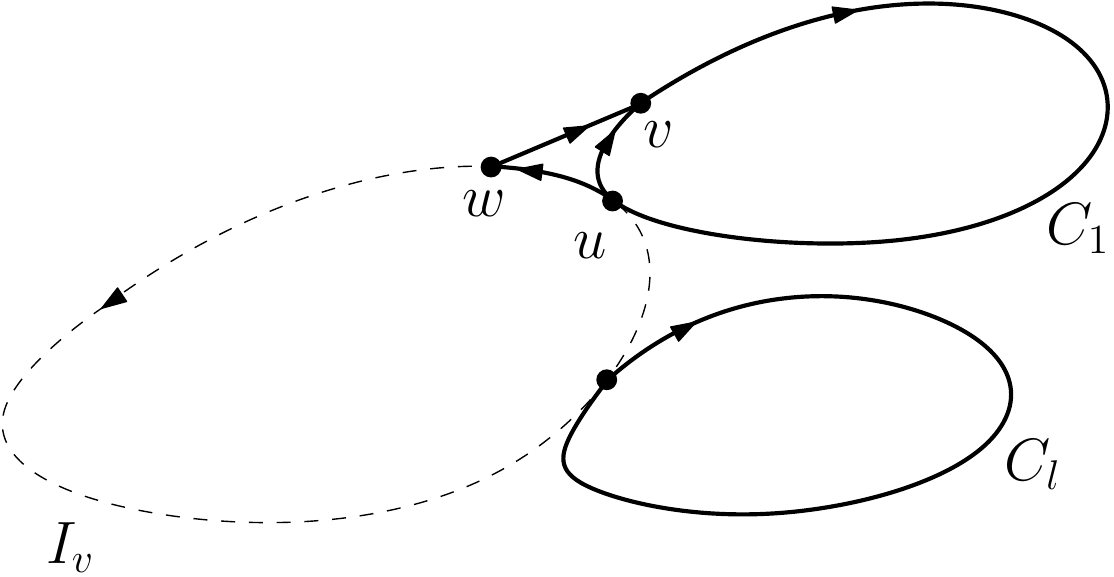}
            \caption{\small The two intersecting cycles induced by $V(C_1) \cup w$
                and the cycle $C_l$ disjoint from them
                from the proof of Lemma~\ref{lem:intersecting_cycles_and_cycle_exist}.}
                \label{fig:I_v_intersect_every_C_i}
        \end{figure}
    \end{itemize}
    This is in contradiction to $G$ being a counterexample.
\end{proof}

\begin{lemma}\label{lem:intersecting_cycles_and_cycle_imply_singleton}
    Let $d\geq 3$ be a sufficiently large integer
    such that every digraph with minimal out-degree $\geq d$
    must contain two intersecting cycles and another cycle disjoint from them.%
    \footnote{By Lemma~\ref{lem:intersecting_cycles_and_cycle_exist}, $d\leq 15$.}
    Then, every digraph $G$ with minimal out-degree $\geq d$ contains a vertex
    which is separable from every other vertex.
\end{lemma}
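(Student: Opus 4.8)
The plan is to argue by contradiction, following the strategy outlined after Theorem~\ref{thm:separable_vertex}. Let $G=(V,E)$ be a counterexample with the fewest vertices, so that no vertex of $G$ is separable from all others, while every digraph with minimum out-degree $\geq d$ and fewer vertices does contain a separable vertex. Since $G$ has minimum out-degree $\geq d$, the hypothesis of the lemma supplies two intersecting cycles $C_1,C_2$ together with a third cycle $C_3$ disjoint from $C_1\cup C_2$. Write $Z=V(C_1)\cup V(C_2)$, and observe that $Z$ is a friendly set (each vertex of $Z$ has its cycle-successor in $Z$) and that $C_3$ is a friendly set disjoint from $Z$. The goal is to locate a vertex $u\in Z$ that is separable from every other vertex, contradicting the choice of $G$.

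The main engine is Lemma~\ref{lem:extending_friendly_sets}. Fix any $u\in Z$. To separate $u$ from a target $w$ it suffices to exhibit a friendly set $W\ni w$ with $W\cap Z=\emptyset$: applying Lemma~\ref{lem:extending_friendly_sets} to the disjoint friendly sets $Z$ (which contains $u$) and $W$ yields a friendly partition placing $u$ and $w$ on opposite sides. Such a $W$ exists precisely when $w$ can reach a cycle of the induced subdigraph $G-Z$ — follow out-edges inside $G-Z$ until a vertex repeats, then take that path together with the cycle it closes. Since $C_3$ is a cycle of $G-Z$, the only targets for which this can fail are the \emph{trapped} vertices $T$, meaning those $w\notin Z$ all of whose out-paths inside $G-Z$ eventually enter $Z$, together with the vertices of $Z$ itself. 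Consequently \emph{any} fixed $u\in Z$ is already separable from every $w\in V\setminus(Z\cup T)$, and it remains only to find a single $u\in Z$ that is separable from the other vertices of $Z\cup T$.

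To deal with $Z\cup T$ I would exploit minimality. As in the proof of Lemma~\ref{lem:intersecting_cycles_and_cycle_exist}, a minimal counterexample may be assumed to have every edge dominated or on a $2$-cycle, since a non-dominated edge off a $2$-cycle can be contracted without lowering the minimum out-degree; a separable vertex of the smaller digraph then lifts back through the decompression of friendly partitions, and the only pair not handled for free — the contracted pair — must be separated directly using the three cycles. In the normalized digraph Lemma~\ref{lem:thomassen_inneighborhood_cycle} provides, for every vertex, a cycle inside its in-neighborhood, which lets me build friendly sets that enter $Z$ from a prescribed direction. The crucial point is that for a trapped target $t$ (and likewise for a target $w\in Z$) the ``all of $Z$ on $u$'s side'' strategy is unavailable, because every friendly set containing such a target must meet $Z$. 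I therefore split $Z$: I place on $u$'s side a friendly set $A\ni u$ that either uses a cycle of $Z$ disjoint from the portion of $Z$ the target drains into, or else leaves $Z$ and anchors on the reservoir cycle $C_3$, while the other side carries the target together with the rest of $Z$; the vertex $u\in Z$ is then chosen (e.g.\ on a private arc of $C_1$) so that these two friendly sets can be kept disjoint simultaneously for all remaining targets.

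The hard part will be exactly this last step: separating two vertices that both lie in $Z$. The difficulty is structural — because $C_1$ and $C_2$ intersect, $Z$ need not contain two vertex-disjoint cycles, so one cannot simply place the two targets on disjoint cycles inside $Z$. This is where the disjoint cycle $C_3$ and the ``all edges dominated'' normalization become indispensable: they guarantee, for a suitably chosen vertex of $Z$, an alternative friendly set that exits $Z$ and anchors on $C_3$, disjoint from a chosen cycle through the other target. Carrying this out requires a short case analysis according to how $u$ and the target sit on $C_1\cup C_2$ — on the common part, or on a private arc of one of the two cycles — and a verification in each case that the two friendly sets stay disjoint. This case analysis, rather than any single clever construction, is the technical heart of the argument, and it is what the minimality of $G$ and the auxiliary cycle $C_3$ are used to push through.
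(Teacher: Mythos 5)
Your setup is sound and in fact parallels the paper's: the reduction showing that any $u\in Z$ is automatically separable from every vertex that can reach a cycle of $G-Z$ is correct (it is the same use of Lemma~\ref{lem:extending_friendly_sets}), and you correctly isolate the residual task of separating some vertex of $Z$ from the remaining vertices (those of $Z$ itself and the ``trapped'' ones). But at exactly that point the proposal stops being a proof: the ``short case analysis'' you defer \emph{is} the theorem, and nothing in your outline indicates how it would go through. Worse, the one concrete tool you do commit to --- normalizing the minimal counterexample by contracting non-dominated edges --- is circular as stated: if the separable vertex of the contracted digraph happens to be the merged vertex $[uv]$, you must still separate $u$ from $v$ in $G$, which is an instance of the very pair-separation problem (possibly with both endpoints in $Z$, or one of them trapped) that the deferred case analysis was supposed to solve. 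So the gap is not a routine verification left to the reader; the proposal is missing the key idea.

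The paper's proof supplies that idea, and it is different from anything in your sketch: among all friendly partitions $\{V_1,V_2\}$ with $Z\subseteq V_1$ and $V(C_3)\subseteq V_2$, fix one with $|V_1|$ \emph{minimal}. (Such a partition exists by Lemma~\ref{lem:extending_friendly_sets}; minimality forces every vertex of $V_1\setminus Z$ to have all its out-neighbors in $V_1$, so this $V_1$ is essentially your $Z\cup T$.) The punchline is that a vertex $v\in V_1$ which can be \emph{moved across} the partition, i.e.\ such that $\{V_1\setminus\{v\},V_2\cup\{v\}\}$ is still friendly, is separable from every other vertex --- from $V_2$ by the original partition and from $V_1\setminus\{v\}$ by the shifted one --- so in a counterexample no vertex can be moved. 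Minimality of $G$ (applied to $G[V_1]$, which would otherwise be a smaller counterexample) yields a vertex $v_0\in Z$ with an out-neighbor in $V_2$; since $v_0$ cannot be moved, it has a ``critical'' in-neighbor $v_1\in V_1$ whose unique out-neighbor in $V_1$ is $v_0$, which forces $v_1\in Z$ and $v_1\to v_0$ to be an edge of $C_1$ or $C_2$; since $v_1$ then also has an out-neighbor in $V_2$ and is likewise not separable, the argument iterates, producing an endless backward chain along edges of $C_1\cup C_2$ that must eventually reach a vertex of $C_1\cap C_2$ --- a vertex with two out-neighbors in $Z\subseteq V_1$, contradicting criticality. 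No contraction, no normalization via Lemma~\ref{lem:thomassen_inneighborhood_cycle}, and no case analysis over positions on the cycles is needed. If you want to complete your approach, this extremal-partition-plus-vertex-shifting mechanism is the ingredient to aim for; without it (or a substitute), the proposal does not constitute a proof.
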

\begin{proof}
    We prove by contradiction: let $G = (V, E)$ be a smallest counterexample.
    Consider three cycles $C_1,C_2,C_3$ as guaranteed by the premise:
    let $I$ denote the set of vertices in the two intersecting cycles $C_1,C_2$
    and let $J$ denote the set of vertices of the remaining cycle $C_3$. 
    (Thus, $I\cap J=\emptyset$.)

Let $\{V_1, V_2\}$ be a friendly partition of $V$
    such that $I \subset V_1, J \subset V_2$ and $\lvert V_1\rvert$ is minimal.
    The idea of the proof is to find a vertex $v\in V_1$
    such that the partition $\{V_1\setminus\{v\},V_2\cup\{v\}\}$ is also friendly.
    Note that this implies that $v$ is separable and yields the desired contradiction.
    Note that $\{V_1\setminus\{v\},V_2\cup\{v\}\}$ is friendly if and only if
    $v$ has at least one out-neighbor in $V_2$
    and every in-neighbor of $v$ in $V_1$
    has at least one additional out-neighbor in $V_1$.

By the minimality of $\{V_1,V_2\}$ it follows 
    that if $v \in V_1\setminus I$ then all out-neighbors of $v$ are in~$V_1$. 
    Notice that at least one vertex $v_0\in I\subseteq V_1$ has an out-neighbor in $V_2$,
    because otherwise the digraph induced by $V_1$ would be a smaller counterexample: 
    indeed, if there exists $v\in V_1$ which is separable from any other vertex $u\in V_1$
    then $v$ is also separable from any other vertex in $V$
    (because $\{V_1,V_2\}$ is friendly.)
    Next, since $v_0$ is not separable, there must exist $v_1\in V_1$ such that:
\begin{itemize}
    \item[(i)] $v_0$ is the \underline{only} out-neighbor of $v_1$ in $V_1$:
        because otherwise the partition $V = \{V_1 \setminus \{v_0\}, V_2\cup \{v_0\}\}$
        is also friendly,
        which implies that $v_0$ is a separable vertex
        and therefore, yields a contradiction.
    \item[(ii)] $v_1\to v_0$ is an edge of one of the cycles $C_1$ or $C_2$:
        indeed, all vertices in $V_1\setminus I$ have all their out-neighbors in $V_1$.
        Thus, $v_0\in I$; now, since $v_0$ is the only out-neighbor of~$v_1$ in~$V_1$,
        it follows that $v_1\to v_0$ is a cycle-edge.
\end{itemize}
Similarly, there must exist $v_2\in V_1$ with analogous properties
    (i.e.\ $v_1$ is the only out-neighbor of $v_2$ in $V_1$,
    and $v_2\to v_1$ is an edge of one the cycles $C_1$ or $C_2$). 
    Continuing in this way we construct an endless sequence 
    $v_0, v_1, v_2, \dots\in V_1$ such that for each $i\geq 0$,
    $v_i$ is the unique out-neighbor of $v_{i+1}$ in $V_1$,
    and $v_{i+1}\to v_{i}$ is an edge on $C_1$ or $C_2$.
    In particular, at some point $i>0$, 
    we must encounter a vertex $v_i$ which is in the intersection
    of the cycles $C_1, C_2$.
    This is a contradiction
    because such a vertex has at least 2 out-neighbors in $I\subseteq V_1$.
    (See Figure~\ref{fig:separable_vertex}.)
\end{proof}
\begin{figure}
    \centering
    \includegraphics[scale=1]{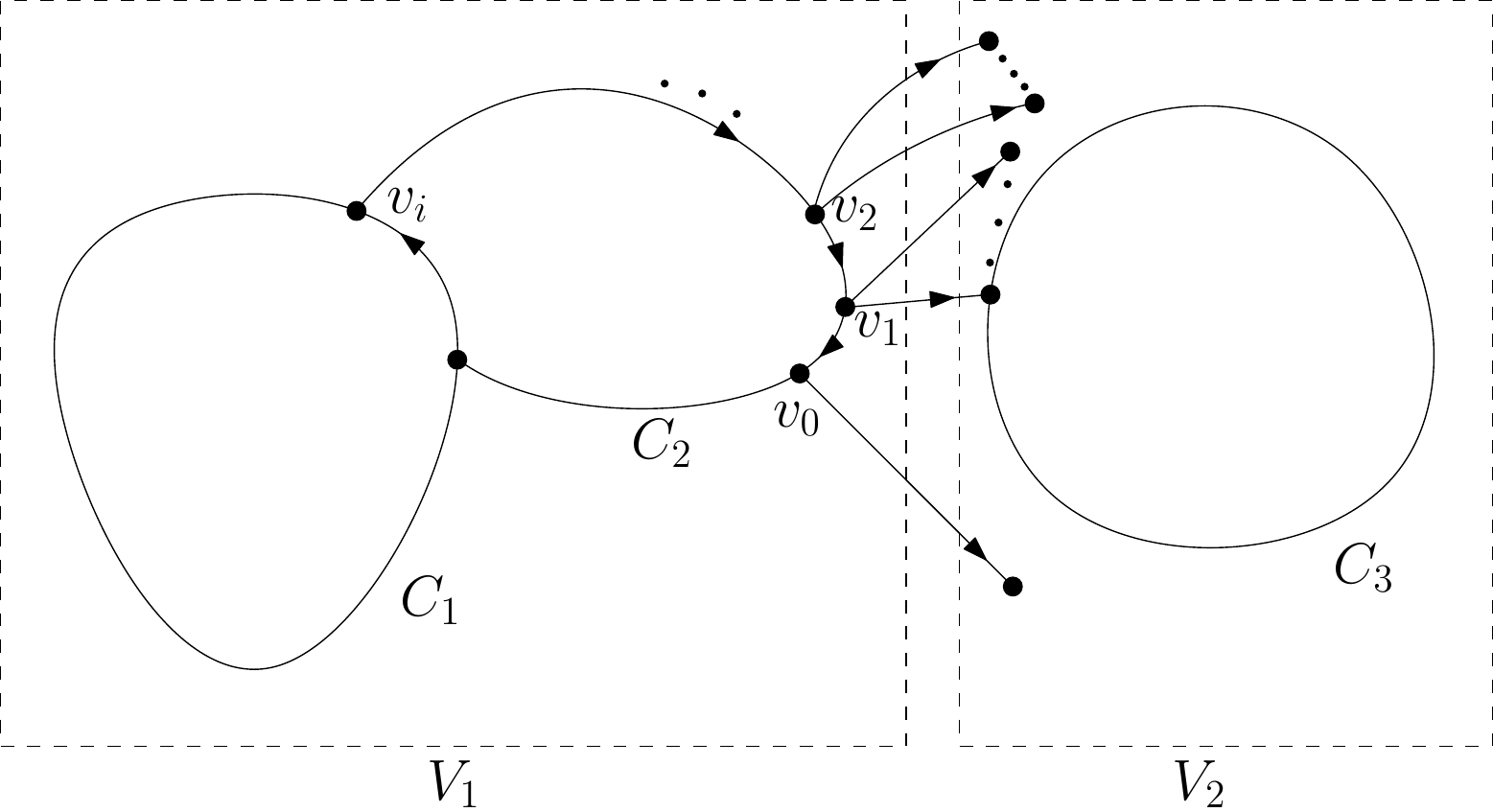}
    \caption{\small A picture showing the sequence of vertices $v_1, v_2, \ldots$
        from the proof of Lemma~\ref{lem:intersecting_cycles_and_cycle_imply_singleton}.}
        \label{fig:separable_vertex}
\end{figure}

Theorem~\ref{thm:separable_vertex} also implies the following.

\begin{corollary}
    Let $G$ be a digraph with minimum out-degree at least 15+$k$,
    then $G$ contains at least $k$ vertices separable from all the other vertices.
\end{corollary}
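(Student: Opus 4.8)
The plan is to prove, by induction on $k\geq 0$, the slightly stronger assertion that every digraph with minimum out-degree at least $15+k$ has at least $k+1$ vertices that are separable from all other vertices; the corollary then follows immediately. The base case $k=0$ is exactly Theorem~\ref{thm:separable_vertex}. For the inductive step, assume the claim for $k-1$ and let $G=(V,E)$ have minimum out-degree at least $15+k$. By Theorem~\ref{thm:separable_vertex} there is a vertex $u\in V$ separable from every other vertex of $G$. Form $G'=G-u$ by deleting $u$ together with all incident edges. Since deleting $u$ removes at most one out-edge from each surviving vertex (the edge into $u$, if present), the minimum out-degree of $G'$ is at least $14+k=15+(k-1)$, so the induction hypothesis applies and $G'$ contains at least $k$ vertices that are separable within $G'$.

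The core of the argument is to transfer separability from $G'$ back to $G$: I claim that every vertex $v$ which is separable in $G'$ is also separable in $G$. Granting this, these $k$ vertices together with $u$---which lies outside $V(G')$ and is thus distinct from all of them---give the desired $k+1$ separable vertices of $G$, closing the induction. So fix such a $v$ and an arbitrary $w\in V\setminus\{v\}$ from which we must separate it, and split into two cases according to whether $w=u$.

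If $w=u$, then separability of $u$ in $G$ already supplies a friendly partition of $G$ that separates $u$ and $v$, which is precisely what is required. If $w\neq u$, choose a friendly partition $\{V_1',V_2'\}$ of $G'$ separating $v$ and $w$, and extend it to $G$ by inserting $u$ into one of the two parts. The key observation is that this extension can always be kept friendly: placing $u$ into a part never removes any vertex from another vertex's part, and the edges among $V(G')$ are identical in $G$ and $G'$, so every vertex other than $u$ still has an out-neighbor in its own part; the only new requirement is that $u$ have an out-neighbor in the part it joins. As $u$ has out-degree at least $15+k\geq 1$ and all of its out-neighbors lie in $V(G')=V_1'\cup V_2'$, at least one part contains an out-neighbor of $u$, and we put $u$ there. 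The resulting partition of $G$ is friendly, and since neither $v$ nor $w$ has moved it still separates them. Hence $v$ is separable in $G$, as claimed.

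I expect the main obstacle to be conceptual rather than computational. Two points need care. First, one must observe that deleting the separable vertex $u$ costs only a single unit of out-degree, so that the inductive hypothesis genuinely applies to $G-u$. Second, and more subtly, separability of $v$ in the smaller graph $G'$ says nothing about separating $v$ from the deleted vertex $u$, so it does not by itself yield separability in $G$; this gap is exactly what the separability of $u$ repairs in the first case, while the second case reduces to the routine task of placing the reinserted vertex alongside one of its out-neighbors. The bookkeeping that $u\notin V(G')$ guarantees the $k+1$ vertices are distinct completes the count.
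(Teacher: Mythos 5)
Your proof is correct and follows essentially the same route as the paper's: the paper likewise deletes the separable vertex guaranteed by Theorem~\ref{thm:separable_vertex} and applies the statement inductively to the smaller digraph. Your write-up is in fact more complete than the paper's one-line argument, since you explicitly verify the two points it leaves implicit---that deletion costs at most one unit of out-degree, and that separability in $G-u$ transfers back to $G$ (handling the case $w=u$ via the separability of $u$ and reinserting $u$ next to one of its out-neighbors otherwise).
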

\begin{proof}
   It easily follows from Theorem~\ref{thm:separable_vertex}.
   We simply delete the vertex separable from all other vertices
   and apply the corollary again.
\end{proof}

\section{The Number of Friendly Partitions}\label{sec:number_of_friendly_partition}
In this section we prove Theorems~\ref{thm:more_partitions},~\ref{thm:dichotomy},
and \ref{thm:separability_for_d+1}.
We begin by proving Theorem~\ref{thm:dichotomy}.

\begin{reptheorem}{thm:dichotomy}[Restatement]
Fix $d \in \N$ then the following statements are equivalent:
    \begin{enumerate}
        \item There exists $s \in \N$ such that in every digraph
            with minimum out-degree at least $d$, 
            every subset of $s$ vertices is separable.
        \item In every digraph with minimum out-degree at least $d$,
            every subset of $s=d-1$ vertices is separable.
        \item The function\footnote{Recall that $t(d,n)$ denotes the minimum number of 
            friendly partitions that exist in any digraph with $n$ vertices
            and out-degree $d$}
            $t(d,n)$ is unbounded (as a function of $n$).
        \item $t(d,n) \geq \log(n) - \log(d-2)$.
    \end{enumerate}
\end{reptheorem}

\begin{proof}
The implication ``$4 \implies 3$'' is trivial.
    Therefore, it is sufficient to show ``$1 \implies 2$'',
    ``$2\implies 4$'', and ``$3 \implies 1$''
    since then we get
    \[1 \implies 2 \implies 4 \implies 3 \implies 1.\]

$1 \implies 2$: 
    let $G=(V,E)$ be a digraph with minimum out-degree $d$ and let $D_- \subset V$
    be a $(d-1)-$tuple of vertices we want to separate.
    We modify $G$ by adding a cycle $S$ to it of size $s$ and a directed edge
    from each vertex in $S$ to each vertex in $D_-$.
    See Figure~\ref{fig:d-1_separable}.
\begin{figure}
    \centering
    \includegraphics[scale=1.3]{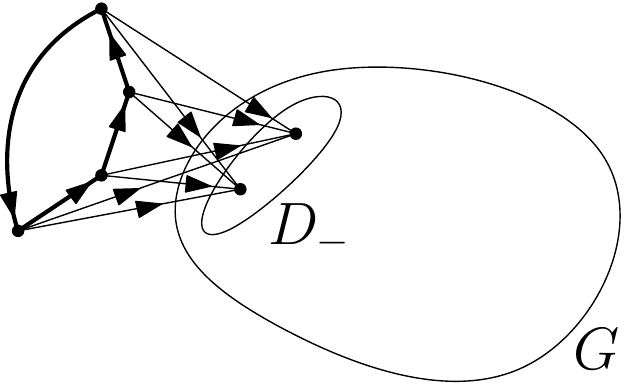}
    \caption{\small A picture of the digraph $G$ together with the cycle of size $s$
        showing how we proceed in the proof of Theorem~\ref{thm:dichotomy}.
        In this example, we suppose $d = 3$ and every 4-tuple is separable.}
        \label{fig:d-1_separable}
\end{figure}
Note that the minimum out-degree in the modified digraph remains $d$.
    Thus, by the assumption in Item $1$, there exists a friendly partition $\{W,U\}$
    that separates $S$. 
    We claim that this partition must separate $D_-$:
    indeed, otherwise all vertices in $D_-$ are in the same part, say $D_-\subseteq W$.
    However, there must be $s_0\in S$ such that $s_0\in U$; thus, since $S$ is a cycle,
    the unique out-neighbor of $s_0$ in $S$, denoted by $s_1$, must also be in $U$.
    Similarly, the out-neighbor of $s_1$ in $S$ must also be in $U$.
    Continuing in this way, we conclude that all vertices of $S$ are in $U$
    which contradicts the assumption that $S$ is separated by $\{W,U\}$.
    Thus, $D_-$ must be separable by $\{U,V\}$ as required.

\medskip

$2\implies 4$:
    assume every subset of $d-1$ vertices in a digraph $G=(V,E)$
    can be separated by a friendly partition,
    and let $k$ denote the number of friendly partitions in $G$. 
    We claim that $k\geq \log(n) - \log(d-2)$, where $n$ is the number of vertices.
    To see this, assign to every vertex $v$ a binary string $b_v$ of length $k$,
    such that $b_v(i)=0$ if and only if $v$ belongs to the left part
    of the $i$-th friendly partition.
    Note that $b_v=b_u$ if and only if $v$ and $u$ cannot be separated
    by a friendly partition.
    In other words, $b_u=b_v$ if and only if $u,v$ belong to the same equivalence class
    under the relation 
    \begin{center}
        ``$x\sim y\quad \iff\quad $ $x$ cannot be separated from $y$
        by a friendly partition.''
    \end{center}
    By the assumption in Item 2, each equivalence class is of size $\leq d-2$,
    and therefore, the number of equivalence classes is at least $\frac{n}{d-2}$.
    Hence, also the number of distinct binary vectors in the set $\{b_v: v\in V\}$
    is at least $\frac{n}{d-2}$,
    and since each vector $b_v$ has length $k$, it follows
    that $k\geq \log(\frac{n}{d-2})$ as required.

\medskip

$3 \implies 1$:
    we prove the contrapositive $\neg 1 \implies \neg 3$.
    Assume that for each $s \geq 2$
    there is a digraph with minimum out-degree at least $d$
    containing an unseparable $s-$tuple. 
    Pick such digraph $G=(V,E)$ for $s=d$
    and denote the unseparable $d$-tuple by $D$.

For every $n > \lvert V\rvert$ consider the digraph $G_n$ on $n$ vertices
    which is obtained by adding $n-\lvert V\rvert$ vertices to $G$
    and connect them with an out-going edge only to $D$
    (i.e.\ the new vertices have in-degree $0$).
    The digraph $G_n$ has minimum out-degree $d$ as well
    and has the same number of friendly partitions as $G$
    since each new vertex has to be in the same part as $D$.
    The digraphs $G_n$ witness that $t(d,n)$ is bounded.

This completes the proof of Theorem~\ref{thm:dichotomy}.
\end{proof}

\begin{reptheorem}{thm:separability_for_d+1}[{Restatement}]
    Let $d\in\mathbb{N}$.
    If $t(d,n)$ is unbounded as a function of $n$,
    then in every digraph with minimum out-degree at least $d+1$,
    each pair of vertices is separable.
\end{reptheorem}

\begin{proof}
    By Theorem~\ref{thm:dichotomy}, if $t(d,n)$ {is unbounded}
    then every $(d-1)$-tuple of vertices is separable in every digraph
    with minimum out-degree at least $d$.
    
Let $u,v$ be vertices in a directed graph $G$ with all out-degrees being at least $d+1$.
    By deleting $u$ we obtain a digraph $G^\prime$
    with all out-degrees being at least $d$. 
    Now, take a friendly partition of $G^\prime$ separating the
    out-neighborhood of $u$. (Such a partition is guaranteed to exist by the assumption.)
    Then we can extend such friendly partition of $G^\prime$ to a friendly partition of
    $G$ by adding $u$ to any  of the at least $2$ parts
    which contain an out-neighbor of $u$. 
    By picking a part which does \underline{not} contain $v$, we get a friendly partition
    that separates $u$ from $v$.
\end{proof}

In the rest of this section, we prove Theorem~\ref{thm:more_partitions}
    which asserts that every digraph with minimum out-degree at least $3$
    has at least $2$ friendly partitions.
    This generalizes Thomassen's result
    (Theorem~\ref{thm:thomassen} and Lemma~\ref{lem:extending_friendly_sets})
    which amounts to to the existence of at least $1$ friendly partition.

Recall from the beginning of Section~\ref{sec:existence_of_a_separable_vertex}
    that an edge is dominated if both its vertices have a common in-neighbor.
    Our strategy is at first to prove the desired result for digraphs
    in which each edge is dominated or a part of a 2-cycle.
    Then, we use Thomassen's reduction which implies this result to general digraphs.

We begin with a couple of lemmas which characterize the structure
    of digraphs with out-degree $3$ in which each edge is dominated 
    and there are no $2$-cycles.
    In a nutshell, these lemmas imply that such digraphs
    are closed under reversing the edges.
    (I.e.\ the digraph obtained by reversing all the edges
    is of the same kind.)

\begin{lemma}\label{lem:indegree}
    Let $G$ be a digraph without 2-cycles 
    such that all edges of $G$ are dominated and all vertices have out-degree 3.
    Then, all vertices in $G$ have in-degree 3.
\end{lemma}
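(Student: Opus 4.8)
The plan is to reduce the statement to a lower bound on in-degrees, and then close the gap by a global counting argument. First I would record the elementary double-counting fact: since $G$ has $n$ vertices, each of out-degree exactly $3$, the total number of edges is $3n$. As the sum of the in-degrees also equals the number of edges, the in-degrees sum to $3n$, so the average in-degree is exactly $3$. Consequently, if I can show that \emph{every} in-degree is at least $3$, then no vertex can have in-degree strictly greater than $3$ either (that would force some other vertex below the average of $3$), and hence every in-degree is exactly $3$.

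The heart of the argument is therefore to prove that each vertex has in-degree $\geq 3$, and here I would invoke Lemma~\ref{lem:thomassen_inneighborhood_cycle} directly. Since $G$ has no $2$-cycles and every edge of $G$ is dominated, the hypotheses of that lemma hold at every vertex: each edge is in particular ``dominated or a part of a $2$-cycle,'' and no vertex lies on a $2$-cycle. Thus for every $v\in V$ the subdigraph induced by the in-neighborhood $N^-(v)$ contains a cycle. Because $G$ contains no $2$-cycle, any cycle has length at least $3$, so this in-neighborhood cycle uses at least $3$ distinct vertices, all of which are in-neighbors of $v$. Hence $|N^-(v)|\geq 3$, i.e.\ the in-degree of $v$ is at least $3$.

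Combining the two observations finishes the proof: every in-degree is at least $3$ while the in-degrees average to $3$, so each in-degree equals $3$, as claimed. I do not expect a genuine obstacle here; the only step requiring care is confirming that the hypotheses of Lemma~\ref{lem:thomassen_inneighborhood_cycle} are met (absence of $2$-cycles plus every edge dominated) and that the ``cycle in the in-neighborhood'' it produces has length $\geq 3$, which is exactly where the no-$2$-cycle assumption is used to convert the existence of a cycle into the quantitative bound $\geq 3$ on the in-degree. The counting step then upgrades this one-sided bound to the desired equality.
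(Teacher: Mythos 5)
Your proposal is correct and follows essentially the same route as the paper: both rest on Lemma~\ref{lem:thomassen_inneighborhood_cycle} (applied at every vertex, using that all edges are dominated and no vertex lies on a $2$-cycle) to force a cycle of length $\geq 3$ in each in-neighborhood, combined with the averaging fact that the in-degrees sum to $3n$. The paper merely phrases this as a contradiction (an in-degree above $3$ would force one below $3$, whose in-neighborhood cycle would have to be a $2$-cycle), whereas you state the lower bound directly; the content is identical.
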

\begin{proof}
    Since all vertices have out-degree 3, the average in-degree is 3.
    Assume towards contradiction that there is a vertex with in-degree greater than $3$.
    So, there has to be a vertex of in-degree less than $3$.
    However, by Lemma~\ref{lem:thomassen_inneighborhood_cycle}, 
    its in-neighbors form a cycle which has to be of length 2.
    This contradicts the assumption that $G$ does not contain $2$-cycles.
\end{proof}
    
\begin{lemma}\label{lem:common_vertex}
    Let $G$ be a digraph without 2-cycles 
    such that all edges of $G$ are dominated and all vertices have out-degree 3,
    and let $u\to v$ be an edge in $G$.
    Then, $u$ and $v$ have a common out-neighbor.
\end{lemma}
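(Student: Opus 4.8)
The plan is to first extract a rigid local structure from the hypotheses and then run a short case analysis on one auxiliary in-neighbourhood. The key preliminary observation is that, under these hypotheses, the in-neighbourhood of \emph{every} vertex induces a directed $3$-cycle. Indeed, by Lemma~\ref{lem:indegree} every vertex has in-degree exactly $3$, so $|N^-(w)| = 3$ for all $w$; by Lemma~\ref{lem:thomassen_inneighborhood_cycle} (applicable since $G$ has no $2$-cycle) the set $N^-(w)$ contains a cycle; and since the subgraph induced on three vertices has no $2$-cycle and no loop, the only possible cycle is a $3$-cycle spanning all of $N^-(w)$. I would record this fact as the engine of the whole argument.

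Now fix the edge $u \to v$. Writing $N^-(v) = \{u, s, p\}$ with the induced $3$-cycle oriented (after possibly swapping the names $s,p$) as $u \to s \to p \to u$, we obtain $u \to s$ and $s,p \to v$. Since $u$ has out-degree $3$ with $u \to v$ and $u \to s$ already present, we may write $N^+(u) = \{v, s, z\}$ for a third vertex $z$. First I would reduce the goal to a single statement: a common out-neighbour $w$ of $u$ and $v$ must lie in $N^+(u) = \{v,s,z\}$; it cannot be $v$ (no loop) nor $s$ (as $s \to v$, and $v \to s$ would create a $2$-cycle), so the lemma is \emph{equivalent} to showing $v \to z$.

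To prove $v \to z$, I would examine $N^-(z)$, which contains $u$ (because $u \to z$) and induces a $3$-cycle by the engine above; let $s'$ be the successor of $u$ on this cycle, so $u \to s'$ and $s' \to z$. Then $s' \in N^+(u) = \{v,s,z\}$, and $s' \neq z$, leaving two cases. If $s' = v$ then $s' \to z$ is exactly $v \to z$ and we are done. The remaining case $s' = s$ is where the work lies: here $s \to z$, and writing the $3$-cycle on $N^-(z)$ as $u \to s \to t \to u$ forces $s \to t$ and $t \to z$. Combining with the already-known $s \to p$ and $s \to v$, we get $N^+(s) \supseteq \{p,v,z,t\}$, so out-degree $3$ forces a coincidence among these four; checking that $p,v,z$ are pairwise distinct and that $t = v$ or $t = z$ would each create a $2$-cycle, the only survivor is $t = p$. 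Thus $v$ and $z$ end up with the \emph{same} in-neighbourhood $\{u,s,p\}$ carrying the same $3$-cycle.

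The hard part — and the step I expect to be the real obstacle — is ruling out this last ``twin'' configuration. Here I would compute $N^+(p)$ exactly: from the cycle edge $p \to u$, together with $p \to v$ and $p \to z$, out-degree $3$ gives $N^+(p) = \{u,v,z\}$. Then I would turn to the in-neighbourhood triangle of $u$: it contains $p$, so $p$ has a successor $q \in N^-(u)$ on that $3$-cycle, i.e.\ $p \to q$ and $q \to u$. But $p \to q$ forces $q \in N^+(p) = \{u,v,z\}$, while $q \to u$ rules out $q = u$ (loop) and rules out $q \in \{v,z\}$ (since $u \to v$ and $u \to z$ would give $2$-cycles). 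This contradiction eliminates the twin case, so $s' = v$, hence $v \to z$, and $z$ is the desired common out-neighbour of $u$ and $v$. As a sanity check, one can note that summing over all $3n$ edges the number of common out-neighbours of their endpoints equals $3n$ (each in-neighbourhood triangle contributes its three edges once), so this number is on average exactly $1$ per edge, consistent with every edge having precisely one.
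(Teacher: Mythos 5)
Your proof is correct, and it takes a genuinely different route from the paper's. Both arguments rest on the same infrastructure: Lemma~\ref{lem:indegree} gives in-degree exactly $3$, and Lemma~\ref{lem:thomassen_inneighborhood_cycle} together with the absence of $2$-cycles makes every in-neighbourhood an induced directed triangle. The paper, however, then invokes the domination hypothesis one more time, directly on the three \emph{out}-edges of $v$: any dominator of an edge $v\to o$ must be an in-neighbour of $v$, and each of the three in-neighbours of $v$ has exactly one spare out-edge (besides its edge to $v$ and its edge inside the in-neighbourhood triangle), so each can dominate at most one of the three edges leaving $v$; by pigeonhole every in-neighbour dominates exactly one of them, in particular $u$ does, and that dominated edge ends at a common out-neighbour of $u$ and $v$. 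That is the entire proof --- a few lines. You never return to the domination hypothesis after the preliminaries: instead you isolate the unique candidate $z$ (the spare out-neighbour of $u$) and force $v\to z$ by a case analysis on the in-neighbourhood triangles of $z$ and $u$, eliminating the alternative via the twin configuration; I checked the distinctness claims and the $2$-cycle eliminations, and every step is sound. What the paper's approach buys is brevity: one counting step replaces your entire Case B and the twin elimination. What yours buys is that it runs purely on the derived triangle structure and exhibits the precise local obstruction; note also that your closing observation (each edge has exactly one common out-neighbour) falls out of the paper's argument immediately, since the pigeonhole there is in fact a bijection between the in-neighbours of $v$ and the out-edges of $v$ they dominate.
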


\begin{proof}
    By Lemma~\ref{lem:indegree}, also all the in-degrees in $G$ are 3. 
    Let $i,j$ denote the 2 other in-neighbors of $v$, apart from $u$,
    and let $o,p,q$ denote the 3 out-neighbors of $v$.
    (See Figure~\ref{fig:common_vertex}.)
    By assumption, each of the edges $v\to o,v \to p,v \to q$ is dominated,
    and the only vertices that can dominate these edges are the 3 in-neighbors $u,i,j$ of $v$.
    Since $u,i,j$ form a cycle in $G$ (as the in-neighborhood of $v$)
    and because the out-degrees in $G$ are $3$,
    it follows that each of $u,i,j$ dominates \emph{exactly}
    one edge from $v\to o, v\to p, v\to q$. 
    Thus, one edge, say $v\to o$, has to be dominated by $u$,
    and consequently, $o$ is a common out-neighbor of $u$ and $v$.
\end{proof}

\begin{figure}
    \centering
    \includegraphics{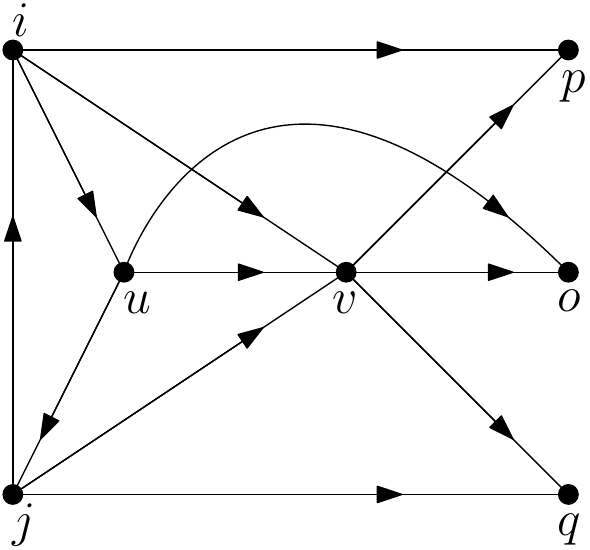}
    \caption{\small A neighborhood of the vertex $v$
        illustrating the situation of Lemma~\ref{lem:common_vertex}.}
        \label{fig:common_vertex}
\end{figure}

\begin{corollary}\label{cor:out_neighbors_cycle}
    Let $G$ be a digraph without 2-cycles 
    such that all edges of~$G$ are dominated and all vertices have out-degree 3,
    then the out-neighbors of each vertex in $G$ form a cycle.
\end{corollary}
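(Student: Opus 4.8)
The plan is to fix an arbitrary vertex $v$ and show that its three out-neighbors, which I will call $o,p,q$ (there are exactly three since the out-degree is $3$), span a directed $3$-cycle. Since $G$ has no $2$-cycles, a directed cycle through all of $o,p,q$ is precisely what ``form a cycle'' should mean, and establishing it also makes the edge-reversal symmetry explicit: just as the \emph{in}-neighbors of every vertex form a $3$-cycle (Lemma~\ref{lem:thomassen_inneighborhood_cycle} together with in-degree $3$ from Lemma~\ref{lem:indegree}), so do the out-neighbors.

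First I would record what the induced digraph on $\{o,p,q\}$ must look like. The key input is Lemma~\ref{lem:common_vertex}: applied to the edge $v\to o$ it yields a common out-neighbor $w$ of $v$ and $o$. Because $w$ is an out-neighbor of $v$ we have $w\in\{o,p,q\}$, and because $w$ is an out-neighbor of $o$ (and there are no loops) we have $w\neq o$; hence $o$ sends an edge to $p$ or to $q$. Running the same argument on the edges $v\to p$ and $v\to q$ shows that each of $o,p,q$ has at least one out-neighbor among the other two. In other words, the subdigraph of $G$ induced on $\{o,p,q\}$ has minimum out-degree at least $1$.

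It then remains to conclude that this three-vertex subdigraph contains a directed cycle through all three vertices. Since its minimum out-degree is at least $1$, following out-edges from any starting vertex must eventually repeat a vertex, so it contains a directed cycle; on three vertices a cycle has length $2$ or $3$, and length $2$ is ruled out because $G$ has no $2$-cycles. Hence the cycle has length $3$ and uses all of $o,p,q$, which is exactly the conclusion of Corollary~\ref{cor:out_neighbors_cycle}. I do not expect a genuine obstacle here: given Lemma~\ref{lem:common_vertex} the argument is essentially immediate, and the only point requiring care is the bookkeeping in the previous paragraph, namely verifying that the common out-neighbor supplied by Lemma~\ref{lem:common_vertex} is genuinely one of the \emph{other} two out-neighbors of $v$ --- which is precisely where the absence of loops and of $2$-cycles is used.
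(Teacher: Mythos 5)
Your proof is correct and is essentially the paper's argument: both rest on Lemma~\ref{lem:common_vertex}, which the paper applies by reversing all edges (the reversed digraph again has out-degree $3$ by Lemma~\ref{lem:indegree}, all its edges dominated by Lemma~\ref{lem:common_vertex}, and no $2$-cycles) and then citing Lemma~\ref{lem:thomassen_inneighborhood_cycle}, whereas you inline that lemma's one-line proof directly on the out-neighborhood $\{o,p,q\}$. The difference is only presentational --- the reversal trick versus your direct check that the induced digraph on the three out-neighbors has minimum out-degree at least $1$ and hence contains a cycle, necessarily of length $3$.
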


\begin{proof}
    By reversing the orientation of each edge we obtain a digraph
    in which all vertices has out-degree 3 by Lemma~\ref{lem:indegree}
    and all edges are dominated by Lemma~\ref{lem:common_vertex}.
    Then, we can use Lemma~\ref{lem:thomassen_inneighborhood_cycle}.
\end{proof}

\begin{lemma}\label{lem:two_graphs}
    Up to isomorphism, there are exactly two non-empty digraphs $G$ satisfying the following properties:
    \begin{enumerate}
        \item Each vertex in $G$ has out-degree $3$.
        \item Every edge in $G$ is dominated.
        \item $G$ does not contain a $2$-cycle.
        \item The undirected graph underlying $G$ is connected.
    \end{enumerate}
\end{lemma}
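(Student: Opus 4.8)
The plan is to leverage the structural facts already in hand: by Lemma~\ref{lem:indegree} every vertex has in-degree~$3$, by Corollary~\ref{cor:out_neighbors_cycle} the out-neighbors of every vertex form a directed $3$-cycle, and (by the same reversal argument, using Lemma~\ref{lem:thomassen_inneighborhood_cycle}) so do the in-neighbors. I will show the only two such connected digraphs are the quadratic-residue tournament $T_7$ on $\Z_7$ (with $i\to i+1,\,i+2,\,i+4$) and a single $8$-vertex digraph $G_8$, an orientation of the cocktail-party graph $K_{4\times 2}$ (that is, $K_8$ minus a perfect matching).

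The first key step is a counting identity that governs the global structure. Fix an edge $v\to p$ and write the out-cycle of $v$ as $p\to q\to r\to p$; since $N^+(v)=\{p,q,r\}$ exactly, any common out-neighbor of $v,p$ lies in $\{p,q,r\}\cap N^+(p)=\{q\}$, so \emph{every edge has exactly one common out-neighbor}, and dually exactly one common in-neighbor. Summing $\binom{d^-(w)}{2}=\binom{3}{2}$ over all $w$ counts, for each unordered pair, its number of common out-neighbors, giving total $3n$; as the $3n$ adjacent pairs already account for all of it, \emph{non-adjacent pairs have no common out-neighbor} (dually, none in common as in-neighbors). In particular the underlying graph is $6$-regular, so $n\ge 7$ and each vertex has exactly $n-7$ non-neighbors.

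Next I perform the local analysis at a fixed vertex $v$. Label its out-cycle $a\to b\to c\to a$ and let $a^-,b^-,c^-$ be the third in-neighbors of $a,b,c$. Reading off that the in-neighborhoods $N^-(a)=\{v,c,a^-\}$, $N^-(b)=\{v,a,b^-\}$, $N^-(c)=\{v,b,c^-\}$ are $3$-cycles forces the edges $c\to a^-$, $a\to b^-$, $b\to c^-$ together with $a^-\to v$, $b^-\to v$, $c^-\to v$; if two of $a^-,b^-,c^-$ coincided, some out-neighborhood would be transitive rather than a $3$-cycle, so they are distinct and $N^-(v)=\{a^-,b^-,c^-\}$, giving seven distinct vertices. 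The construction then branches on the third out-neighbor $a^+$ of $a$ (we already have $a\to b$ and $a\to b^-$). The no-$2$-cycle condition eliminates $a^+\in\{v,a,c,a^-\}$, leaving exactly two cases. If $a^+=c^-$, all adjacencies close up within the seven vertices, the digraph is a tournament, and since every edge has exactly one common out-neighbor it is doubly regular, hence isomorphic to $T_7$. If instead $a^+$ is a new vertex, the symmetry of the derivation under the rotation $a\to b\to c$ forces $a^+=b^+=c^+$ to be a single new vertex $v'$ with $N^-(v')=\{a,b,c\}$ and $N^+(v')=\{a^-,b^-,c^-\}$; all remaining edges are then determined, yielding $G_8$.

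The main obstacle is ruling out digraphs on nine or more vertices, i.e.\ showing the construction cannot be extended past eight vertices, and here the counting identity is the essential lever. Since $v$ is adjacent to all six vertices of its core $\{a,b,c,a^-,b^-,c^-\}$, every further vertex $w$ is non-adjacent to $v$, so $N^+(w)$ is disjoint from $\{a,b,c\}$ and $N^-(w)$ is disjoint from $\{a^-,b^-,c^-\}$; for $n=8$ this forces $N^+(w)=\{a^-,b^-,c^-\}$ and $N^-(w)=\{a,b,c\}$, pinning down $G_8$ uniquely, whereas for $n\ge 9$ I will argue that two such outside vertices cannot simultaneously meet these disjointness constraints while keeping all neighborhoods $3$-cycles, yielding a contradiction. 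Finally I verify directly that both $T_7$ and $G_8$ satisfy all four hypotheses (out-degree $3$, every edge dominated, no $2$-cycle, connected), so exactly two digraphs arise up to isomorphism.
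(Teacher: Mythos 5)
Your strategy is essentially the paper's: fix a vertex $v$, use Lemma~\ref{lem:indegree}, Lemma~\ref{lem:thomassen_inneighborhood_cycle}, and Corollary~\ref{cor:out_neighbors_cycle} to determine the local picture around $v$ (its out-cycle, its in-cycle, and the matching edges between them), then branch into two cases yielding a $7$-vertex and an $8$-vertex digraph; your counting identity (exactly one common out-/in-neighbor per edge, none for non-adjacent pairs, hence a $6$-regular underlying graph) is a pleasant addition the paper does not need. However, the proposal has genuine gaps precisely where the substance of ``exactly two'' lies. The appeal to ``the symmetry of the derivation under the rotation $a\to b\to c$'' to conclude $a^+=b^+=c^+$ (and, in the other branch, that ``all adjacencies close up within the seven vertices'') is not valid as stated: the moment you fix $a^+=c^-$ or $a^+=x$, the rotational symmetry of the configuration is broken, so nothing about $b^+$ and $c^+$ follows from symmetry alone. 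What actually forces them is Corollary~\ref{cor:out_neighbors_cycle} applied at $a$, $b$, $c$ in turn: for instance, if $a^+=x$ is new, then $N^+(a)=\{b,b^-,x\}$ must be a $3$-cycle containing the known edge $b^-\to b$, which forces $b\to x$ and $x\to b^-$, hence $b^+=x$; iterating gives $c^+=x$, $N^+(x)=\{a^-,b^-,c^-\}$, and all remaining edges. This chain of forcing --- which is exactly what the paper's proof carries out, case by case --- is missing from your write-up.

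More seriously, the exclusion of digraphs on $n\ge 9$ vertices is left as an explicit promissory note (``I will argue that two such outside vertices cannot\dots''); no argument is given, and this is the crux of the uniqueness claim. Note that once the forcing above is done properly, no separate treatment of $n\ge 9$ is needed: in either branch, every vertex of the $7$- or $8$-vertex set $S$ has all three out-neighbors inside $S$, so $S$ contains exactly $3\lvert S\rvert$ edges; since every in-degree in $G$ equals $3$ (Lemma~\ref{lem:indegree}), summing in-degrees over $S$ shows each vertex of $S$ also has all three in-neighbors inside $S$. Hence no edge of the underlying undirected graph joins $S$ to its complement, and property 4 (connectivity) forces $V=S$. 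Finally, your $7$-vertex case leans on the uniqueness of the doubly regular tournament of order $7$ --- an external classification fact that you neither prove nor actually need, since the forcing argument already determines every edge of that tournament explicitly.
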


\begin{proof}
    Consider such a digraph $G$ and pick any vertex $v$ in $G$.
    By Lemma~\ref{lem:indegree} also the in-degree of every vertex in $G$ is $3$.
    Let $w_1,w_2,w_3$ be $v$'s out-neighbors 
    and $u_1, u_2, u_3$ be $v$'s in-neighbors
    (see Figure~\ref{fig:two_cases_of_orientation}).

Lemma~\ref{lem:thomassen_inneighborhood_cycle}
    and Corollary~\ref{cor:out_neighbors_cycle} imply
    that each of the triplets $w_1,w_2,w_3$ and $u_1,u_2, u_3$
    form an oriented cycle in $G$.
    Let $C_{in}$ denote the cycle formed by $w_1, w_2, w_3$
    and $C_{out}$ denote the cycle formed by $u_1,u_2, u_3$.
    Without loss of generality, assume that $w_2$ is the out-neighbor of $w_1$ in $C_{in}$,
    and that $u_2$ is the out-neighbor of $u_1$ in $C_{out}$.
    In addition, each of the edges $v\to w_i$ is dominated by one of the $u_j$'s,
    and therefore, $G$ contains a matching of the form $w_i\to u_{\pi(i)}$
    for some permutation $\pi:[3]\to [3]$.
    Without loss of generality, assume that $\pi(1) = 1$;
    there are two cases:
\begin{itemize}
    \item $\pi(2) = 2$ (and therefore, $\pi(3) = 3$),
    \item $\pi(2) = 3$ (and therefore, $\pi(3) =2$).
\end{itemize}
In other words, with respect to the correspondence induced by $\pi$,
    either $C_{in}$ and~$C_{out}$ are oriented the same (when $\pi(2)=2$) or oppositely (when $\pi(2)=3$).
\begin{figure}
    \centering
    \includegraphics[scale=.95]{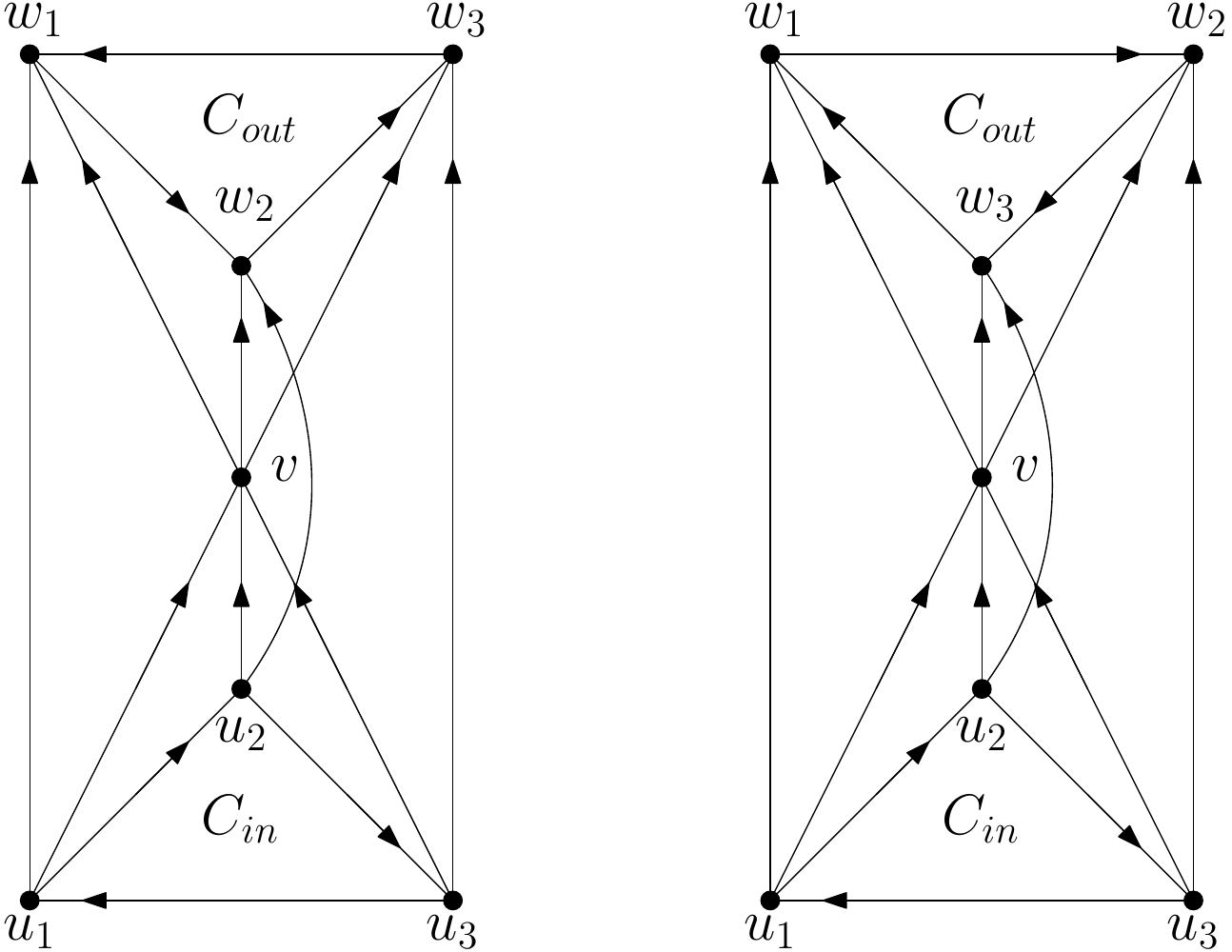}
    \caption{\small The two possible neighborhoods of the vertex $v$ 
        from the proof of Lemma~\ref{lem:two_graphs}.
        On the left, $C_{in}$ and $C_{out}$ are oriented the same ($\pi(2) = 2$),
        and on the right oppositely ($\pi(2) = 3$).}
        \label{fig:two_cases_of_orientation}
\end{figure}

By Corollary~\ref{cor:out_neighbors_cycle}
    the out-neighborhood of each vertex forms a cycle (triangle).
    One can verify that in each of the above cases there is a unique
    way of connecting edges from $C_{out}$ to $C_{in}$ 
    so that the out-neighborhoods of each vertex from $C_{in}$ form a cycle:
    for example $u_1\in C_{in}$ has out neighbors $v,u_2,w_1$,
    and since $u_2v$ and $vw_1$ exist as edges, it follows that
    $w_1u_2$ must also be an edge, in order to form a cycle
    (see Figure~\ref{fig:out_neighborhood_forms_a_cycle}).
\begin{figure}
    \centering
    \includegraphics[scale=.95]{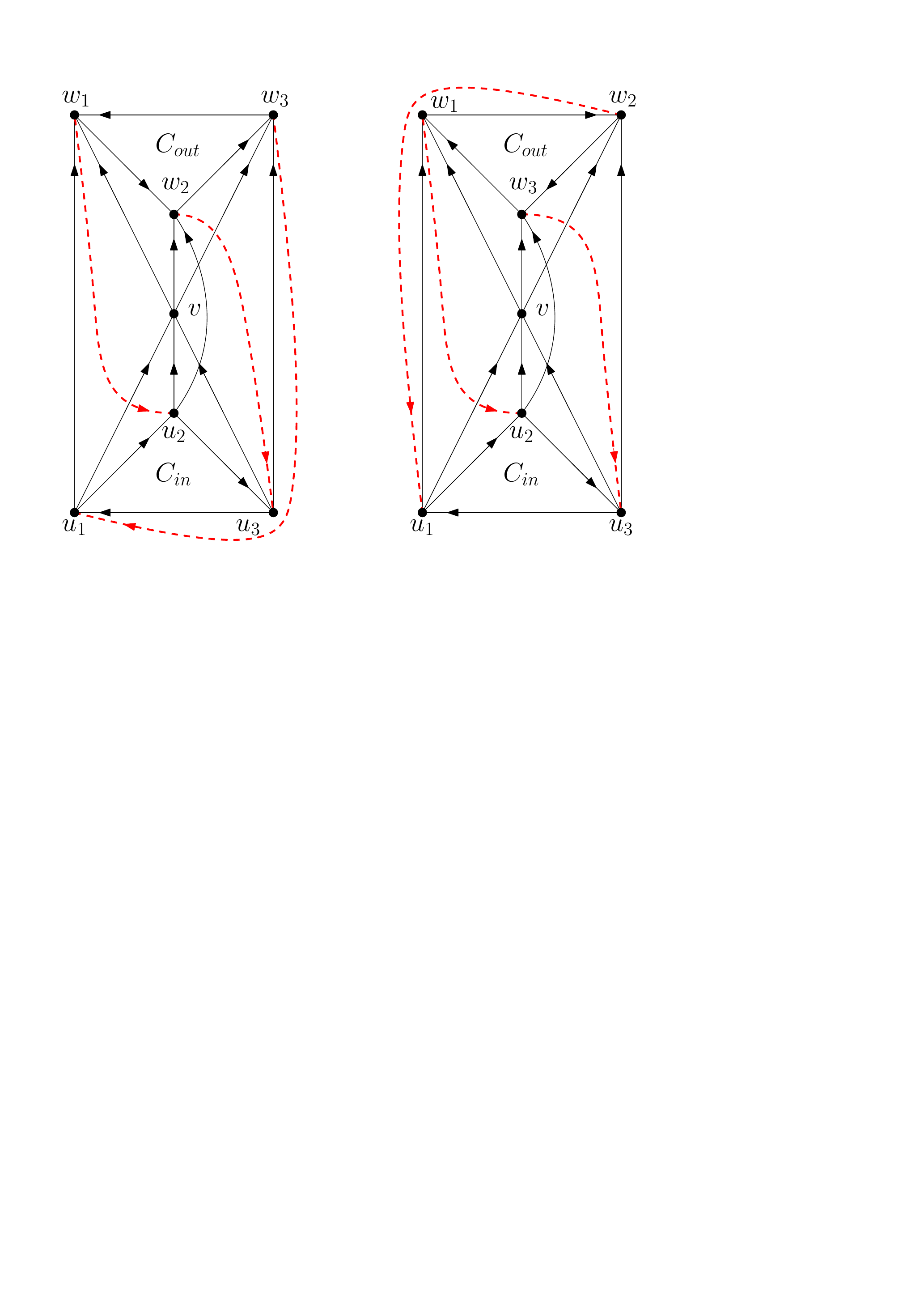}
    \caption{\small The two possible neighborhoods of the vertex $v$ from the proof of  Lemma~\ref{lem:two_graphs}.
        The dashed edges connect $C_{in}$ and $C_{out}$ so that the out-neighbors of each vertex
        from $C_{in}$ form a cycle.
        }
        \label{fig:out_neighborhood_forms_a_cycle}
\end{figure}

By Lemma~\ref{lem:thomassen_inneighborhood_cycle}
    the in-neighborhoods of each vertex from $C_{out}$ forms a cycle as well.
    If the orientations of $C_{in}$ and $C_{out}$ are not the same (i.e.\ $\pi(2)=3$),
    then there is a unique way of adding edges from $C_{in}$ to $C_{out}$
    so that the in-neighborhoods of each vertex from $C_{out}$ form a cycle
    (see the right picture of Figure~\ref{fig:only_two_graphs}).
    The resulting digraph satisfies all the conditions from the statement,
    and in particular all vertices in it have out-degree $3$.
    Thus, when $\pi(2)=3$, this is the unique digraph satisfying the conditions in the lemma.

In the remaining case, when the orientations of $C_{in}$ and $C_{out}$ are the same (i.e.\ $\pi(2)=2$),
    then the in-neighbors of each vertex from $C_{out}$ form a cycle as well
    (see Figure~\ref{fig:out_neighborhood_forms_a_cycle}).
    However, the partial digraph considered thus far still does not satisfy the required conditions,
    as the out-degrees of the vertices in $C_{out}$ and in-degrees of of the vertices in $C_{in}$ are only~$2$.

Consider the vertex $w_1$ in $C_{out}$; it already has two 
    out-neighbors~{$w_2$ and $u_2$}. We claim that its third out-neighbor, denoted by $x$,
    must be a new vertex $x\notin\{v,u_1,u_2,u_3,w_1,w_2,w_3\}$:
    clearly $x\notin\{w_2,u_2\}$ as it is the third out-neighbor of $w_1$;
    also, $x\notin\{v,w_3,u_1\}$ since $G$ has no $2$-cycles;
    {lastly, $x\neq u_3$ since $u_3$ is the out-neighbor of both $u_2,w_2$
    and hence cannot form a cycle with them,
    which would contradict Corollary~\ref{cor:out_neighbors_cycle}
    with respect to the out-neighbors of $w_1$.}
    
Thus, $x$ must be a new vertex. 
    Now, we use the fact that each two consecutive vertices have common out-neighbor
    (Lemma~\ref{lem:common_vertex}) and thus,
    the vertex $x$ has to be an out-neighbor of $w_2$ and $w_3$ as well.
    This fixes the out-degree of the vertices of $C_{out}$.
    We claim that the out-neighbors of $x$ must be $u_1,u_2,u_3$:
    by Lemma~\ref{lem:common_vertex} the vertices $x$ and $w_1$
    have a common out-neighbor, and therefore either $w_2$ or $u_2$
    must be an out-neighbor of $x$.
    However, $w_2$ is excluded as there is no 2-cycle in $G$.
    Thus, $u_2$ is an out-neighbor of $x$
    Repeating the same argument for the pairs $x, w_2$ and $x, w_3$,
    implies that the new vertex $x$ is an in-neighbor of $u_1,u_2,u_3$.
    The obtained digraph satisfies all the conditions in the lemma
    and is therefore the unique solution in the case when $\pi(2)=3$
    (see the left picture of Figure~\ref{fig:only_two_graphs}).
\end{proof}

\begin{figure}
    \centering
    \includegraphics[scale=1]{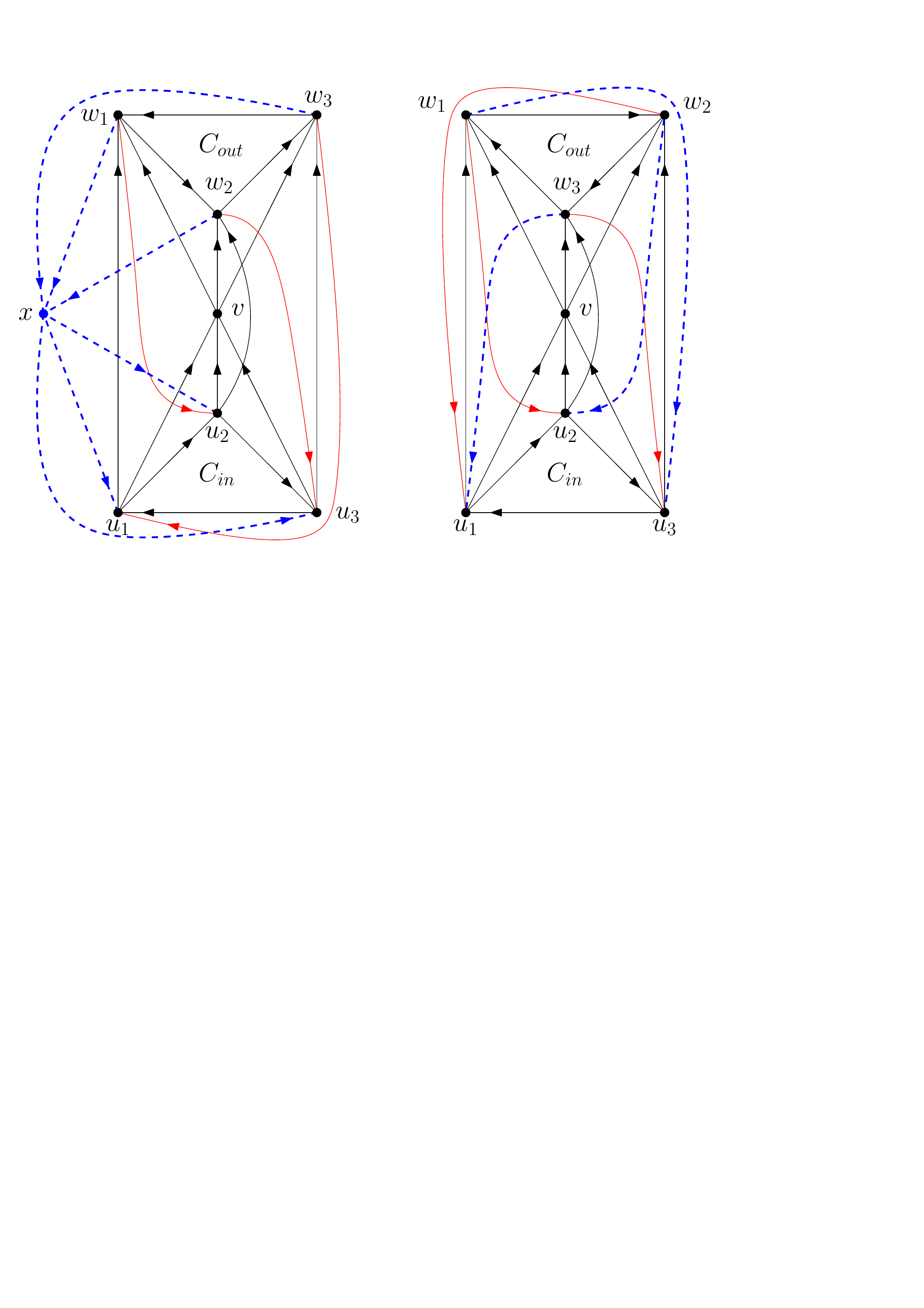}
    \caption[]{\small The two digraphs from Lemma~\ref{lem:two_graphs}.
        Note that while the right digraph basically consists only
        of the neighborhood of $v$,
        the left digraph contains one more vertex.
        Note that both these digraphs have more than one friendly partition:
        e.g.\ in the left digraph $\{x, u_1,u_2,u_3\}$ and its complement as well as
        $\{x, u_1,w_1,w_3\}$ and its complement,
        and in the right digraph $\{u_1,u_2,u_3\}$, 
        and $\{v,w_2,u_2\}$ and their complements. }
    \label{fig:only_two_graphs}
\end{figure}

Lemma~\ref{lem:two_graphs} will be used to show that digraphs with out-degrees $\geq3$
    that do not contain 2-cycles have at least $2$ friendly partitions.
    How about digraphs that do contain 2-cycles?
    If a digraph with out-degrees $\geq 3$ has at least two 2-cycles
    then it has more than one friendly partition as well.
    Indeed, a 2-cycle itself as one part
    and the rest of the digraph as the other part form a friendly partition.
    The next lemma is the key to handle digraphs
    which contain precisely one 2-cycle.
    
\begin{lemma}\label{lem:more_partitions_dom}
    Let $G$ be a digraph with all edges dominated,
    with all vertices of out-degree 3,
    and with \emph{exactly} one 2-cycle.
    Then, $G$ has at least two friendly partitions.
\end{lemma}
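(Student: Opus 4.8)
The plan is to exhibit two distinct friendly partitions. Write the unique $2$-cycle as $a \leftrightarrow b$ and set $R = V \setminus \{a,b\}$. The first partition is $P_1 = \{\{a,b\},\, R\}$: it is friendly because $a \to b$ and $b \to a$ cover the small part, while every other vertex has out-degree $3$ and at most two out-edges into $\{a,b\}$, hence at least one out-neighbor inside $R$; moreover $R \neq \emptyset$ since $a$ has two out-neighbors besides $b$, both necessarily in $R$.

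For the second partition I would use a closure operation. Given a friendly set $C$, let $\mathrm{cl}(C)$ be obtained by repeatedly absorbing any vertex all of whose out-neighbors already lie in the current set. Then $\{\mathrm{cl}(C), V \setminus \mathrm{cl}(C)\}$ is friendly: the complement is friendly by construction (a vertex left outside still has an out-neighbor outside, else it would have been absorbed), and $\mathrm{cl}(C)$ stays friendly since every absorbed vertex has an out-neighbor inside. The key observation is that for $C \subseteq R$ neither $a$ nor $b$ is ever absorbed, because absorbing $a$ would require $b$ to be present already and vice versa; more generally, no vertex with an out-edge into $\{a,b\}$ is ever absorbed, since its out-neighbor in $\{a,b\}$ never enters $\mathrm{cl}(C)$. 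Hence $\mathrm{cl}(C) \subseteq R$, and the resulting partition equals $P_1$ precisely when $\mathrm{cl}(C) = R$.

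It therefore suffices to find a cycle $C \subseteq R$ with $\mathrm{cl}(C) \neq R$. Let $R_1 \subseteq R$ be the set of vertices of $R$ that have an out-edge to $a$ or $b$; as just noted these are never absorbed unless they sit in the seed, so $\mathrm{cl}(C) = R$ forces $R_1 \subseteq C$. Thus it is enough to produce a cycle of $R$ that misses at least one vertex of $R_1$, equivalently a vertex $w \in R_1$ for which $G[R] - w$ still contains a cycle. Note $R_1 \neq \emptyset$: since the edges of the $2$-cycle are dominated, $a$ and $b$ have a common in-neighbor, which cannot be $a$ or $b$ and so lies in $R_1$.

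The main obstacle is the degenerate case in which every cycle of $G[R]$ meets every vertex of $R_1$, and here I would exploit the uniqueness of the $2$-cycle. If $G[R] - w$ is acyclic for a given $w \in R_1$, then its topological sink $t$ has its unique $R$-out-edge pointing to $w$, and therefore sends its remaining two out-edges to $a$ and $b$; so $t$ is again a common in-neighbor of $a$ and $b$ with $t \to w$. Iterating produces a chain $\dots \to t_2 \to t_1 \to w$ of common in-neighbors, and a short such chain closes into a second $2$-cycle (for instance $t_2 = w$ gives the $2$-cycle $w \leftrightarrow t_1$), contradicting the hypothesis that $a \leftrightarrow b$ is the only one. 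A clean illustration of the same mechanism: if $a,b$ had a single common in-neighbor $p$ with $\deg^-(a)=\deg^-(b)=2$, then dominatedness of $p \to a$ and $p \to b$ forces $b \to p$ and $a \to p$, creating the second $2$-cycle $a \leftrightarrow p$. I expect the bulk of the work to be turning this local analysis into a complete argument, in particular controlling longer chains and the interaction of $R_1$ with the vertices of $R$ whose out-edges stay entirely inside $R$.
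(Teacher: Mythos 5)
Your first partition, the closure mechanism, and the reduction (``it suffices to find a cycle of $G[R]$ missing some vertex of $R_1$'') are all correct, and amount to a clean repackaging of Lemma~\ref{lem:extending_friendly_sets}. But the proof is genuinely incomplete at exactly the point you flag: the degenerate case in which every cycle of $G[R]$ passes through every vertex of $R_1$. Your chain argument yields a contradiction only when the chain closes after two steps. In general the iteration (each $t_{i+1}$ a sink of the acyclic digraph $G[R]-t_i$, hence with out-neighborhood exactly $\{t_i,a,b\}$) closes into a cycle $C^*$ of some length $m\geq 3$; degeneracy then forces $V(C^*)=R_1$, with each vertex of $C^*$ having out-neighborhood $\{\text{successor},a,b\}$, and no second $2$-cycle appears. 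Worse, at that point your method is structurally exhausted: $C^*$ is then the \emph{only} cycle of $G[R]$ (each of its vertices has a unique out-edge inside $R$, and $G[R\setminus R_1]$ is acyclic), and $\mathrm{cl}(C^*)=R$ because the acyclic remainder $A=R\setminus R_1$ is absorbed in reverse topological order --- so no choice of seed inside $R$ can give a second partition, and no contradiction has been derived.

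What is missing is any use of the hypothesis that the edges incident to $A$ are dominated; that is what rules this configuration out. Concretely: (a) there is no edge from $\{a,b\}$ into $C^*$, since together with $s\to a$, $s\to b$ it would create a second $2$-cycle; (b) consequently the in-neighbors of $s\in C^*$ are its predecessor and vertices of $A$, while the in-neighbors of a vertex of $A$ lie in $A\cup\{a,b\}$, so the dominating vertex of any edge of $C^*$, and of any edge from $A$ into $C^*$, must be a vertex of $A$ that itself sends an edge into $C^*$; (c) letting $A_1$ denote the set of such vertices, $A_1\neq\emptyset$ (an edge of $C^*$ needs a dominator) and every vertex of $A_1$ has an in-neighbor in $A_1$, which forces a cycle inside $G[A]$ --- contradicting its acyclicity. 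Only with this (or an equivalent) supplement does your argument close; it is roughly half the proof, not a routine check. For comparison, the paper argues differently: it fixes a minimum-length cycle $C_2$ in the large part, splits on whether $C_2$ sends edges into $A$, and in the analogue of your degenerate case invokes Lemma~\ref{lem:thomassen_inneighborhood_cycle} to force $|A|=2$, then exhibits two disjoint cycles each passing through one vertex of the $2$-cycle. Note that the paper's resulting second partition separates $a$ from $b$ --- an outcome your closure construction (seeded inside $R$) can never produce --- which is precisely why your approach has no alternative but to prove that the degenerate configuration cannot exist.
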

\begin{proof}
    We prove by contradiction: assume $G=(V,E)$ is a counterexample
    with the smallest number of vertices,
    and let $C_1 = \{u,v\}$ be the unique 2-cycle in $G$.
    Thus, $C_1$ is one part of the unique friendly partition of $G$.
    
The second part $V_G\setminus C_1$ must contain a cycle $C_2$ 
    (because the minimal out-degree in the digraph induced by it is $\geq 1$).
    Pick $C_2$ to be a cycle with fewest vertices in the second part,
    and let $A=V\setminus(C_1\cup C_2)$ denote the set of remaining vertices.
    Note that the subdigraph induced by~$A$ is acyclic
    (or else it would contain a cycle $C_3$ which is disjoint
    from $C_2$ and Lemma~\ref{lem:extending_friendly_sets}
    would imply the existence of another
    friendly partition, namely the one separating $C_2$ and $C_3$).
    Also note that there are no edges
    from $A$ to $C_1$ -- or else we could move such a vertex 
    to the part of $C_1$ and obtain another friendly partition.
    
Let $C_A$ be the set of vertices in $C_2$
    that have an out-neighbor in $A$,
    and let $C_B$ be the set of vertices in $C_2$
    that have an out-neighbor in $C_1$.
    Note that: (i) $C_A \cup C_B = C_2$ (by minimality of $C_2$), and
    (ii) $C_B\neq\emptyset$ (or else the subdigraph induced by $A\cup C_2$
        would be itself of minimum out-degree 3, and therefore,
        $G$ would have more than one friendly partition).
        We consider two cases:
    \begin{itemize}
        \item Case 1: The set $C_A$ is \underline{nonempty}.
        Let $c\in C_A$ be a vertex whose outneighbor $b$ on $C_2$ is in~$C_B$.
        Let $a$ be an out-neighbor of $c$ from $A$.
        (See Figure~\ref{fig:C_A_nonempty}.)
    
    We claim that there are paths from $a$ to at least 3 distinct vertices in $C_2$.
        Indeed, this follows by taking a vertex $a'\in A$ which is reachable from $a$
        and that has no out-neighbor in $A$ (such $a'$ exists since $A$ is acyclic);
        now, since there are no edges from $A$ to $C_1$, it follows that the 3 out-neighbors
        of $a'$ must be in $C_2$, and are all reachable from $a$.
        Hence, at least one of these vertices is neither $b$ nor $c$.
        Denote such a vertex by $w$.
    
    Consider the cycle $C_3$ which starts at $c$, and continues via $a$ and $a'$
        to $w$, and then continues together with $C_2$ until it reaches back $c$
        (see Figure~\ref{fig:C_A_nonempty}).
        Notice that $b\notin C_3$.
    
    Now, by Lemma~\ref{lem:extending_friendly_sets} one can obtain an additional friendly partition
    whose one part contains $C_1 \cup \{b\}$ and the other contains $C_3$,
    which is a contradiction.

        \begin{figure}
            \centering
            \includegraphics{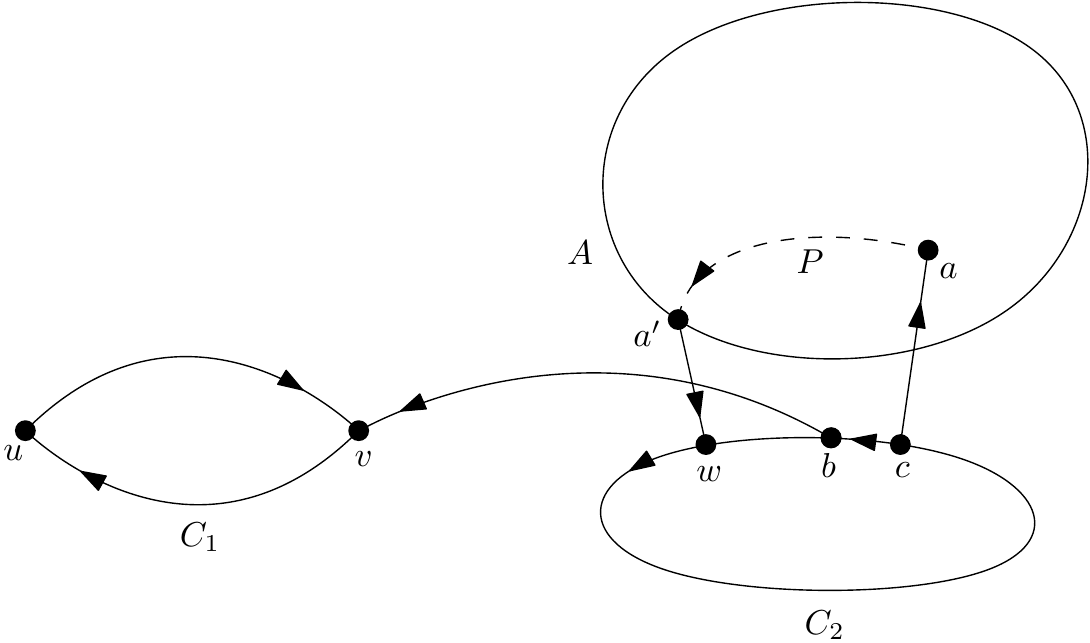}
            \caption{\small A picture illustrating the case when the set $C_A$ from the proof
                of Lemma~\ref{lem:more_partitions_dom} is non-empty.}
                \label{fig:C_A_nonempty}
        \end{figure}
        
        \item Case 2: The set $C_A$ is \underline{empty}.
        In other words, each vertex in $C_2$ has out-going edges to 
        the two vertices $u,v$ in $C_1$.
        We consider two subcases.
        \begin{itemize}
            \item There is an edge from some vertex in $C_1$, say $v$,
                to some vertex $w \in C_2$.
                Thus, $\{v, w\}$ is a 2-cycle distinct from $C_1$.
                However, by the assumption $G$ has exactly one 2-cycle.
                A contradiction.
            \item There is no such edge.
                {Therefore, each of $u,v$ has its two out-neighbors in $A$.
                Let $a$ be a vertex from $A$.
                Since $a$ does not lie in a $2$-cycle,
                Lemma~\ref{lem:thomassen_inneighborhood_cycle} implies
                there is a cycle $K$ formed by the in-neighbors of $a$.
                We claim that $K=C_1$:
                there is no vertex from $C_2$ in $K$, 
                since there is no edge from $C_2$ to $A$.
                Since $A$ is acyclic, $K$ cannot contain vertices only from $A$.
                Since there are no edges from $A$ to $C_1$,
                the cycle $K$ cannot contain vertices from both $A$ and $C_1$.
                
                The only candidate for $K$ is thus $C_1$
                which implies that $A$ contains only two vertices $a_1$ and $a_2$.
                
            Let $w_1$ be an out-neighbor of $a_1$ from $C_2$
                and $w_2$ be an out-neighbor of $a_2$ from $C_2$ different from $w_1$.
                Such $w_1$ and $w_2$ exist since there is no edge from $A$ to $C_1$.
                Recall that $u,v \in  C_1$ are both out-neighbors of $w_1$ and $w_2$.
                Whence, $\{a_1,w_1, v\}$ and $\{a_2, w_2, u\}$ are two disjoint cycles
                different from $C_1$ witnessing
                (by Lemma~\ref{lem:extending_friendly_sets})
                there is another friendly partition;
                a contradiction. (See Figure~\ref{fig:C_A_empty}.)
                }
        \end{itemize}
    \end{itemize}
\end{proof}   

\begin{figure}
    \centering
    \includegraphics{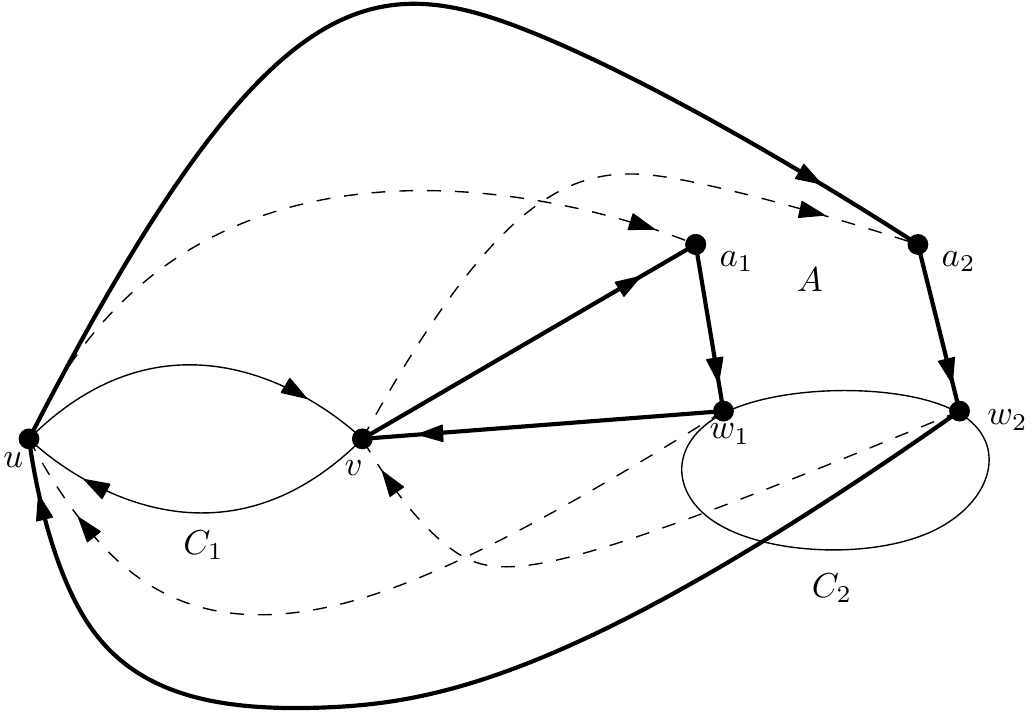}
    \caption{\small A picture illustrating the case of the proof of
        Lemma~\ref{lem:more_partitions_dom}
        when the set $C_A$ is empty and there is no edge from $C_1$ to $C_2$.}
        \label{fig:C_A_empty}
\end{figure}

\begin{proof}[Proof of Theorem~\ref{thm:more_partitions}]
    First of all, we delete edges
    so that all vertices have out-degree exactly 3.
    Then, we contract edges so that 
    in the resulting digraph $G$,
    all edges are dominated or a part of 2-cycle. 
    This procedure preserves the out-degree of 3.
    
If $G$ contains no 2-cycle, then it is isomorphic to one of the two digraphs from
    Lemma~\ref{lem:two_graphs}. Both these digraphs have more than one friendly partition.
    
If $G$ contains exactly one 2-cycle
    then it has more than one
    friendly partition as well by Lemma~\ref{lem:more_partitions_dom}.
    
Finally, if $G$ contains at least two 2-cycles, say $C_1, C_2$,
    then $(C_1,V_G\setminus{C_1}), (C_2,V_G\setminus{C_2})$
    are two distinct friendly partitions by Lemma~\ref{lem:extending_friendly_sets}.
    
As discussed in the beginning of Section~\ref{sec:existence_of_a_separable_vertex}
    these friendly partitions of $G$ induce (distinct) friendly partitions in the original digraph.
\end{proof}

\section{Strongly Connected Digraphs}\label{sec:strongly_connected_graphs}
In this section we prove Proposition~\ref{pro:strongly_connected}
    which says that in order to show
    that every pair of vertices can separated by a friendly partition
    in any digraph $G$ with minimum out-degree $d$,
    it suffices to only consider strongly connected digraphs $G$:

\begin{repproposition}{pro:strongly_connected}[Restatement]
Let $d \geq 3$. If every strongly connected digraph
    with minimum out-degree at least $d$ satisfies
    that each pair of vertices in it is separable,
    then every digraph with minimum out-degree at least $d$ satisfies this property.
\end{repproposition}
\begin{proof}
    Suppose we can separate each pair of vertices
    in strongly connected digraphs with out-degrees $\geq d$.
    Consider a digraph $G=(V,E)$ and its strongly connected components.
    Let us contract each strongly connected component to one vertex
    and denote the resulting digraph~$A$.
    Note that $A$ is acyclic.
    Let $S$ be the set of vertices of $G$
    that are contracted to vertices of out-degree 0 in $A$,
    let $G_S$ denote the subdigraph of $G$ induced by $S$,
    and let $u, v$ be a pair of vertices from $G$. 
    We need to show that $u,v$ are separable; we consider three cases:
    \begin{itemize}
        \item Case 1: $u, v \in S$.
            By the assumption, they are separable in $G_S$
            since it is a disjoint union of strongly connected digraphs
            with minimum out-degree at least $d$.
            This partition can be extended to a friendly partition of the entire digraph $G$ {by Lemma~\ref{lem:extending_friendly_sets}}.
            Thus, in this case $u,v$ are separable in $G$.
        \item Case 2: $u \in V \setminus S$, $v \in S$.
            We distinguish two subcases:
        \begin{itemize}
            \item There is a path $P$ from $u$ to $S$
                such that the $P \cap S = \{w\} \neq \{v\}$.
                Again, by the assumption $v$ and $w$ are separable in $G_S$.
                Now, it is sufficient to extend the partition of $G_S$
                to a partition of $G$ placing $P$ to the part of $w$.
            \item There is no such path.
                Let $V_u$ denote the set of vertices from $V \setminus S$
                which are reachable from~$u$.
                Then $v$ is the only out-neighbor of $V_u$ in $S$
                and thus,~$V_u$ induces a subdigraph~$G_u$
                which is of minimum out-degree at least $d-1 \geq 2$.
                Also $S$ induces a subdigraph $G_S$
                in which all out-degrees are $d\geq 2$.
                Thus, we can extend the disjoint pair $\{V_u, S\}$ to a friendly partition
                of the entire digraph $G$
                {by Lemma~\ref{lem:extending_friendly_sets}.
                Such partition separates $u$ and $v$.}
        \end{itemize}
        
        \item Case 3: $u,v \in V \setminus S$.
            We distinguish two subcases:
        \begin{itemize}
            \item There are {vertex}-disjoint paths $P_u$ from $u$ to some $s_u \in S$
            and $P_v$ from $v$ to some $s_v \in S$.
            {We can assume that $P_u \cap S = \{s_u\}$
            and $P_v \cap S = \{s_v\}$. (Otherwise, replace $P_u,P_v$ by prefixes $P'_u,P'_v$ which satisfy this.)}
            Since $s_v$ and $s_u$ are separable in $G_S$
            we can extend such partition to $G$
            so that $P_u$ is in the part of~$s_u$ and $P_v$ in the part of~$s_v$.
            \item There are no such paths.
            Then, by Menger's Theorem \cite{Menger27}, \cite{Aharoni09}
            \footnote{Menger's theorem for (possibly infinite) digraphs:
            let $A$ and $B$ be two sets of vertices in a possibly infinite digraph.
            Then there exist a family $P$ of disjoint $A \to B$ paths,
            and a set $S$ of vertices separating $A$ from $B$,
            such that $S$ consists of a choice of precisely
            one vertex from each path in P.}
            there exists $t \in V\setminus S$ which separates the pair $\{u,v\}$ from $S$.
            Now, delete all outgoing edges from $t$ and
            add edges $tu, tv$ and edges from $t$
            to some other $d-2$ out-neighbors of $u,v$
            which are different from $t$.
            (In the special case when $t\in\{u,v\}$, say $t=u$,
            add an edge $tv$ and another $d-1$ edges
            from $t$ to $d-1$ out-neighbors of $v$
            which are different from $u$.)
            
            Let us denote the resulting digraph $G^\prime$. 
            Consider the subdigraph $G^{\prime\prime}$ of $G^\prime$
            which is induced by all vertices which are reachable from $u$ or from $v$. 
            We claim that $G^{\prime\prime}$ is strongly connected
            with minimum out-degree at least $d$:
            indeed, to see that its minimum out-degree is at least $d$,
            observe that we only modified the out-neighborhood of $t$
            and we connected it with $d$ vertices which are reachable from $u$ or $v$.
            To see that $G^{\prime\prime}$ is strongly connected,
            let $x,y \in G^{\prime\prime}$ be a pair of vertices.
            Since $t$ separates $\{u,v\}$ from $S$ in $G$,
            and $x$ is reachable from $u$ or from $v$,
            it follows that there must be a path from $x$ to $t$. 
            (Or else, any path from $u$ or $v$ to $S$ via $x$
            would reach $t$ before it reaches $x$,
            which implies that $x\notin V(G^{\prime\prime})$,
            which is a contradiction to the definition of $x$.)
            Since there is an edge from $t$ to both~$u$ and $v$
            and since $y$ is also reachable from $u$ or $v$
            there is a walk from $x$ to $y$ through~$t$. 
        
        Therefore, $G^{\prime\prime}$
            is strongly connected of minimum out-degree at least $d$
            and by the assumption there is a friendly partition of $G^{\prime\prime}$
            separating $u,v$.
            The vertices $u,v$ can also be separated in $G$:
            indeed, we use the partition of $G^{\prime\prime}$
            and add $S$ along with a path from $t$ to $S$ to the same part as $t$.
            This gives us a partial friendly partition {separating $u,v$}
            that can be extended to the entire digraph $G$
            {by Lemma~\ref{lem:extending_friendly_sets}.}
        \end{itemize}
    \end{itemize}
    In all three cases, vertices $u,v$ are separable in $G$.
    This finishes our proof.
\end{proof}

\section{Vertex-Transitive Digraphs}\label{sec:vertex-transitive_graphs}
In this section, we prove our partial results for vertex-transitive digraphs.
    First of all, recall the equivalence relation
    ``being unseparable'' denoted by $\sim$ on the vertex set of a digraph:
    \begin{center}
        ``$u\sim v\quad \iff\quad $   $u$ cannot be separated from $v$
        by a friendly partition.''
    \end{center}
    Observe that that for a vertex-transitive digraph $G=(V,E)$
    each class of $\sim$ has the same number of vertices:
    indeed, let $u,v \in V$ be an arbitrary pair of vertices of $G$
    and let $\tau$ be an automorphism of $G$ mapping $u$ to $v$.
    If $u$ is separable from some vertex $s_u$
    by a friendly partition $\{V_1, V_2\}$
    then $v$ is separable from the vertex $\tau(s_u)$
    by $\{\tau(V_1), \tau(V_2)\}$
    (since $\{V_1, V_2\}$ is a friendly partition
    if and only if $\{\tau(V_1), \tau(V_2)\}$
    is a friendly partition).
    Conversely, if $u$ is separable from some vertex $s_v$
    by a friendly partition $\{V_1, V_2\}$
    then $v$ is separable from the vertex $\tau^{-1}(s_v)$
    by $\{\tau^{-1}(V_1), \tau^{-1}(V_2)\}$
    since $\{V_1, V_2\}$ is a friendly partition
    if and only if $\{\tau^{-1}(V_1), \tau^{-1}(V_2)\}$
    is a friendly partition.
    Therefore, each two vertices have the same number of
    vertices from which they can be separated by a friendly partition.
    This observation implies Proposition~\ref{pro:vertex_transitive_graph_prime}:
\begin{repproposition}{pro:vertex_transitive_graph_prime}[Restatement]
    If the number of vertices in a vertex-transitive digraph
    with degree at least $3$ is prime
    then each pair of vertices is separable.
\end{repproposition}
    Indeed, consider a transitive digraph $G$ with out-degrees $\geq 3$ and with a prime number $n$ of vertices.
    Then, since $G$ has a non-trivial friendly partition
    it follows that all equivalence classes with respect to $\sim$
    have size $< n$. 
    Thus, since all equivalence classes have the same size
    it follows that each equivalence class is a singleton,
    and consequently that each pair of vertices is separable.

We continue with several lemmas which lead to the proof of
    Proposition~\ref{pro:independent_sets}.
\begin{repproposition}{pro:independent_sets}[Restatement]
    Let $\sim$ denote the following equivalence relation
    on the set of vertices of a digraph $G$.
    \begin{center}
        ``$u\sim v\quad \iff\quad $   $u$ cannot be separated from $v$
        by a friendly partition.''
    \end{center}
    Then, if $G$ is vertex-transitive with degree $d\geq 3$
    then each equivalence class of $\sim$ is an independent set. 
    Moreover, each vertex has its out-neighbors in at least $3$ different classes.
\end{repproposition}
\begin{lemma}\label{lem:special_partition_imlies_singleton}
    Let $G = (V,E)$ be a digraph with all vertices of out-degree $\geq 3$.
    If there is a {friendly} partition $\{V_1, V_2\}$
    such that there exists $v_0 \in V_1$ with at least two out-neighbors in $V_1$
    and at least one out-neighbor in $V_2$,
    then there is an equivalence class of size 1.
\end{lemma}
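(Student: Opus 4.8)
```latex
\textbf{Proof proposal.}
The plan is to exploit the hypothesis directly by producing, from the single
``bad'' vertex $v_0$, a \emph{new} friendly partition that splits off a vertex
currently tied to $v_0$, thereby shrinking some equivalence class of $\sim$.
First I would observe the basic local move already used in
Lemma~\ref{lem:intersecting_cycles_and_cycle_imply_singleton}: since $v_0$ has
at least one out-neighbor in $V_2$, the modified partition
$\{V_1\setminus\{v_0\},\,V_2\cup\{v_0\}\}$ keeps $v_0$ happy. Whether it is
friendly depends only on the in-neighbors of $v_0$ inside $V_1$: each such
in-neighbor must retain an out-neighbor in $V_1$ after $v_0$ leaves. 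The key
point is that $v_0$ keeps \emph{two} out-neighbors in $V_1$, so moving $v_0$ does
not disturb the out-degree condition of $v_0$ itself relative to $V_1$; the only
thing to control is its in-neighbors.

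Next I would argue that if the naive move fails, it fails for a concrete,
exploitable reason: there is an in-neighbor $u$ of $v_0$ in $V_1$ whose only
out-neighbor in $V_1$ is $v_0$. I would then try to transfer the ``defect'' from
$v_0$ to $u$, i.e. push $v_0$ to $V_2$ while also keeping $u$ friendly by the
very fact that $v_0$ has two out-neighbors in $V_1$ which it could ``lend''
structurally. Concretely, I would set up the partition
$\{V_1\setminus\{v_0\},\,V_2\cup\{v_0\}\}$ and, case by case, repair friendliness
for each obstructing in-neighbor, much as the sequence
$v_0,v_1,v_2,\dots$ is built in
Lemma~\ref{lem:intersecting_cycles_and_cycle_imply_singleton}. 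The presence of
two out-neighbors of $v_0$ in $V_1$ is exactly what breaks the endless chain
there: it gives slack that forces the repair process to terminate in a genuine
new friendly partition rather than loop forever.

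Once I have two distinct friendly partitions that differ on $v_0$ (the original
one with $v_0\in V_1$ and the repaired one with $v_0\in V_2$), I would read off
that $v_0$ is separable from at least one vertex that it was grouped with, so
$v_0$ is \emph{not} alone in being maximally tied. The cleaner way to land the
conclusion is to observe that the relation $\sim$ and the counting from
Section~\ref{sec131} interact: producing a friendly partition that cuts
$v_0$ away from some $w$ with $v_0\sim w$ under the old family means the class of
$v_0$ was not forced, and by minimizing the size of $V_1$ (as in the proof of
Lemma~\ref{lem:intersecting_cycles_and_cycle_imply_singleton}) one can arrange
that the separated vertex ends up alone in its class. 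I would therefore choose
$\{V_1,V_2\}$ with $|V_1|$ minimal subject to the hypothesis, so that every
vertex of $V_1\setminus\{v_0\}$ has all its out-neighbors in $V_1$, and then show
the repaired partition isolates a single vertex, giving an equivalence class of
size $1$.

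The hard part will be showing that the repair process terminates in a valid
friendly partition rather than merely reshuffling the defect indefinitely. This
is precisely the step where the two-out-neighbors-in-$V_1$ hypothesis must be
used crucially: it should guarantee that the chain of forced in-neighbors cannot
avoid a vertex with two out-neighbors in $V_1$ forever, so that at some point the
move succeeds. I expect the cleanest argument to mirror the contradiction in
Lemma~\ref{lem:intersecting_cycles_and_cycle_imply_singleton}, where following
unique out-neighbors eventually reaches a branching vertex; here $v_0$ itself is
that branching vertex, so the process cannot return to $v_0$ without already
succeeding, and minimality of $|V_1|$ forbids it from wandering through
$V_1\setminus\{v_0\}$ indefinitely.
```
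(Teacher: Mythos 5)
Your overall skeleton matches the paper's proof: try the move $\{V_1\setminus\{v_0\},\,V_2\cup\{v_0\}\}$; if it fails, the obstruction is an in-neighbor $u\in V_1$ whose only out-neighbor in $V_1$ is $v_0$ (the paper calls these \emph{critical} in-neighbors); follow the chain of such obstructions and argue it terminates at a vertex that can be moved. But at the crucial step---why the chain cannot cycle inside $V_1\setminus\{v_0\}$---your argument has a genuine gap. You attribute termination to ``minimality of $|V_1|$,'' claiming one may choose the partition with $|V_1|$ minimal subject to the hypothesis so that every vertex of $V_1\setminus\{v_0\}$ has all its out-neighbors in $V_1$. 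First, this is not justified: the shrinking argument in Lemma~\ref{lem:intersecting_cycles_and_cycle_imply_singleton} applies Lemma~\ref{lem:extending_friendly_sets} anchored at the fixed disjoint cycles $I$ and $J$, and there is no analogous anchor here; shrinking $V_1$ need not preserve the hypothesis that some vertex has two out-neighbors in one part and one in the other. Second, if the claimed property did hold, it would prove too much: every in-neighbor of $v_0$ inside $V_1$ would then have all of its (at least $3$) out-neighbors in $V_1$, hence could not be critical, so the naive move would succeed immediately and your entire ``repair process'' would be vacuous---a sign the minimality claim cannot carry the weight you put on it. The paper's termination argument needs no minimality at all: each chain vertex $v_i$ with $i\geq 1$ is critical, hence has \emph{exactly one} out-neighbor in $V_1$, so a first repetition $v_i=v_j$ with $0<i<j$ would give $v_i$ two distinct out-neighbors $v_{i-1}\neq v_{j-1}$ in $V_1$ (distinct precisely because the repetition is the first), a contradiction; and a return to $v_0$ is impossible exactly because $v_0$ has two out-neighbors in $V_1$. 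Hence the chain is injective, so in a finite digraph it ends at a vertex with no critical in-neighbor, which moreover has an out-neighbor in $V_2$ (being critical for its predecessor, it has at least two there) and therefore can be moved.

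Your concluding inference is also too weak. From the two friendly partitions $\{V_1,V_2\}$ and $\{V_1\setminus\{v\},\,V_2\cup\{v\}\}$, where $v$ is the vertex that can finally be moved (note $v$ need not be $v_0$; the lemma only asserts that \emph{some} class is a singleton), the correct conclusion is that the class of $v$ under $\sim$ is exactly $\{v\}$: every $w\neq v$ lies either in $V_2$, and is separated from $v$ by the first partition, or in $V_1\setminus\{v\}$, and is separated by the second. Your version (``$v_0$ is separable from at least one vertex it was grouped with'') does not yield a class of size $1$, and the detour through the counting of Section~\ref{sec131} and a relation ``$\sim$ under the old family'' is not meaningful: $\sim$ quantifies over \emph{all} friendly partitions of $G$, so exhibiting the second partition directly shows the relevant pairs are not $\sim$-equivalent. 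Likewise, the idea that $v_0$ could ``lend'' its two out-neighbors in $V_1$ to keep an obstructing in-neighbor $u$ friendly has no content: $u$'s friendliness depends only on $u$'s own out-neighbors.
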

\begin{proof}
    Let us call an in-neighbor $u$ of a vertex $v \in V_1$ \emph{critical in-neighbor}
    if $u \in V_1$ and $v$ is the only out-neighbor of $u$ in $V_1$.
    If the vertex $v$ has at least one out-neighbor in $V_2$ and no critical in-neighbors,
    {then its equivalence class under $\sim$ is $\{v\}$ (in other words, it can be separated from any other vertex):
    indeed, this implies that the partition $\{V_1 \setminus \{v\}, V_2 \cup \{v\}\}$ is friendly,
    and so every vertex in $V\setminus\{v\}$ is separated from $v$ by $\{V_1,V_2\}$ or by $\{V_1 \setminus \{v\}, V_2 \cup \{v\}\}$.}
        
Suppose for contradiction that there is no equivalence class of size 1.
    Since $v_0$ has at least one out-neighbor in $V_2$
    and it is not a singleton,
    there must exist a critical in-neighbor of $v_0$ by the observation above.
    Let us denote it $v_1$.
    Analogously there must be a critical in-neighbor of $v_1$ etc.
    Therefore, there is a sequence $\sigma = (v_0, v_1, v_2, ...)$
    of vertices from $V_1$
    such that $v_{i+1}$ is a critical in-neighbor of $v_i$.
    This sequence must be infinite, otherwise the last element would be singleton.
    We consider two cases and we show that both lead to a contradiction:
\begin{itemize}
    \item Case 1: The first repeated vertex of $\sigma$ is $v_0$.
        That means $v_0$ is a critical in-neighbor of $v_i$ for some $i > 0$.
        This is a contradiction
        since $v_0$ has more than one out-neighbor in $V_1$ 
        by the assumption
        and thus, it cannot be a critical in-neighbor.
    \item Case 2: The first repeated vertex of $\sigma$ is $v_i$ for $i > 0$.
        In other words, there exist $0 < i < j$ such that $v_i = v_j$.
        That means $v_i = v_j$ must have at least 2 out-neighbors,
        $v_{i-1}$ and $v_{j-1}$, in $V_1$.
        (The vertex $v_{i-1}$ must be different from $v_{j-1}$,
        or else $v_i$ would not be the first repeated vertex.)
        However, the only vertex from the sequence $\sigma$
        having more than one out-neighbor in $V_1$ is $v_0$;
        a contradiction.
\end{itemize} 
\end{proof}

Note that Lemma~\ref{lem:special_partition_imlies_singleton} holds in general,
not only for vertex-transitive digraphs.
However, it implies the following corollary for vertex-transitive digraphs
since in such digraphs all classes of the equivalence~$\sim$ have the same size.

\begin{corollary}\label{cor:special_partition_imlies_separability}
    Let $G = (V,E)$ be a vertex-transitive digraph of out-degree $\geq$ 3.
    If there is a partition $V = \{V_1, V_2\}$
    such that there is a vertex from $V_1$ with at least two out-neighbors in $V_1$
    and at least one out-neighbor in $V_2$,
    then all pairs of vertices of $G$ are separable.
\end{corollary}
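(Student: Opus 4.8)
The plan is to read the corollary off from Lemma~\ref{lem:special_partition_imlies_singleton} together with the symmetry of vertex-transitive digraphs; essentially all of the real work has already been done in the lemma. The hypothesis of the corollary is precisely the hypothesis of Lemma~\ref{lem:special_partition_imlies_singleton} (understanding the given partition $\{V_1,V_2\}$ to be friendly): we are handed a vertex from $V_1$ that has at least two out-neighbors in $V_1$ and at least one out-neighbor in $V_2$. So the first step is simply to apply the lemma, which immediately yields that at least one equivalence class of the relation $\sim$ is a singleton.

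Next I would invoke the observation established at the start of this section (in the paragraph preceding Proposition~\ref{pro:vertex_transitive_graph_prime}): in a vertex-transitive digraph every equivalence class of $\sim$ has the same cardinality, because any automorphism $\tau$ with $\tau(u)=v$ sends friendly partitions to friendly partitions and therefore carries the $\sim$-class of $u$ bijectively onto the $\sim$-class of $v$. Combining this equal-cardinality fact with the singleton class produced by the lemma forces every equivalence class of $\sim$ to have size exactly $1$.

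Finally, every class being a singleton means that no two distinct vertices are $\sim$-equivalent, which by the very definition of $\sim$ says precisely that every pair of vertices can be separated by a friendly partition, as required. The one point that deserves care is the passage from ``some equivalence class has size $1$'' (the output of the lemma) to ``all equivalence classes have size $1$''; this is exactly where vertex-transitivity enters, through the equal-cardinality observation, and it is the only genuinely nontrivial step. I do not expect any substantial obstacle beyond correctly matching the hypotheses of Lemma~\ref{lem:special_partition_imlies_singleton} to those of the corollary and recording that the partition in the statement is taken to be friendly so that the lemma is applicable.
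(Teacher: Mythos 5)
Your proposal is correct and matches the paper's own (implicit) argument exactly: apply Lemma~\ref{lem:special_partition_imlies_singleton} to get one singleton class of $\sim$, then use the equal-cardinality observation for vertex-transitive digraphs to conclude all classes are singletons, i.e.\ every pair is separable. You also correctly flag the only subtlety --- that the partition in the corollary's statement must be read as a \emph{friendly} partition for the lemma to apply --- which is indeed how the paper uses the corollary in the proof of Proposition~\ref{pro:independent_sets}.
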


Now, we show that each equivalence class of $\sim$ for vertex-transitive digraphs
with all vertices of out-degree at least 3 is an independent set.
First of all, we show that each class is a cycle
or an independent set (Lemma~\ref{lem:eqv_cycle}).
Then, we show that it cannot be a cycle (Lemma~\ref{lem:eqv_cannot_be_a_cycle}).

\begin{lemma}\label{lem:eqv_cycle}
    If an equivalence class of $\sim$ in a vertex transitive digraph
    whose out-degrees are $d\geq 3$ contains an edge,
    then each equivalence class is a cycle.
\end{lemma}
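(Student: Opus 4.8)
The plan is to analyze the subdigraph of $G$ induced by the class $K$ that contains the edge, which I will denote $G[K]$, by first extracting two structural facts and then running a degree dichotomy on it.

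First, since $K$ contains an edge, it contains two distinct unseparable vertices, so not every pair of vertices of $G$ is separable. By the contrapositive of Corollary~\ref{cor:special_partition_imlies_separability}, \emph{no} friendly partition $\{V_1,V_2\}$ admits a vertex with at least two out-neighbors in its own part and at least one out-neighbor in the other part. Second, every friendly partition keeps each equivalence class of $\sim$ entirely on one side (otherwise two class-mates would be separated, contradicting $\sim$), and, since automorphisms of $G$ preserve friendly partitions and hence $\sim$, any automorphism sending $x\in K$ to $y\in K$ must fix $K$ setwise. Therefore $G[K]$ is itself vertex-transitive, and in particular regular.

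Let $\delta$ be the common out-degree inside $K$; since $K$ has an edge, $\delta\ge 1$. The heart of the argument is the dichotomy $\delta\in\{1,d\}$. Indeed, if some $x\in K$ had at least two out-neighbors inside $K$, then in every friendly partition (which places all of $K$ on one side) $x$ would have at least two out-neighbors on its own side, and the first fact above would force \emph{all} out-neighbors of $x$ onto that side; as this holds for every friendly partition, every out-neighbor of $x$ is unseparable from $x$ and hence lies in $K$, giving $\delta=d$. I would then rule out $\delta=d$ by the separation trick: if $\delta=d$ then $G[K]$ has minimum out-degree $d\ge 3$, so by Theorem~\ref{thm:thomassen} it contains two disjoint cycles, and Lemma~\ref{lem:extending_friendly_sets} yields a friendly partition of $G$ separating them --- a contradiction, since both cycles lie in the single class $K$.

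Hence $\delta=1$. By vertex-transitivity the in-degree inside $K$ is also constant, and by counting edges it equals $1$, so $G[K]$ is a disjoint union of directed cycles. If it were not a single cycle it would contain two disjoint cycles, which Lemma~\ref{lem:extending_friendly_sets} would again separate, a contradiction; thus $G[K]$ is a single directed cycle. Finally, every other class is the image of $K$ under some automorphism of $G$ and therefore induces an isomorphic subdigraph, so each equivalence class is a directed cycle. The main obstacle is the dichotomy step: the nontrivial point is to combine ``each class stays on one side of every friendly partition'' with the contrapositive of Corollary~\ref{cor:special_partition_imlies_separability} to upgrade ``two in-class out-neighbors'' all the way to ``all out-neighbors in the class,'' after which the disjoint-cycle separation trick via Lemma~\ref{lem:extending_friendly_sets} finishes both remaining cases.
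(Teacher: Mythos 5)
Your proof is correct, but it takes a genuinely different route through the key intermediate case. Both arguments share the same skeleton: restrict attention to the induced subdigraph $G[K]$, show it is vertex-transitive (automorphisms preserve friendly partitions, hence $\sim$, hence fix $K$ setwise), so it is regular of some internal out-degree $g\geq 1$; kill the high-degree case with Theorem~\ref{thm:thomassen} plus Lemma~\ref{lem:extending_friendly_sets}; observe that out-degree $1$ together with unseparability of class-mates forces a single cycle; and carry the conclusion to every other class via automorphisms. The difference is how the middle degrees die. The paper splits into $g\geq 3$, $g=2$, $g=1$, and the case $g=2$ needs a bespoke construction: pick $v\in K$, find a cycle $C_1$ in $K\setminus\{v\}$, then reach into a \emph{second} class $K_u$ (isomorphic to $K$ by transitivity) to find a disjoint cycle $C_2$ and a path from $v$ into it, and separate via Lemma~\ref{lem:extending_friendly_sets}. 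You instead prove the dichotomy $g\in\{1,d\}$: since every friendly partition keeps $K$ inside one part, the contrapositive of Corollary~\ref{cor:special_partition_imlies_separability} (applicable because $K$ contains two distinct unseparable vertices) forbids any vertex with two out-neighbors in $K$ from having an out-neighbor in the opposite part, so all its out-neighbors are unseparable from it and lie in $K$, giving $g=d$, which Thomassen then rules out. This eliminates all degrees $2\leq g\leq d-1$ uniformly and avoids the second-class construction entirely; there is no circularity, since Corollary~\ref{cor:special_partition_imlies_separability} is established before this lemma (the paper itself only invokes it later, in the proof of Proposition~\ref{pro:independent_sets}). What the paper's route buys in exchange is a $g=2$ argument that does not route through Lemma~\ref{lem:special_partition_imlies_singleton}, at the price of one extra case. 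One cosmetic caveat: the stated Corollary~\ref{cor:special_partition_imlies_separability} omits the word ``friendly'' before ``partition'', but its proof via Lemma~\ref{lem:special_partition_imlies_singleton} makes clear that friendly partitions are meant, which is exactly how you use it.
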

\begin{proof}
    Let us fix an arbitrary equivalence class $K$
    and let $G_K$ denote the subdigraph induced by $K$.
    For the rest of the classes the result follows from vertex-transitivity
    since $G_K \cong \tau(G_K)$ for each automorphism $\tau$ of $G$.
        
First of all, note that $G_K$ is vertex-transitive, and in particular all vertices in $G_K$ have the same out-degree $g\geq 1$:
    indeed, let $u,v$ be arbitrary pair of vertices of $G_K$
    and let $\tau$ be an automorphism of $G$ mapping~$u$ to~$v$.
    Since $u,v\in K$ are from the same equivalence class, we have
    $K = \tau(K)$, and therefore $\tau\vert_{K}$ is an automorphism of $G_k$ mapping $u$ to $v$.

Observe that if $g \geq 3$
    then $G_K$ itself can be partitioned in a friendly manner
    by Theorem~\ref{thm:thomassen} and Lemma~\ref{lem:extending_friendly_sets}.
    This is impossible since all pairs of vertices of $K$ are not separable.
        
Now, assume $g = 2$. Pick an arbitrary vertex $v$ from $K$.
    Then, the subdigraph induced by $K \setminus \{v\}$
    has all vertices of out-degree at least $1$.
    Therefore, it contains a cycle $C_1$.
    Let~$u$ be an out-neighbor of $v$ which is not in $K$,
    and let $K_u$ denote the equivalence class of $u$.
    By vertex-transitivity, the out-degree of all vertices
    in the subdigraph induced by $K_u$ is $2$ as well
    and thus, it must contain a cycle $C_2$
    and a path $P \subseteq K_u$ connecting $u$ with $C_2$
    (possibly consisting only of $u$, when $u$ is in $C_2$).
    Hence, there are two disjoint cycles $C_1 \subseteq K,
    C_2 \subseteq K_u$ and the path $\{v\} \cup P$ connecting $v$ and $C_2$. 
    Moreover, $C_1$ does not contain $v$,
    which means that $v \in K$ is separable from $C_1 \subseteq K$
    by some partition extending $C_1, C_2 \cup P \cup \{v\}$
    by Lemma~\ref{lem:extending_friendly_sets};
    a contradiction.
        
The only remaining possibility is $g = 1$.
    In this case, since $G_K$ is vertex-transitive, 
    it must be a union of disjoint cycles.
    Notice however that there can only be one cycle in $G_K$: 
    indeed, by Lemma~\ref{lem:extending_friendly_sets},
    disjoint cycles can be separated by a friendly partition.
\end{proof}

\begin{lemma}\label{lem:eqv_cannot_be_a_cycle}
    An equivalence class of $\sim$ in a vertex transitive digraph $G$
    whose out-degrees are  $d\geq 3$ cannot be a cycle.
\end{lemma}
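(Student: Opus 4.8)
The plan is to argue by contradiction. Suppose some equivalence class is a cycle; by Lemma~\ref{lem:eqv_cycle} every class is then a cycle, and by vertex-transitivity all of them share a common length $m$, which is at least $2$ since a cycle has an edge. I may also assume that $G$ is strongly connected: the strong components of a finite vertex-transitive digraph are isomorphic blocks with no edges between them (the condensation is a vertex-transitive acyclic digraph, hence edgeless), and two vertices in distinct components are trivially separable; so I replace $G$ by one of its components, which is again vertex-transitive with every class a cycle. Write $K(v)$ for the class of a vertex $v$ and $v^+$ for its successor on the cycle $K(v)$; then every vertex has exactly one out-neighbor, namely $v^+$, inside its own class, and its remaining $d-1 \ge 2$ out-neighbors lie outside it.

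The target is to produce a friendly partition with a \emph{spread} vertex, i.e.\ a vertex lying in one part with at least two out-neighbors in its own part and at least one in the other: Lemma~\ref{lem:special_partition_imlies_singleton} then yields an equivalence class of size $1$, contradicting that every class has size $m\ge 2$. I split into two cases according to how a vertex distributes its inter-class edges. \emph{Easy case:} some vertex $v$ has out-neighbors in two distinct classes $K',K''$, both different from $K(v)$. Take $A=K(v)\cup K'$; since $A$ and $V\setminus A$ are unions of whole class-cycles, the partition $\{A,V\setminus A\}$ is friendly, and $v\in A$ has both $v^+$ and its $K'$-neighbor in $A$, while its $K''$-neighbor lies in $V\setminus A$. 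Thus $v$ is spread and we are done.

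\emph{Hard case (the main obstacle):} every vertex sends all of its $d-1\ge 2$ inter-class edges into a single other class. In this ``single-target'' situation the one-class-versus-rest and union-of-classes partitions never produce a spread vertex, because each vertex's out-neighborhood is then either wholly contained in one part or split $1$-to-$(d-1)$; so I would instead separate two vertices of one class directly, by exhibiting two vertex-disjoint cycles through them and invoking Lemma~\ref{lem:extending_friendly_sets}. The leverage here is that each vertex has at least two \emph{parallel} edges into its target cycle and, dually, is reached by at least two edges from vertices targeting its own class; combined with strong connectivity this should let me route, through a fixed class $K$, two disjoint cycles meeting $K$ in two different vertices (for a vertex $a\in K$ one closes a cycle by leaving $K$, traversing an arc of the target cycle, and re-entering $K$ at one of the $\ge 2$ in-neighbors of $a$).

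The crux — and the step I expect to require the most care — is guaranteeing that these two cycles, and in particular the arcs they use inside the shared target cycles, can be chosen disjoint; the abundance of parallel edges ($d-1\ge 2$ both in and out at every vertex) is exactly what should make this routing possible. The smallest instance already illustrates the mechanism: when the two classes are $2$-cycles and $d=3$ one gets $\vec{K_4}$, and there the split into two disjoint $2$-cycles separates any two vertices of a class, so no class can be a cycle. Carrying this disjointness argument through for general $m$ and general single-target structure is the heart of the proof.
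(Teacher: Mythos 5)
Your reduction to a strongly connected $G$ in which all classes are cycles of a common length $m\ge 2$ is sound, and your ``easy case'' is correct and even a little cleaner than the paper's: a vertex with out-neighbors in two distinct foreign classes is spread in the friendly partition $\{K(v)\cup K',\,V\setminus (K(v)\cup K')\}$, and Lemma~\ref{lem:special_partition_imlies_singleton} then produces a singleton class, contradicting $m\ge 2$. Note, however, that a large chunk of what you call the hard case is also easy: if two \emph{distinct} vertices $u,v$ of one class $K$ target \emph{distinct} classes $L_1\neq L_2$, then $\{u\}\cup L_1$ and $\{v\}\cup L_2$ are disjoint friendly sets, so Lemma~\ref{lem:extending_friendly_sets} separates $u$ from $v$ outright --- no cycle-routing needed; this is exactly how the paper disposes of its Case~1.

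The genuine gap is the truly hard case, where every vertex of every class sends all $d-1$ outer edges to one and the same class, so that (after your strong-connectivity reduction) the class digraph is a single directed cycle $K_1\to K_2\to\cdots\to K_p\to K_1$ with each consecutive bipartite connection $(d-1)$-regular. Here you state the target --- two vertex-disjoint cycles meeting $K_1$ in different vertices --- but give no proof, and this is precisely the heart of the lemma. The natural attempt is to pick, via Hall's theorem, a perfect matching $M^{(i)}$ from $K_i$ to $K_{i+1}$ for each $i$ and follow them around: the orbits of the composed permutation $\sigma=M^{(p)}\circ\cdots\circ M^{(1)}$ of $K_1$ do give pairwise vertex-disjoint cycles, but only if $\sigma$ has at least two orbits, and nothing in your sketch rules out that every available choice of matchings composes to a single $m$-cycle, in which case the construction yields one long cycle through all of $K_1$ and no disjoint second one. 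The paper sidesteps this obstruction entirely: it takes a counterexample with fewest vertices, uses Hall's theorem to find \emph{one} perfect matching $K\to L$ between consecutive classes, deletes the other $K$-to-$L$ edges and the edges inside $K$, and contracts the matching; this preserves all in- and out-degrees, lets friendly partitions lift back to $G$, and shrinks the graph, contradicting minimality, with the two-class base case settled by Theorem~\ref{thm:more_partitions} (a second friendly partition of $K\cup L$ must split a class). So your routing claim is exactly the unproven crux: either supply an argument that the matchings can be chosen so that $\sigma$ is not a single cycle (or produce the two disjoint cycles by other means), or switch to a contraction/minimal-counterexample scheme as the paper does.
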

\begin{proof}
    We prove by contradiction: consider a vertex-transitive graph $G$ with out-degrees $d\geq 3$ in which some equivalence class is a cycle.
    Further assume that $G$ has the minimum number of vertices among all such graphs.
    Notice that by vertex-transitivity all equivalence classes in $G$ are cycles. 
    Let $G_\sim$ be the digraph whose vertex set consists of the
    equivalence classes of $G$
    and there is an edge from class $K$ to class $L$ in $G_\sim$
    if and only if there are $v_K \in K$ and $v_L \in L$
    forming an edge $v_K \to v_L$ in $G$.
    We consider two cases:
\begin{itemize}
    \item Case 1: 
    There is a vertex 
        in $G_\sim$,
        corresponding to an equivalence class $K$ from $G$,
        whose out-degree is $\geq 2$.
        Thus, there are $u,v \in K$
        such that $u\neq v$ (recall that $K$ has at least two vertices
        since it is a cycle)
        and classes $K_u \neq K_v$ different from $K$
        such that $u$ has an out-neighbor in class $K_u$
        and $v$ has an out-neighbor in class $K_v$
        (because each vertex in $K$ has $d-1 \geq 2$ out-neighbours outside $K$).
        Now, we can easily separate vertices $u$ and $v$
        by Lemma~\ref{lem:extending_friendly_sets}
        since the classes $K_u$ and $K_v$ are cycles. 
        This contradicts unseparability of the vertices
        $u$ and $v$ from the same class $K$.
    \item Case 2:
    All out-degrees in $G_\sim$ are 1.
    Thus, $G_\sim$ contains a cycle $C$ with no edge going outside from it.
    Consider a subdigraph $G_C$ of $G$ induced by the vertices from the unions of classes in $C$.
    Note that all out-degrees in $G_C$ are $d$
    since there is no outgoing edge from $C$ in $G_\sim$.
    We claim that also all in-degrees in $G_C$ are $d$.
    Indeed, $G$ is transitive and hence all in-degrees in it are $d$
    Therefore, since all out-degrees in $G_C$ are $d$,
    and no vertex in $G_C$ can have in-degree greater than $d$,
    it follows that all in-degrees in $G_C$ must be $d$.
    
    We first claim that the cycle $C$ must contain more than $2$ vertices:
    indeed, otherwise we reach a contradiction since
    Theorem~\ref{thm:more_partitions} implies that $G_C$ has at least two friendly partitions:
        one partition being $\{C_1, C_2\}$ but every other partition has to separate vertices from $C_1$.
        By Lemma~\ref{lem:extending_friendly_sets} this partition extends to a friendly partition of $G$,
        which separates vertices from the same equivalence class which is a contradiction.

    Thus, we can assume that the cycle $C$ from $G_\sim$ contains at least 3 vertices.
    Pick classes $K,L$ such that $K \to L$ is an edge in $C$.
    By Hall's marriage theorem \cite{Hall35}
        \footnote{A bipartite graph $G$ with parts $X, Y$ of the same size 
        has a perfect matching in $G$ if and only if for every $W\subseteq X$,
        $\lvert W\rvert \leq \lvert N_G(W)\rvert$.
        Moreover, each $d$-regular bipartite graph satisfies this condition:    
        indeed, let $E_W$ be a set of edges between $W$ and $N_G(W)$. Then,
        $d\lvert W\rvert = \lvert E_W\rvert  \leq d\lvert N_G(W)\rvert\implies \lvert W\rvert \leq \lvert N_G(W)\rvert$.}
        there exists a perfect matching $M$ between $K$ and $L$ which consists of edges from $K$ to $L$ in $G_C$:
        indeed, this follows by applying Hall's Theorem to the graph whose sides are the classes $K,L$
        and whose edges are the edges between $K$ and $L$, without their orientation.
        This graph is $(d-1)$-regular and hence contains a perfect matching, because $d-1>0$.
        
        Now, remove the edges between $K$ and $L$
        which are not in $M$ as well as the edges between vertices in $K$.
        Thus, each vertex in $K$ now has in-degree $d-1$ and out-degree $1$
        while each vertex in $L$ has out-degree $d$ and in-degree $1$.
        Next, contract the edges of $M$ and note that both out-degree and in-degree
        of each vertex is again $d$ in the resulting graph.
        Further, the number of equivalence classes is decreased by $1$
        since the vertices from $K$ are identified with the vertices
        of $L$.%
        \footnote{
        From the viewpoint of $G_\sim$,
        this corresponds to a contraction of one edge of the cycle $C$.}
        (See Figure~\ref{fig:eq_class_cannot_be_a_cycle}.)
        Moreover, every friendly partition of the resulting digraph
        can be extended to a friendly partition of $G$ by putting identified vertices in the same part.
        Thus, the resulting graph has fewer vertices than $G$,
        and still satisfies that at least one of its equivalence classes
        is a cycle. This contradicts the minimality of $G$.
\end{itemize}
\end{proof}
    
\begin{figure}
    \centering
    \includegraphics[scale=.971]{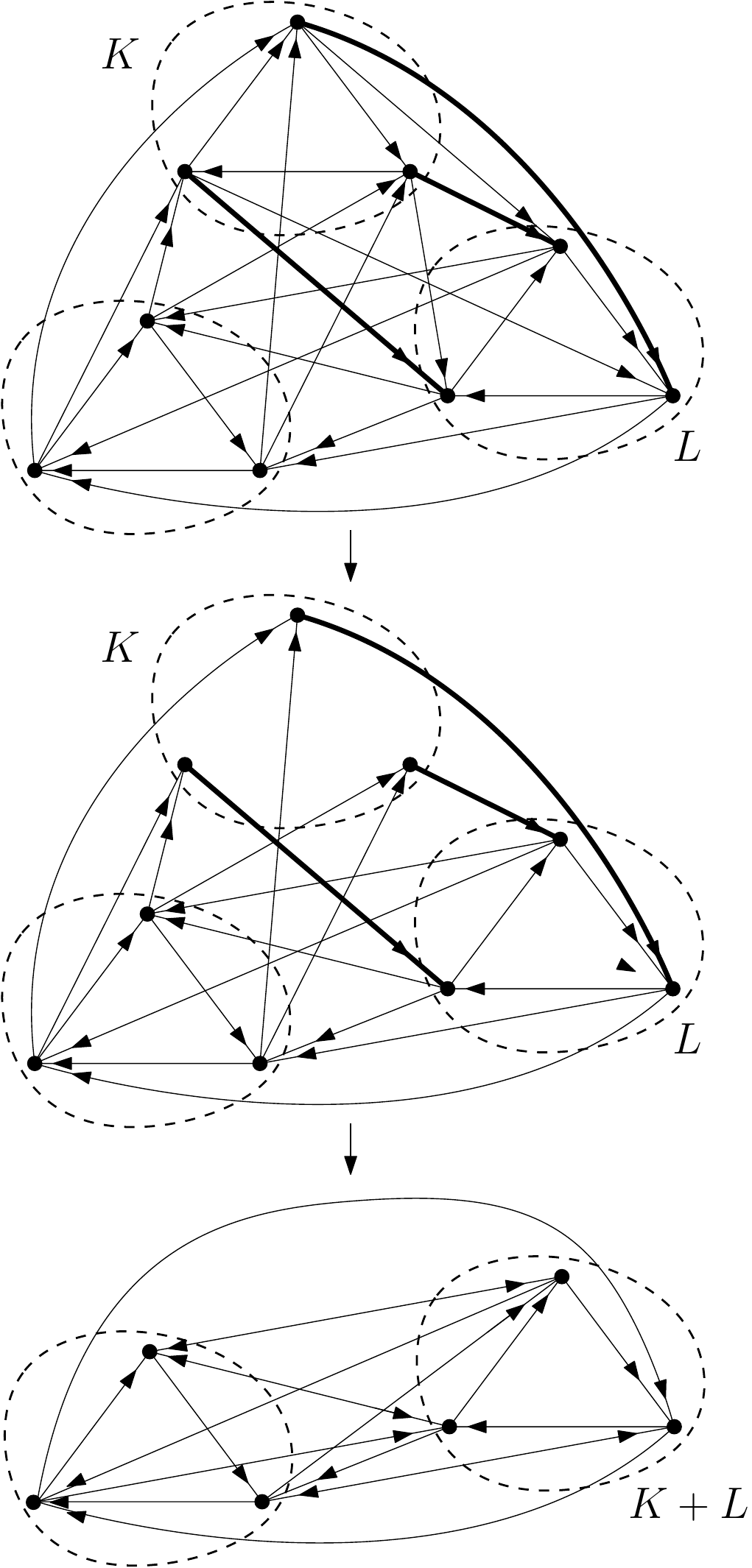}
    \caption{\small A picture showing how we reduce the number of equivalence classes
        using a perfect matching given by Hall's marriage theorem
        in the proof of Lemma~\ref{lem:eqv_cannot_be_a_cycle}.}
        \label{fig:eq_class_cannot_be_a_cycle}
\end{figure}

We are now ready to wrap-up the proof of Proposition~\ref{pro:independent_sets}:
\begin{proof}[Proof of Proposition~\ref{pro:independent_sets}]
    By Lemma~\ref{lem:eqv_cycle}, if there is an equivalence class contains an edge
    then it is a cycle, and by Lemma~\ref{lem:eqv_cannot_be_a_cycle} an equivalence class cannot be a cycle.
    Thus, each equivalence class must be an independent set.
    
    It remains to prove that each vertex has its out-neighbors in at least 3 different classes:
\begin{itemize}
    \item Case 1: If some vertex $v$ has all its out-neighbors
        in the same equivalence class $K$
        then $v$ cannot be separable from its out-neighbors.
        This implies that $v \in K$ which is impossible
        since every equivalence class is an independent set.
    \item Case 2: If $v$ has its out-neighbors in exactly two classes $K, L$
        then in one of them, say in~$K$, there are at least two out-neighbors $x,y$ of $v$.
        Consider a friendly partition separating $v$ from some vertex $u \in L$.
        (Note that $v\notin L$ because $L$ is independent and $v$ has an out-neighbor in $L$.)
        Let $V_v$ be the part containing $v$ and $V_u$ the part containing $u$.
        All vertices of $L$ must be in $V_u$ since $u$ is unseparable from them
        and all vertices from $K$ must be in $V_v$ (or else $v$ would have
        no out-neigbor in $V_v$).
        However, by Corollary~\ref{cor:special_partition_imlies_separability} 
        all pairs of vertices in $G$ are separable since $v$ has at least two out-neigbors, $x,y$, in its part $V_v$ and at least one neigbor, $u$,
        in the other part $V_u$.
        In other words, all classes are singletons which contradicts the fact that $x,y$ are in the same class.
    \end{itemize}
    Therefore, each vertex has its out-neighbors in at least 3 different classes. 
\end{proof}

\section{Infinite Digraphs}\label{sec:infinite_graphs}
This last section consists of the proof of Proposition~\ref{pro:infinite_graphs}.
    
\begin{repproposition}{pro:infinite_graphs}[Restatement]
    Let $G$ be a (possibly infinite) digraph with minimum out-degree at least $3$.
    Then, there exists a friendly partition in $G$.
\end{repproposition}

    We start with the following variant of König's lemma~\cite{Konig27}.
    
\begin{lemma}\label{lem:infinite_path}
    Let $G(V,E)$ be an infinite digraph with bounded out-degree and with no cycle.
    If there is a vertex $v \in V$ such that there are infinitely many
    vertices reachable from it, then $G$ contains an infinite path.
\end{lemma}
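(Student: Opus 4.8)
The plan is to construct an infinite path greedily by always moving to an out-neighbor from which infinitely many vertices remain reachable. For a vertex $u$, let $R(u)$ denote the set of all vertices reachable from $u$ by a directed path (including $u$ itself). The engine of the argument is the observation that infinite reachability is inherited by some out-neighbor: since $R(u) = \{u\} \cup \bigcup_{u \to w} R(w)$ and $u$ has only finitely many out-neighbors, if $R(u)$ is infinite then it cannot be the union of $\{u\}$ with finitely many finite sets, so at least one out-neighbor $w$ of $u$ must satisfy that $R(w)$ is infinite. This is exactly where the bounded-out-degree hypothesis is used: a vertex with infinitely many out-neighbors could reach infinitely many vertices without any single successor doing so, and the lemma would fail.

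Armed with this observation, I would build the path recursively. Set $u_0 = v$, which by hypothesis satisfies that $R(u_0)$ is infinite. Given $u_i$ with $R(u_i)$ infinite, note that $u_i$ must have at least one out-neighbor (otherwise $R(u_i) = \{u_i\}$ would be finite), and by the inheritance observation I can choose an out-neighbor $u_{i+1}$ with $R(u_{i+1})$ infinite. Iterating yields an infinite directed walk $u_0 \to u_1 \to u_2 \to \cdots$.

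It then remains to argue that this walk is a genuine path, i.e.\ that it visits no vertex twice. This is precisely where acyclicity enters: if $u_i = u_j$ for some $i < j$, then $u_i \to u_{i+1} \to \cdots \to u_j = u_i$ would be a closed directed walk and hence would contain a directed cycle, contradicting the assumption that $G$ has no cycle. Thus all the $u_i$ are distinct and the walk is the desired infinite path.

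I do not expect a serious obstacle here; the statement is essentially König's lemma adapted to the directed acyclic setting. The two points that require care are (i) invoking finiteness of the out-neighborhood correctly in the inheritance claim, which is a ``finite union of finite sets is finite'' argument and is the only place the degree bound is needed, and (ii) using \emph{acyclicity}, rather than local finiteness, to guarantee that the greedily constructed walk never repeats a vertex.
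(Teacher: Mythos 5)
Your proposal is correct and follows essentially the same route as the paper: a greedy construction in which each next vertex is chosen (via the infinite pigeonhole principle, using the finiteness of each out-neighborhood) so that infinitely many vertices remain reachable, with acyclicity guaranteeing that the resulting walk never repeats a vertex and is therefore an infinite path. Your write-up merely makes explicit the two points the paper states tersely, namely the inheritance claim $R(u)=\{u\}\cup\bigcup_{u\to w}R(w)$ and the closed-walk-contains-a-cycle argument.
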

\begin{proof}
    We can construct such infinite path inductively as follows.
    We start with $P^{(1)} = \{v\}$
    and in each step we extend the path $P^{(i)}$ to $P^{(i+1)}$
    by one vertex assuming that the set of reachable vertices
    from the last vertex of $P^{(i)}$ is infinite
    (in the  beginning such assumption is guaranteed
    by the assumption from the statement).
    We can always choose a next vertex preserving our assumption
    (by the infinite pigeonhole principle)
    since out-degree of each vertex is bounded.
    Note that no vertex is repeated since $G$ contains no cycle.
\end{proof}
\begin{proof}[Proof of Proposition~\ref{pro:infinite_graphs}]
    It will be convenient to assume that all out-degrees are exactly 3.
    This is without loss of generality since by the Axiom of choice we can choose exactly three out-neighbors
    of each vertex and remove all of its other outgoing edges.
    
Let $V$ denote the vertex set of $G$.
    Observe that if there are disjoint sets $P_1, P_2 \subset V$
    such that they induce subdigraphs of out-degree at least 1
    then we can extend them to a friendly partition in the same way
    as we can do it for finite digraphs (Lemma~\ref{lem:extending_friendly_sets}).

Our proof strategy is to show there exist disjoint sets $P_1, P_2 \subset V$
    such that for each $i\in\{1,2\}$ either there is a directed cycle containing all vertices of $P_i$
    or there is an infinite path containing all vertices of $P_i$,
    each of which induces a cycle or an infinite path.
    Both a cycle and an infinite path have out-degrees 1 and hence
    this guarantees the existence of a friendly partition of $G$.

For a vertex $u$ in $G$, let $R_u$ denote the set of vertices reachable from a vertex $u$.
    If there is a vertex $u$ such that $R_u$ is finite
    then $R_u$ induces a finite subdigraph whose out-degrees are all~3.
    By Theorem~\ref{thm:thomassen} this finite digraph contains two disjoint cycles 
    and we are done.

Otherwise, pick an arbitrary pair of distinct vertices $u$ and $v$.
    Let $V_k \subseteq R_u \cup R_w$ denote the set of vertices
    with distance \emph{at-least} $k \in \mathbb{N}$ from \emph{both} $u$ and~$v$.
    (So, $V_1= R_u\cup R_w$ is the set of vertices reachable from $u$ \emph{or} $v$.)
    Note, that for all $k$, $V_k\neq\emptyset$, because both $R_u$ and $R_v$ are infinite
    and because all the out-degrees are $3$.
    We consider two cases:
\begin{itemize}
    \item Case 1: There exists $k$ for which there do not exist two disjoint paths,
        one starting in $u$ and one in $v$ that end in $V_k$.
        Therefore, by Menger's theorem for infinite graphs
        \cite{Menger27}, \cite{Aharoni09}
        there exists a vertex $t$ separating $\{u,v\}$ from $V_k$.
        Let us delete the vertex $t$ from $G$
        and let $V_{u,v}$ denote the set of vertices reachable from $u$ or $v$
        in the resulting digraph $G \setminus \{t\}$.
        Note that $V_{u,v}$ is finite and that it induces a subdigraph with minimum out-degree at least 2.
        Such subdigraph contains a cycle $C_1$.
        
    The set $V \setminus V_{u,v}$ induces an infinite subdigraph
        in which there are infinitely many vertices reachable from $t$:
        indeed, in $G$ there are infinitely many vertices reachable from $u$ \new{or} from $v$,
        and $t$ separates the finite set $V_{u,v}$ containing $u$ and $v$ from the infinite set $V \setminus \left( V_{u,v} \cup \{t\}\right)$
        containing the remaining vertices reachable from $u$ \new{or} from $v$.
        Therefore, Lemma~\ref{lem:infinite_path} implies that the subdigraph induced by $V \setminus V_{u,v}$
        contains a cycle $C_2$ or an infinite path $P$.
        Consequently, $C_1$ and either $C_2$ or $P$
        witness the existence of a friendly partition in the original graph $G$.
    \item Case 2: For every $k$, there are two disjoint paths from $u$ and $v$ to $V_k$.
        We prove that this implies the existence of two infinite disjoint paths
        starting in $u$ and $v$, respectively.
        We construct such paths in the following way
        (which is similar to the proof of Lemma~\ref{lem:infinite_path}):
        we start with $P^{(1)}_u = \{u\}, P^{(1)}_v = \{v\}$ and in each step,
        we extend the paths $P^{(i)}_u, P^{(i)}_v$
        to $P^{(i+1)}_u, P^{(i+1)}_v$, each by one vertex.
    
    Let $\mathcal{P}^{(i)}$ be the set of all pairs of finite disjoint paths
        of the same length with prefixes $P^{(i)}_u$ and $P^{(i)}_v$,
        respectively.
        In the base-case, $\mathcal{P}^{(1)}$ is infinite by the assumption in Case 2.
        We construct the two disjoint paths by maintaining the invariant
        that $\mathcal{P}^{(i)}$ is infinite for all $i$.
        Indeed, assume we have done so for $i$ steps:
        so, $\mathcal{P}^{(i)}$ is infinite, which in particularly implies that the constructed paths $P^{(i)}_u$ and $P^{(i)}_v$ are disjoint.
        Now, there are only finitely (in fact, $9$) possible ways
        to extend $P^{(i)}_u$ and $P^{(i)}_v$ since the out-degree of each vertex is $3$.
        Therefore, since $\mathcal{P}^{(i)}$ is infinite, 
        the infinite pigeon principle implies that there is at least one extension for which $\mathcal{P}^{(i+1)}$
        is infinite.
        In this way, we can extend $P^{(i)}_u$ and $P^{(i)}_w$ indefinitely.
        This pair of disjoint paths then witnesses the existence of a friendly partition of $G$
        as required.
\end{itemize}
\end{proof}

\section{Appendix}\label{sec:appendix}
Recall that Question~\ref{q:r_friendly_partition} asks whether for every
$r > 1$ there is a $d(r)$ such that each digraph with minimum out-degree at least $d(r)$ has a non-trivial $r-$friendly partition.

Note that an affirmative answer to this question has the following nice corollary:
each digraph on $n$ vertices with minimum out-degree at least $d(r)$ contains a subdigraph
with minimum out-degree $r$ on at most $\frac{n}{2}$ vertices.

While we do not know the answer to Question~\ref{q:r_friendly_partition},
we present a proof by Ron Holzman of the above corollary.
We focus on the case $r=2$ but the idea applies more generally.

\begin{proposition}[Ron Holzman, personal communication]\label{pro:ron}
    If $G$ is a digraph on $n$ vertices with all out-degrees 10
    then it contains a subdigraph on at most $\frac{n}{2}$ vertices
    with all out-degrees at least 2.
\end{proposition}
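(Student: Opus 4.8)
The plan is to reduce the statement to a \emph{non-covering} weakening of the open $r=2$ friendly-partition problem (Question~\ref{q:r_friendly_partition}): namely, it suffices to produce two \emph{vertex-disjoint} subdigraphs $H_1,H_2$, each with minimum out-degree at least $2$. Indeed, if such $H_1,H_2$ exist then $|V(H_1)|+|V(H_2)|\le n$, so the smaller of the two lives on at most $n/2$ vertices and is exactly the subdigraph we want. As usual I would first reduce to the case where every out-degree is \emph{exactly} $10$ (deleting superfluous out-edges only lowers out-degrees, hence preserves the hypothesis). The point of this reformulation is that, unlike a genuine $2$-friendly partition, we do \emph{not} require $H_1$ and $H_2$ to cover all of $V$; this slack is what should make the problem tractable while the partition version remains open.

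To prove the two-disjoint claim I would argue by contradiction, taking a min-out-degree-$2$ subdigraph $S\subseteq V$ of \emph{smallest} order. If $|S|\le n/2$ we are already done (we do not even need a second copy), so assume $|S|>n/2$. By minimality of $S$, the complement $T=V\setminus S$ (which has fewer than $n/2$ vertices) can contain no min-out-degree-$2$ subdigraph at all; equivalently, $G[T]$ is \emph{$1$-out-degenerate}, meaning it collapses to the empty set under repeated deletion of vertices of current out-degree $\le 1$. The goal is then to contradict the fact that every vertex of $T$ still has out-degree $10$ in $G$: the peeling order on $T$ forces each early-peeled vertex to send almost all of its $10$ out-edges either across into $S$ or backward within $T$, and I would try to combine this with the minimality of $S$ (every $v\in S$ has an in-neighbour in $S$ of within-$S$ out-degree exactly $2$) together with the abundant disjoint-cycle structure that out-degree $10$ guarantees via the Thomassen-type results already available (Theorem~\ref{thm:thomassen_3cycles} and the intersecting-cycle lemmas). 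In particular I would try to manufacture within-set out-degree $2$ by gluing two intersecting cycles at a shared vertex, in the same spirit as the construction driving Theorem~\ref{thm:separable_vertex}, and then extend such a seed by reachability exactly as in Lemma~\ref{lem:extending_friendly_sets}.

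The step I expect to be the genuine obstacle is the upgrade from out-degree $1$ (cycles) to out-degree $2$ while keeping two structures vertex-disjoint. The reason is structural: Thomassen's contraction of a non-dominated edge $u\to v$, which is the engine behind all of the $r=1$ results here, deletes the remaining out-edges of $u$ and so does \emph{not} lift a min-out-degree-$2$ subdigraph back to $G$ (the merged vertex carries two out-edges, but after splitting, $u$ retains only the single edge $u\to v$). This failure is precisely why the $r=2$ theory is harder and why Question~\ref{q:r_friendly_partition} is open, so a substitute for contraction is unavoidable. Concretely, the crux is to rule out a smallest min-out-degree-$2$ subdigraph of order $>n/2$ using the out-degree bound alone: because in-degrees are completely unconstrained, a purely local edge-count can be absorbed by high-in-degree vertices and must fail, so the argument has to extract cyclic (rather than merely density) information, and to verify that the specific threshold $10$ is enough to carry it through.
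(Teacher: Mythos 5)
Your proposal is not a proof: it is a reduction plus an explicitly unresolved core step. The reduction itself is fine (two vertex-disjoint min-out-degree-$2$ subdigraphs suffice, and taking a smallest such subdigraph $S$ either finishes immediately or leaves a $1$-out-degenerate complement $T$ with $|T|<n/2$), but everything after that is a list of things you ``would try,'' and you yourself explain why the main available tool fails: Thomassen's contraction of non-dominated edges does not lift min-out-degree-$2$ subdigraphs back through the contraction, so none of the cycle machinery (Theorem~\ref{thm:thomassen_3cycles}, Lemma~\ref{lem:extending_friendly_sets}, the intersecting-cycle lemmas) is known to produce the contradiction you need when $|S|>n/2$. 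Gluing two intersecting cycles at a shared vertex does not give a min-out-degree-$2$ subdigraph either: only the shared vertex has two out-neighbours inside the union, every other vertex has just one. So the crux --- showing that out-degree $10$ forces a min-out-degree-$2$ subdigraph on at most $n/2$ vertices --- is exactly where your argument stops, and nothing in the proposal certifies that the threshold $10$ is sufficient for your route.

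The paper's proof avoids all of this structure with a one-paragraph first-moment (alteration) argument, which is the idea you are missing. Sample each vertex independently with probability $\frac13$ to get $X$; let $Z$ (resp.\ $O$) be the vertices with zero (resp.\ exactly one) out-neighbour in $X$; form $Y$ by adding to $X$ two out-neighbours of each vertex of $Z$ and one out-neighbour of each vertex of $O$. Then \emph{every} vertex of $G$ --- in particular every vertex of $Y$ --- has at least two out-neighbours in $Y$, so $G[Y]$ has minimum out-degree at least $2$, and $\mathrm{E}(|Y|)\leq n\left(\frac13+2\left(\frac23\right)^{10}+\frac{10}{3}\left(\frac23\right)^{9}\right)<0.46\,n$, so some realization has $|Y|<n/2$. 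The out-degree bound $10$ enters only through this numerical estimate; no cyclic or degeneracy information is needed at all, which is precisely why this weakening of the $r=2$ problem is tractable even though Question~\ref{q:r_friendly_partition} remains open.
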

\begin{proof}
    We select every vertex independently with probability $\frac{1}{3}$
    and denote by $X$ the random set of these selected vertices.
    Let $Z$ be the set of vertices of $G$ having no out-neighbor in $X$,
    and $O$ be the set of vertices of $G$ having one out-neighbor in $X$.
    We create a set of vertices $Y$
    by adding to $X$ two out-neighbors of each vertex in $Z$
    and one out-neighbor of each vertex in $O$.
    Such $Y$ contains at least two out-neighbors of every vertex of $G$.
    We proceed to upper bound the expected size of $Y$.
    \begin{align*}
        \text{E}(|Y|) \leq \text{E}(|X|) + 2 \text{E}(|Z|) + \text{E}(|O|) =
        n \left(\frac{1}{3} + 2\left(\frac{2}{3}\right)^{10} +
        \frac{10}{3}\left(\frac{2}{3}\right)^{9}\right) < 0.46n.
    \end{align*}
    Therefore, there exists a realization of $Y$ of size less than $0.46n < \frac{n}{2}$.
    Since $Y$ contains at least two out-neighbors of every vertex of $G$
    it has to itself induce a subdigraph with minimum out-degree at least 2.
\end{proof}

One can extend this result for $r > 2$
    by choosing suitable $d=d(r)$ so that
\begin{align*}
    \frac{1}{3} + \sum_{k=0}^{r} (r-k)\binom{d(r)}{k} \left(\frac{1}{3}\right)^k \left(\frac{2}{3}\right)^{d(r)-k}
    < \frac{1}{2}.
\end{align*}
The right hand side is bounded from above by
$1/3 + (r+1)\cdot r\cdot  d(r)^r \cdot \left(2/3\right)^{d(r)-r}$.
Thus, it is sufficient to find $d(r)$ such that
\begin{align*}
    (r+1)\cdot r\cdot  d(r)^r \cdot \left(\frac{3}{2}\right)^r\left(\frac{2}{3}\right)^{d(r)} &< 1/6\\
\end{align*}
which is possible since the left-hand side converges to zero as $d(r)\to\infty$.

\bibliographystyle{alpha} 

\end{document}